\documentclass[onecolumn,12pt]{IEEEtran} 

\usepackage{amsfonts,color,morefloats}
\usepackage{amssymb,amsthm, amsmath,latexsym}

\newtheorem{theorem}{Theorem}
\newtheorem{lemma}[theorem]{Lemma}

\newtheorem{corollary}[theorem]{Corollary}

\newtheorem{example}[theorem]{Example}

\newtheorem{question}[theorem]{Question}

\newcommand\myatop[2]{\genfrac{}{}{0pt}{}{#1}{#2}}

\newcommand{\ord}{{\mathrm{ord}}}

\newcommand{\lcm}{{\mathrm{lcm}}}
\newcommand{\tr}{{\mathrm{Tr}}}

\newcommand{\gf}{{\mathrm{GF}}}
\newcommand{\PG}{{\mathrm{PG}}}

\newcommand{\support}{{\mathrm{suppt}}}
\newcommand{\Aut}{{\mathrm{Aut}}}
\newcommand{\PAut}{{\mathrm{PAut}}}
\newcommand{\MAut}{{\mathrm{MAut}}}
\newcommand{\GAut}{{\mathrm{Aut}}}

\newcommand{\wt}{{\mathtt{wt}}}

\newcommand{\Z}{\mathbb{{Z}}}

\newcommand{\m}{\mathbb{M}}

\newcommand{\cP}{{\mathcal{P}}}
\newcommand{\cB}{{\mathcal{B}}}
\newcommand{\cA}{{\mathcal{A}}}

\newcommand{\C}{{\mathcal{C}}}
\newcommand{\M}{{\mathsf{M}}}

\newcommand{\Sim}{{\mathcal{S}}}

\newcommand{\cH}{{\mathcal{H}}}
\newcommand{\cR}{{\mathcal{R}}}

\newcommand{\ba}{{\mathbf{a}}}

\newcommand{\bc}{{\mathbf{c}}}

\newcommand{\bv}{{\mathbf{v}}}

\newcommand{\bzero}{{\mathbf{0}}}
\newcommand{\bone}{{\mathbf{1}}}

\newcommand{\bD}{{\mathbb{D}}}

\newcommand{\p}{{\mathcal P}}
\newcommand{\B}{{\mathcal B}}
\newcommand{\R}{{\mathcal R}}
\newcommand{\V}{{\mathcal V}}
\newcommand{\T}{{\mathcal T}}

\newcommand{\GA}{{\mathrm{GA}}}

\begin{document}

\title{The minimum linear locality of linear codes}

\author{Pan Tan,
        Cuiling Fan,
       Cunsheng Ding
        and~Zhengchun Zhou
\thanks{P. Tan, C. Fan, and Z. Zhou are with the School of Mathematics, Southwest Jiaotong University, Chengdu, 611756, China. E-mail: lanqingfeixue@my.swjtu.edu.cn, fcl@swjtu.edu.cn, zzc@swjtu.edu.cn.}
\thanks{C. Ding is with the Department of Computer Science
                           and Engineering, The Hong Kong University of Science and Technology,                                                 Clear Water Bay, Kowloon, Hong Kong, China (email: cding@ust.hk)}
}
\maketitle

\begin{abstract}
Locally recoverable codes (LRCs) were proposed for the recovery of data in distributed and cloud storage systems about nine
years ago. A lot of progress on the study of LRCs has been made by now. However, there is a lack of general theory on the minimum linear locality of linear codes. In addition, the minimum linear locality of many known families of linear codes is not studied in the literature. Motivated by these two facts, this paper develops some general theory about the minimum linear locality of linear codes, and investigates the minimum linear
locality of a number of families of linear codes, such as $q$-ary Hamming codes, $q$-ary Simplex codes, generalized Reed-Muller codes, ovoid codes, maximum arc codes, the extended hyperoval codes, and near MDS codes. Many classes of both distance-optimal and dimension-optimal LRCs are presented in this paper. The minimum linear locality of many families of linear codes are settled with the general theory developed in this paper.
\end{abstract}

\begin{IEEEkeywords}
Cyclic code, \and linear code, \and locally recoverable code, \and near MDS code, \and punctured code, \and shortened code.
\end{IEEEkeywords}

\section{Introduction of motivations, objectives and methodology}\label{sec-MOM}

Throughout this paper, let $n$ be a positive integer and let $q$ be a prime power.
An $[n, k, d]$ code $\C$ over $\gf(q)$ is a $k$-dimensional subspace of $\gf(q)^n$ with Hamming distance $d$.
We use $A_i(\C)$ or $A_i$, $\dim(\C)$, $d(\C)$ and $\C^\perp$ to denote the number of codewords of Hamming
weight $i$ in $\C$, the dimension of $\C$,  the minimum Hamming distance of $\C$, and the dual of $\C$.
The weight distribution and weight enumerator of $\C$ are defined by the sequence $(A_0, \ldots, A_n)$ and
the polynomial $\sum_{i=0}^n A_i z^i$, respectively. $\C$ is said to be a $t$-weight code if the sequence
$(A_1, \ldots, A_n)$ has Hamming weight $t$.

Denote $[n]=\{0,1,\ldots,n-1\}$ for each positive integer $n$. We usually index the coordinates of the codewords in $\C$ with the elements in $[n]$.
An $[n, k, d]$ code $\C$ over $\gf(q)$ is called an
$(n, k, d,q; r)$-LRC (locally recoverable code) if for each $i \in [n]$ there is a subset $R_i \subseteq [n] \setminus \{i\}$
of size $r$ and a function $f_i(x_1, \ldots, x_r)$ on $\gf(q)^r$ such that $c_i=f_i(\bc_{R_i})$ for each codeword
$\bc=(c_0, \ldots, c_{n-1})$ in $\C$, where $\bc_{R_i}$ is the projection of $\bc$ at $R_i$. The symbol $c_i$ is called
the $i$-th \emph{code symbol} and the set $R_i$ is called the \emph{repair set} or \emph{recovering set} of the
code symbol $c_i$. In this definition of LRCs, the degrees of the functions $f_i$ are not restricted. If we require that
each $f_i$ be a homogeneous function of degree $1$ in the definition above, then we say that $\C$ is $(n, k, d, q; r)$-LLRC
(linearly local recoverable code) and has linear locality $r$. By definition, a code $\C$ has locality $r$ if it has linear locality $r$.
But the converse may not be true. If a linear code has locality, it must have the minimum locality. The same is true for linear locality.  Regarding linear locality, we have the following questions.

\begin{question}\label{quest-bas1}
What linear codes have linear locality?
\end{question}

\begin{question}\label{quest-bas2}
If a linear code has linear locality, what is the minimum linear locality and how does one compute the minimum linear locality?
\end{question}

The first objective of this paper is to answer the two questions above. We will develop some general theory answering these
two questions.

For any $(n, k, d, q; r)$-LLRC, Gopalan et al. proved the following upper bound on the minimum distance $d$ \cite{GHSY12}:
\begin{equation}\label{eq-bound}
  d\leq n-k-\left\lceil\frac{k}r\right\rceil+2.
\end{equation}
The bound in \eqref{eq-bound} is similar to the Singleton bound, so we call it the Singleton-like bound. If an $(n,k,d,q;r)$-LLRC meets the Singleton-like bound with equality, then we say that the $(n,k,d,q;r)$-LLRC is distance-optimal ($d$-optimal for short).
If an $(n,k,d,q;r)$-LLRC meets the Singleton-like bound minus one with equality, then we say that the $(n,k,d,q;r)$-LLRC is almost distance-optimal (almost $d$-optimal for short).
 Note that the Singleton-like bound is not tight for codes over small finite
fields, as it is independent of the alphabet size $q$.

For any $(n, k, d, q; r)$-LLRC, Cadambe and Mazumdar developed the following bound on the dimension $k$ \cite{CM13}, \cite{CM15}:
\begin{equation}\label{cm-bound}
k\leq\min_{t\in \Z_+}[tr+k_{opt}^{(q)}(n-t(r+1),d)],
\end{equation}
where $\Z_+$ denotes the set of all positive integers, and
$k_{opt}^{(q)}(n,d)$ is the largest possible dimension of a linear code with length $n$, minimum distance $d$, and alphabet size $q$.  In this paper, we call the bound in (\ref{cm-bound}) the CM bound. An $(n,k,d,q;r)$-LLRC that attains the
CM bound with equality is said to be dimension-optimal ($k$-optimal for short).

While constructing new optimal LLRCs is an important task, searching for optimal LLRCs in the known families of linear
codes is also important.
The second objective of this paper is to study the minimum linear locality of certain known families of linear codes
and try to find out $d$-optimal or $k$-optimal LLRCs. We focus on non-binary linear codes, as the linear locality of
some families of binary
codes were studied in \cite{HYUS20}.
Our methodology is combinatorial and group-theoretical.

Locally recoverable codes were proposed for the recovery of data in distributed and cloud storage systems by
Gopalan, Huang, Simitci and Yikhanin \cite{GHSY12}. In the past nine years, a lot of progress on the study
of locally recoverable codes has been made. The reader is referred to
\cite{CCFT19,CFMST20,CH2019,CM13,CM15,GHSY12,HYUS20,JKZ20,LMT20,LXY19,Micheli20,TB14,TZSP20,WZL15,XY19}
and the references therein for information. Despite of the good progress made by now, the two questions raised above
look still open, and there is a lack of general theory on the minimum linear locality of linear codes. In addition, the minimum linear locality of many known families of linear codes are not studied in the literature. Motivated by these two facts, this paper develops some general theory about the minimum linear locality of linear codes, and investigates the minimum linear
locality of a number of families of linear codes, such as $q$-ary Hamming codes, $q$-ary Simplex codes, generalized Reed-Muller codes, ovoid codes, maximum arc codes, the extended hyperoval codes, and near MDS codes. Many classes of both distance-optimal and dimension-optimal LRCs are presented in this paper. The minimum linear locality of many families of linear codes are settled with the general theory developed in this paper.

The rest of this paper is organized as follows. Section \ref{sec-prelim} introduces some basics of cyclic and linear codes
and the support designs of linear codes. Section \ref{sec-generaltheory} develops some general theory about the minimum
linear locality of nontrivial linear codes. Section  \ref{sec-localityofsome} investigates the minimum linear locality of several
families of famous linear codes, including the $q$-ary Hamming codes, the $q$-ary Simplex codes, the generalized Reed-Muller codes, the ovoid codes, and the maximum arc codes. Section  \ref{sec-localityofNMDS} studies the minimum linear locality of near MDS codes. Section \ref{sec-final}
summarizes the contributions of this paper and makes some concluding remarks.

\section{Preliminaries}\label{sec-prelim}

To study the minimum linear locality of linear codes, we need to introduce some basics of linear codes and cyclic codes.
Since our methodology is combinatorial and group-theoretic, we have to introduce the automorphism groups of linear codes
and combinatorial $t$-designs. The purpose of this section is to introduce these stuffs very briefly.

\subsection{BCH and cyclic codes}

An $[n,k, d]$ code $\C$ over $\gf(q)$ is said to be {\em cyclic} if for each
$(c_0,c_1, c_2, \cdots, c_{n-1}) \in \C$ we have $(c_{n-1}, c_0, c_1, c_2,\cdots, c_{n-2})
\in \C$.
We identify a vector $(c_0,c_1,c_2, \cdots, c_{n-1}) \in \gf(q)^n$
with the polynomial $c(x)=\sum_{i=0}^{n-1} c_ix^i \in \gf(q)[x]/(x^n-1).$
Then a code $\C$ of length $n$ over $\gf(q)$ corresponds to a subset $\C(x)$ of the quotient ring
$\gf(q)[x]/(x^n-1)$, where
$$
\C(x):=\left\{\sum_{i=0}^{n-1} c_ix^i : c=(c_0,c_1, \cdots, c_{n-1}) \in \C\right\}.
$$
It is easy to see that $\C$ is cyclic if and only if the set $\C(x)$ is an ideal of the ring $\gf(q)[x]/(x^n-1)$.

It is well-known that each ideal of $\gf(q)[x]/(x^n-1)$ is principal. Let $\C=\langle g(x) \rangle$ be a
cyclic code, where $g(x)$ is monic and has the smallest degree among all the
generators of $\C$. Then $g(x)$ is unique and called the {\em generator polynomial,}
and $h(x)=(x^n-1)/g(x)$ is referred to as the {\em check polynomial} of $\C$.

Let $n$ be a positive integer with $\gcd(n, q)=1$, and let $m=\ord_{n}(q)$ be the order of $q$ modulo $n$.
Let $\alpha$ be a generator of the multiplicative group $\gf(q^m)^*$. Put $\beta=\alpha^{(q^m-1)/n}$.
Then $\beta$ is a primitive $n$-th root of unity in $\gf(q^m)$. The minimal polynomial $\M_{\beta^s}(x)$
of $\beta^s$ over $\gf(q)$ is defined to be the monic polynomial of the smallest degree over $\gf(q)$
with $\beta^s$ as a root and is given by
\begin{eqnarray}
\M_{\beta^s}(x)=\prod_{i \in C_s} (x-\beta^i) \in \gf(q)[x],
\end{eqnarray}
where $C_s=\{sq^i \bmod n: 0 \leq i \leq m-1\}$ and is called the $q$-cyclotomic class containing $s$.

Let $\delta$ be an integer with $2 \leq \delta \leq n$ and let $h$ be an integer.
A \emph{BCH code\index{BCH codes}} over $\gf(q)$
with length $n$ and \emph{designed distance} $\delta$, denoted by $\C_{(q,n,\delta,h)}$, is a cyclic code with
generator polynomial
\begin{eqnarray}\label{eqn-BCHdefiningSet}
g_{(q,n,\delta,h)}=\lcm(\M_{\beta^h}(x), \M_{\beta^{h+1}}(x), \cdots, \M_{\beta^{h+\delta-2}}(x)).
\end{eqnarray}
If $h=1$, the code $\C_{(q,n,\delta,h)}$ with the generator polynomial in (\ref{eqn-BCHdefiningSet}) is referred to as
a \emph{narrow-sense\index{narrow sense}} BCH code. If $n=q^m-1$, then $\C_{(q,n,\delta,h)}$ is called a \emph{primitive\index{primitive BCH}} BCH code.

BCH codes form a subfamily of cyclic codes with very attractive properties and applications. In many cases BCH codes are the
best linear codes. For instance, among all binary cyclic codes of odd lengths $n$ with $n \leq 125$ the best cyclic code is always a BCH code
except for two special cases \cite{Dingbook15}. Reed-Solomon codes are also BCH codes and have been widely used in data storage systems, communication devices and consumer electronics.

\subsection{Several basic operations on linear codes}

Let $\C$ be a linear code with length $n$. Below we introduce several basic operations on $\C$ for obtaining new codes.
Let $T$ be a set of coordinate positions in $\C$ and let $\C^T$ denote the code obtained by
puncturing $\C$ in all the coordinate positions in $T$, which has length $n-|T|$.
Let $\C(T)$ denote the set of codewords whose coordinates are $\bzero$ on $T$, which
is a subcode of $\C$. After puncturing $\C(T)$ on $T$, we get a linear
code over $\gf(q)$ with length $n-|T|$, which is called a \emph{shortened code}\index{shortened code} of $\C$, and is denoted by $\C_T$.  It is known that $(\C^\perp))_T=(\C^T)^\perp$ and $(\C^\perp)^T=(\C_T)^\perp$.
The \emph{extended code\index{entended code}}
$\overline{\C}$ of $\C$ is defined by
$$
\overline{\C}=\left\{(c_0,c_1,\ldots, c_{n-1}, c_{n}) \in \gf(q)^{n+1}: (c_0,c_1,\ldots, c_{n-1}) \in \C \mbox{ with }
\sum_{i=0}^{n} c_i =0\right\}.
$$
Let $G$ be a generator matrix of $\C$. Suppose that
the all-$1$ vector is not a codeword of $\C$. Then the \emph{augmented code}\index{augmented code}, denoted
by $\widetilde{\C}$, of $\C$ is the linear code over $\gf(q)$ with generator matrix
\begin{eqnarray*}
\left[
\begin{array}{c}
G \\
\bone
\end{array}
\right],
\end{eqnarray*}
where $\bone$ denotes the all-$1$ vector. The augmented code has length $n$ and dimension
$k+1$.

\subsection{Automorphism groups of linear codes}

The \emph{permutation automorphism group\index{permutation automorphism group of codes}} of $\C$,
denoted by $\PAut(\C)$, is the set of coordinate permutations that map a code $\C$ to itself.
A square matrix having exactly one nonzero element of $\gf(q)$  in each row and column is
called a \emph{monomial matrix\index{monomial matrix}} over $\gf(q)$. A monomial matrix $M$ can be written  in
the form $DP$ or the form $PD_1$, where $P$ is a permutation matrix and $D$ and $D_1$ are diagonal matrices.
The \emph{monomial automorphism group\index{monomial automorphism group}} of $\C$ refers to the
set of monomial matrices that map $\C$ to itself. Obviously,  $\PAut(\C) \subseteq \MAut(\C)$.
The \textit{automorphism group}\index{automorphism group} of $\C$, denoted by $\GAut(\C)$, is the set
of maps of the form $M\gamma$ that map $\C$ to itself,
where $M$ is a monomial matrix and $\gamma$ is a field automorphism. If $q=2$,
$\PAut(\C)$,  $\MAut(\C)$ and $\GAut(\C)$ are the same. If $q$ is a prime, $\MAut(\C)$ and
$\GAut(\C)$ are identical. In general, we have
$$
\PAut(\C) \subseteq \MAut(\C) \subseteq \GAut(\C).
$$

By the definitions above, each element in $\GAut(\C)$ is of the form $DP\gamma$, where $D$ is a diagonal matrix,
$P$ is a permutation matrix, and $\gamma$ is an automorphism of $\gf(q)$.
The automorphism group $\GAut(\C)$ is said to be \emph{$t$-transitive}\index{$t$-transitive} if for every pair of $t$-element ordered
sets of coordinates, there is an element $DP\gamma$ of the automorphism group $\GAut(\C)$ such that its
permutation part $P$ sends the first set to the second set.
The automorphism group $\GAut(\C)$ is said to be \emph{$t$-homogeneous} if for every pair of $t$-element sets of coordinates, there is an element $DP\gamma$ of the automorphism group $\GAut(\C)$ such that its
permutation part $P$ sends the first set to the second set. If the automorphism group $\GAut(\C)$ is $t$-transitive,
then it must be $t$-homogeneous. But the converse may not be true. For simplicity, we say that  $\GAut(\C)$ is transitive
(respectively, homogeneous) if $\GAut(\C)$ is $1$-transitive
(respectively, $1$-homogeneous).

\subsection{The support designs of linear codes}

Let $\cP$ be a set of $n$ elements, and let $\cB$ be a set of $k$-subsets of $\cP$, where $1 \leq k \leq n$.
Let $t$ be an integer with $1 \leq t \leq k$. The pair $\bD := (\cP, \cB)$ is an
incidence structure, where the incidence relation is the set membership. The incidence structure
$\bD = (\cP, \cB)$ is called a $t$-$(n, k, \lambda)$ {\em design\index{design}}, or simply {\em $t$-design\index{$t$-design}}, if each $t$-subset of $\cP$ is contained in $\lambda$ elements of
$\cB$. The elements of $\cP$ are referred to as points, and those of $\cB$ are called blocks.
If $\cB$ does not contain
any repeated blocks, then the $t$-design is called {\em simple.\index{simple}}
This paper considers only simple $t$-designs.
A $t$-$(n,k,\lambda)$ design is referred to as a
{\em Steiner system\index{Steiner system}} if $t \geq 2$ and $\lambda=1$,
and is denoted by $S(t,k, n)$.

There are different ways to construct $t$-designs.
A coding-theoretic construction of $t$-designs is briefly described below.
Let $\C$ be a linear code over $\gf(q)$ with length $n$.
For each $k$ with $A_k \neq 0$,  let $\cB_k(\C)$ denote
the set of the supports of all codewords with Hamming weight $k$ in $\C$, where the coordinates of a codeword
are indexed by $(0, 1, \ldots, n-1)$. Let $\cP(\C)=[n]$.  The incidence structure $(\cP(\C), \cB_k(\C))$
may be a $t$-$(n, k, \lambda)$ design for some positive integer $\lambda$, which is called a
\emph{support design} of the code $\C$, and is denoted by $\bD_k(\C)$. In such a case, we say that the codewords of weight $k$
in $\C$ support or hold a $t$-$(n, k, \lambda)$ design, and for simplicity, we say that $\C$ supports or holds a $t$-$(n, k, \lambda)$ design.

The following theorem, called the Assmus-Mattson Theorem,
demonstrates that the pair $(\cP(\C), \cB_k(\C))$ defined by
a linear code $\C$ is a $t$-design under certain conditions \cite{AM69}.

\begin{theorem}\label{thm-designAMtheorem}
Let $\C$ be an $[n,k,d]$ code over $\gf(q)$. Let $d^\perp$ denote the minimum distance of $\C^\perp$.
Let $w$ be the largest integer satisfying $w \leq n$ and
$$
w-\left\lfloor  \frac{w+q-2}{q-1} \right\rfloor <d.
$$
Define $w^\perp$ analogously using $d^\perp$. Let $(A_i)_{i=0}^n$ and $(A_i^\perp)_{i=0}^n$ denote
the weight distribution of $\C$ and $\C^\perp$, respectively. Fix a positive integer $t$ with $t<d$, and
let $s$ be the number of $i$ with $A_i^\perp \neq 0$ for $1 \leq i \leq n-t$. Suppose $s \leq d-t$. Then
\begin{itemize}
\item all the codewords of weight $i$ in $\C$ support a $t$-design provided $A_i \neq 0$ and $d \leq i \leq w$, and
\item all the codewords of weight $i$ in $\C^\perp$ support a $t$-design provided $A_i^\perp \neq 0$ and
         $d^\perp \leq i \leq \min\{n-t, w^\perp\}$.
\end{itemize}
\end{theorem}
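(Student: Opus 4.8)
The plan is to reduce the assertion to a statement about the \emph{uniformity} of the weight distributions of shortened codes, and then to establish that uniformity through the MacWilliams identities together with a nonsingular linear system built from the Pless power moments. First I would recall the standard characterization of designs in terms of position sets: the blocks $\cB_i(\C)$ form a $t$-design exactly when, for every $t$-subset $T$ of $[n]$, the number of weight-$i$ codewords of $\C$ whose support contains $T$ is a constant independent of $T$. By inclusion–exclusion over the subsets $S\subseteq T$, this is equivalent to requiring that for each $j$ with $0\le j\le t$ the quantity $A_i(\C_S)$ — the number of weight-$i$ codewords of $\C$ vanishing on $S$, i.e. the weight distribution of the shortened code $\C_S$ — depends only on $j=|S|$ and not on the particular positions. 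Thus the theorem follows once I show that, in the stated ranges, the weight distributions of $\C_S$ and of $(\C^\perp)_S$ are invariant under the choice of $S$ for all $|S|\le t$. The passage from ``constant number of codewords'' to ``constant number of blocks'' is where the cutoff $w$ (respectively $w^\perp$) enters: for $i\le w$ every support of a weight-$i$ codeword is carried by exactly the same number of codewords, so the two counts are proportional, and I would isolate this proportionality as a short lemma with $w$ as its exact threshold.

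Next I would invoke the duality identities recorded in the preliminaries, $(\C^\perp)_S=(\C^S)^\perp$ and $(\C^\perp)^S=(\C_S)^\perp$, to move the problem onto a code with few weights. The key observation is that the shortened dual $(\C^\perp)_S$ has at most $s$ nonzero weights: its nonzero codewords are precisely the codewords of $\C^\perp$ that vanish on $S$, and (in the decisive case $|S|=t$) these have weight at most $n-t$, hence lie among the $s$ weights counted in the hypothesis. Dually, the punctured code $\C^S$ has full dimension $k$ and minimum distance at least $d-|S|\ge d-t$, since $t<d$ guarantees that deleting $t$ coordinates neither lowers the dimension nor collapses the code. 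So I am placed in the situation of a code $\C^S$ whose low-weight coefficients are forced to vanish and whose dual $(\C^\perp)_S$ supports only a bounded number of nonzero weights.

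The engine is then a uniqueness lemma: if a code of length $n-|S|$ has a dual with at most $s$ nonzero weights, and its own weight coefficients of all positive orders below $d-|S|$ vanish, then the MacWilliams transform expresses the $s$ unknown dual weight-class sizes through $s$ linear equations whose coefficient matrix is of Krawtchouk/Vandermonde type and therefore nonsingular; consequently the whole weight distribution is determined by $n$, $|S|$, $k$ and $q$ alone. Applying this to the pair $(\C^S,(\C^\perp)_S)$ and consuming the hypothesis $s\le d-t$ shows that the weight distribution of $(\C^\perp)_S$ is the same for every $S$ of a given size, which yields the designs held by $\C^\perp$ in the range $d^\perp\le i\le \min\{n-t,w^\perp\}$. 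Running the symmetric argument — now feeding in the few weights of $\C^\perp$ to pin down the shortened codes of $\C$ — gives the designs held by $\C$ for $d\le i\le w$.

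I expect the main obstacle to be the bookkeeping of this linear system rather than any single conceptual leap: one must verify that the vanishing low-weight coefficients really supply at least $s$ independent equations (this is exactly where $s\le d-t$ is spent, the loss of $t$ coming from the shortening or puncturing), confirm the nonsingularity of the resulting Krawtchouk matrix \emph{uniformly} in $S$, and track the weight-range endpoints so that $n-t$, $w$ and $w^\perp$ appear precisely where the counting of weights and of blocks stays valid. Handling the weights above $n-t$ for the smaller sets $S$ with $|S|<t$, and the scalar-multiple structure of supports over $\gf(q)$ — trivial when $q=2$, where $w=n$ — is the delicate part for general $q$.
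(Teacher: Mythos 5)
The paper itself does not prove this statement: it is the classical Assmus--Mattson theorem, quoted verbatim with a citation to [AM69], so your attempt can only be judged against the standard proof (Assmus--Mattson 1969; Huffman--Pless, Ch.~8). Your outline does reproduce that proof's architecture, and most of it is sound: the shortening/puncturing duality $(\C^\perp)_S=(\C^S)^\perp$, the observation that $(\C^\perp)_T$ has at most $s$ nonzero weights while $\C^T$ keeps dimension $k$ and minimum distance at least $d-t\geq s$, the Krawtchouk/Vandermonde uniqueness lemma forcing the weight distribution of $(\C^\perp)_T$ to be independent of $T$, inclusion--exclusion over subsets of $T$, and the $w$-threshold lemma identifying codewords that share a support with the $q-1$ scalar multiples of a single codeword. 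These are exactly the right ingredients.

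There are, however, two gaps, one of which is genuine. The minor one you flag yourself: your first-paragraph reduction consumes constancy of $A_i(\C_S)$ and $A_i((\C^\perp)_S)$ at \emph{every} level $|S|=j\leq t$, but your engine only delivers it at $|S|=t$, since only there is the number of nonzero weights of $(\C^\perp)_S$ bounded by $s$. The fix, which you never carry out, is a one-line count: for $|S|=j<t$ the shortened dual has at most $s+(t-j)$ nonzero weights (at most $t-j$ integers lie in the interval from $n-t+1$ to $n-j$), while $d(\C^S)\geq d-j\geq s+(t-j)$, so the same lemma applies level by level. The genuine gap is the primal half of the theorem: it does \emph{not} follow by ``running the symmetric argument.'' The hypothesis $s\leq d-t$ is not symmetric, and the uniqueness lemma cannot be applied to the pair $(\C_S,(\C^\perp)^S)$, because the punctured dual $(\C^\perp)^S$ admits no useful bound on its number of nonzero weights: each weight $v$ of $\C^\perp$ can produce punctured weights anywhere between $v-t$ and $v$. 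Concretely, for the binary Golay $[23,12,7]$ code with $t=4$ (where $s=3=d-t$), the dual weights $8,12,16$ can puncture to any value in $\{4,\ldots,16\}$, far more unknowns than the $d(\C_S)\geq 7$ available moment equations. To obtain $A_i(\C_S)$ one must instead push information from the dual side only: from the level-by-level constancy of the shortened dual distributions, recover by M\"obius inversion the refined counts $\#\{c\in\C^\perp:\wt(c)=v,\ |\support(c)\cap S|=u\}$ (taking care of codewords of $\C^\perp$ supported inside $S$, since $t<d(\C^\perp)$ is not assumed), conclude that the weight distribution of $(\C^\perp)^S$ is independent of $S$, and only then apply MacWilliams to get $A_i(\C_S)$; alternatively, run Assmus and Mattson's induction on the weight $i$ using the constancy of the distributions of $\C^T$. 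Either way this is an asymmetric step with real content, and without it the first bullet of the theorem is unproved.
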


The Assmus-Mattson Theorem above is a useful tool in constructing $t$-designs from linear codes
(see, for example, \cite{Dingbook18}),  but does not
characterize all linear codes supporting $t$-designs. The reader is referred to \cite{TDX19} for a generalized
Assmus-Mattson theorem.

Using the automorphism group of a linear code $\C$, the following theorem gives another sufficient condition
for the code $\C$ to hold $t$-designs \cite[p. 308]{HP03}.

\begin{theorem}\label{thm-designCodeAutm}
Let $\C$ be a linear code of length $n$ over $\gf(q)$ such that $\GAut(\C)$ is $t$-transitive
or $t$-homogeneous. Then the codewords of any weight $i \geq t$ of $\C$ hold a $t$-design.
\end{theorem}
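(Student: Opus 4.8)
The plan is to track how the automorphism group moves the supports of weight-$i$ codewords and then to invoke a standard orbit-counting argument on $t$-subsets of coordinates. Since $t$-transitivity implies $t$-homogeneity, as recalled just before the statement, it suffices to treat the $t$-homogeneous case.

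First I would record the key observation about supports. Every element of $\GAut(\C)$ has the form $DP\gamma$, with $D$ a diagonal matrix whose diagonal entries are nonzero, $P$ a permutation matrix, and $\gamma \in \Aut(\gf(q))$. Each of these three maps fixes the zero symbol and sends nonzero symbols to nonzero symbols, so applying $DP\gamma$ to a codeword $\bc$ preserves its Hamming weight and, more precisely, transforms its support by the permutation part alone: $\support(DP\gamma\,\bc) = P(\support(\bc))$. Consequently, if $\bc$ has weight $i$ then $DP\gamma\,\bc$ is again a weight-$i$ codeword, and the permutation $P$ carries the block $\support(\bc) \in \cB_i(\C)$ to another block $\support(DP\gamma\,\bc) \in \cB_i(\C)$.

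Next I would let $G \le \Sym([n])$ be the group of permutation parts $P$ arising from the elements $DP\gamma \in \GAut(\C)$ (the assignment $DP\gamma \mapsto P$ is a homomorphism). By the previous paragraph, $G$ acts on the point set $\cP(\C)=[n]$ and preserves the block collection $\cB_i(\C)$ setwise. The $t$-homogeneity of $\GAut(\C)$ says exactly that $G$ acts transitively on the unordered $t$-subsets of $[n]$. Now I would fix two $t$-subsets $S, S'$ of $[n]$ and choose $g \in G$ with $g(S)=S'$; since $g$ permutes $\cB_i(\C)$, the map $B \mapsto g(B)$ is a bijection from the blocks containing $S$ onto the blocks containing $S'$, whence the quantity $\lambda_S := |\{B \in \cB_i(\C) : S \subseteq B\}|$ is the same for every $t$-subset $S$. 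Denoting this common value by $\lambda$, the incidence structure $(\cP(\C), \cB_i(\C))$ is a $t$-$(n,i,\lambda)$ design, and it is simple because $\cB_i(\C)$ is by definition a set of supports. This settles every weight $i \ge t$ with $A_i \neq 0$, the case $A_i=0$ being vacuous.

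The argument is short, and the only step needing genuine care is the support observation of the second paragraph: one must verify that it is truly the permutation part $P$, and neither the diagonal scaling $D$ nor the field automorphism $\gamma$, that governs how supports move, so that the group action descends to an action on $\cB_i(\C)$ by honest permutations of blocks. The subsequent counting is then routine. I would also flag the mild but conceptually correct point that $t$-homogeneity, rather than the stronger $t$-transitivity, is precisely the hypothesis matched to the design-theoretic conclusion, since a $t$-design constrains only unordered $t$-subsets of points.
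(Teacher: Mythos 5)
Your proposal is correct, and it is the standard argument: the paper itself gives no proof of this theorem, citing it directly from Huffman and Pless \cite{HP03}, and your orbit-counting proof (supports move by the permutation part $P$ alone, so the permutation-part image of $\GAut(\C)$ acts on $\cB_i(\C)$, and $t$-homogeneity forces the number of blocks through any $t$-set to be constant) is precisely the textbook route. Your flagged care point---that $D$ and $\gamma$ fix the zero symbol and permute nonzero symbols, so only $P$ governs supports---is indeed the one step that needs checking, and you have it right.
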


\section{General theory about the minimum linear locality of linear codes}\label{sec-generaltheory}

The objective of this section is to develop some general theory about the linear locality of linear codes over finite fields.
In particular, we will answer Questions \ref{quest-bas1} and  \ref{quest-bas2} raised in Section \ref{sec-MOM}. The zero
code $\{\bzero \}$ and the code $\gf(q)^n$ are not interesting in both theory and practice, and are called
\emph{trivial codes}. If the dual distance $d(\C^\perp)$ of $\C$ is $1$, then $\C$ has a zero coordinate, which can be
punctured away without affecting the error-correcting capability. Hence, in theory and practice codes with $d(\C) \leq
1$ or $d(\C^\perp) \leq 1$ are not interesting, and are called trivial codes. In this paper, we consider the minimum linear locality
only for nontrivial linear codes, i.e., linear codes with  $d(\C) \geq 2$ and $d(\C^\perp) \geq 2$.  Recall that
we use the elements
in $[n]$ to index the coordinate positions in a linear code of length $n$.

\subsection{Some general theory the linear locality of nontrivial linear codes}

The following lemma directly follows from the definition of linear locality of linear codes, and is well known in the literature.
We will need it later.

\begin{lemma}\label{lem-localitylem}
Let $\C$ be a nontrivial linear code of length $n$. Then $\C$ has linear locality $r$ if and only if for each $i \in [n]$ the dual code $\C^\perp$ has a codeword $\bc$ of Hamming weight at most $r+1$ such that $i \in \support(\bc)$.
\end{lemma}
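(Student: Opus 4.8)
The plan is to establish the equivalence by exhibiting a concrete correspondence between degree-one repair functions at a coordinate $i$ and codewords of $\C^\perp$ whose support contains $i$. The underlying observation is that a homogeneous degree-one relation $c_i = \sum_{j \in R_i} \lambda_j c_j$ holding for every $\bc \in \C$ is precisely the statement that a particular vector lies in $\C^\perp$. Before turning to the two directions, I would record that since $\C$ is nontrivial we have $d(\C) \geq 2$, which forces $\C^\perp$ to have no zero coordinate (otherwise some $\be_i \in (\C^\perp)^\perp = \C$ would be a weight-one codeword); hence for every $i$ there is at least one codeword of $\C^\perp$ whose support contains $i$, and the meaningful range of the parameter is $1 \leq r \leq n-1$, since a repair set $R_i \subseteq [n] \setminus \{i\}$ has size at most $n-1$.

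For the forward direction, I would fix $i \in [n]$ and take the repair set $R_i$ of size $r$ together with the coefficients $\lambda_j$ guaranteed by linear locality $r$, so that $c_i = \sum_{j \in R_i} \lambda_j c_j$ for all $\bc \in \C$. Defining $\bv \in \gf(q)^n$ by $v_i = 1$, $v_j = -\lambda_j$ for $j \in R_i$, and $v_k = 0$ otherwise, the inner product $\sum_k v_k c_k$ equals $c_i - \sum_{j \in R_i} \lambda_j c_j = 0$ for every $\bc \in \C$, so $\bv \in \C^\perp$. Since $v_i = 1 \neq 0$ we have $i \in \support(\bv)$, and $\wt(\bv) \leq 1 + |R_i| = r+1$, which is exactly the required dual codeword.

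For the backward direction, I would fix $i$ and a codeword $\bv \in \C^\perp$ with $\wt(\bv) \leq r+1$ and $i \in \support(\bv)$, so $v_i \neq 0$. Expanding $\sum_k v_k c_k = 0$ and solving for $c_i$ gives $c_i = \sum_{k \in \support(\bv) \setminus \{i\}} (-v_i^{-1} v_k)\, c_k$ for every $\bc \in \C$. The index set $S = \support(\bv) \setminus \{i\}$ has size $\wt(\bv) - 1 \leq r$; if $|S| < r$ I would pad $S$ to a set $R_i \subseteq [n] \setminus \{i\}$ of size exactly $r$ by adjoining arbitrary further coordinates with coefficient $0$, which is possible because $r \leq n-1$. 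The degree-one function $f_i(\bc_{R_i}) = \sum_{k \in R_i} \mu_k c_k$, with $\mu_k = -v_i^{-1} v_k$ on $S$ and $\mu_k = 0$ on the padded coordinates, then reproduces $c_i$ for every codeword, so $\C$ has linear locality $r$.

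The argument is essentially a bookkeeping translation between linear dependencies on codewords and membership in $\C^\perp$, so no step presents a genuine mathematical difficulty. The only point requiring care is the padding in the backward direction, where one must produce a repair set of size exactly $r$ rather than merely at most $r$, which is why the inequality $\wt(\bv) \leq r+1$ (as opposed to equality) is the right formulation and why the harmless hypothesis $r \leq n-1$, guaranteed by the nontriviality of $\C$, is implicitly in force.
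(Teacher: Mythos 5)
Your proof is correct and is exactly the argument the paper has in mind: the paper states this lemma without proof, remarking only that it ``directly follows from the definition of linear locality,'' and your two directions (a degree-one repair relation at coordinate $i$ is the same thing as a dual codeword of weight at most $r+1$ whose support contains $i$) are the standard translation being alluded to. The care you take with the padding step and with nontriviality guaranteeing $r \leq n-1$ is a sound filling-in of details the paper leaves implicit.
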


\begin{theorem}\label{thm-generalLoc}
Let $\C$ be a nontrivial linear code of lenth $n$. Then there exists a positive integer $w$ with $2 \leq w \leq n$ such that $A_w(\C^\perp)>0$ and
\begin{eqnarray}\label{eqn-condloc}
\bigcup_{j=1}^w \bigcup_{S \in \cB_j(\C^\perp)} S \supseteq [n].
\end{eqnarray}

Let $w$ be the smallest integer such that $A_w(\C^\perp)>0$ and (\ref{eqn-condloc}) holds. Then
$\C$ has minimum linear locality $w-1$.
\end{theorem}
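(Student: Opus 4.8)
The plan is to route the entire argument through Lemma~\ref{lem-localitylem}, which already recasts linear locality as a covering condition on the low-weight codewords of $\C^\perp$. For an integer $m \geq 1$, let me abbreviate by $\mathrm{Cov}(m)$ the statement that $\bigcup_{j=1}^m \bigcup_{S \in \cB_j(\C^\perp)} S \supseteq [n]$; equivalently, every coordinate $i \in [n]$ lies in the support of some codeword of $\C^\perp$ of weight at most $m$. With this notation, Lemma~\ref{lem-localitylem} reads: $\C$ has linear locality $r$ if and only if $\mathrm{Cov}(r+1)$ holds. Hence the minimum linear locality of $\C$ equals $m_0-1$, where $m_0$ is the \emph{smallest} integer for which $\mathrm{Cov}(m_0)$ is true. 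The theorem will follow once I show that this $m_0$ coincides with the integer $w$ singled out in the statement, namely the smallest $w$ with $A_w(\C^\perp)>0$ for which \eqref{eqn-condloc} (that is, $\mathrm{Cov}(w)$) holds.

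First I would establish existence together with the range $2 \le w \le n$. Since $\C$ is nontrivial we have $d(\C) \ge 2$, so $\C$ contains no weight-one codeword; equivalently no standard basis vector $\be_i$ lies in $\C=(\C^\perp)^\perp$, which means $\C^\perp$ has no identically-zero coordinate. Therefore every $i \in [n]$ appears in the support of some nonzero codeword of $\C^\perp$. Taking $w$ to be the largest weight attained by a nonzero codeword of $\C^\perp$, the union in \eqref{eqn-condloc} ranges over the supports of all nonzero codewords and so equals $[n]$; thus $A_w(\C^\perp)>0$ and $\mathrm{Cov}(w)$ both hold. Because $\C^\perp$ is likewise nontrivial ($d(\C^\perp)\ge 2$), this $w$ satisfies $2 \le d(\C^\perp) \le w \le n$, which gives the asserted range and shows that the set of admissible $w$ is nonempty.

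The heart of the matter is the identification $w=m_0$ for the minimal admissible $w$, and this is the step I expect to require the most care. Monotonicity is immediate: enlarging $m$ only enlarges the union, so $\mathrm{Cov}(m)$ implies $\mathrm{Cov}(m')$ for all $m' \ge m$. Let $w$ be the smallest integer with $A_w(\C^\perp)>0$ and $\mathrm{Cov}(w)$. Since $\mathrm{Cov}(w)$ holds and $m_0$ is the least level at which coverage is achieved, $w \ge m_0$. For the reverse inequality I would argue that $m_0$ is itself admissible: note $m_0 \ge d(\C^\perp) \ge 2$, because the union is empty for every $m < d(\C^\perp)$, so $\mathrm{Cov}(m_0-1)$ is a well-defined statement and, by minimality of $m_0$, it fails. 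Thus the union at level $m_0$ strictly contains the union at level $m_0-1$, which forces $\cB_{m_0}(\C^\perp) \ne \emptyset$, i.e.\ $A_{m_0}(\C^\perp)>0$. Hence $m_0$ satisfies both defining conditions of $w$, giving $w \le m_0$. Combining the two inequalities yields $w=m_0$, and therefore the minimum linear locality is $m_0-1=w-1$, as claimed.
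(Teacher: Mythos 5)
Your proposal is correct and follows essentially the same route as the paper: existence is shown via the observation that a nontrivial code's dual has no identically-zero coordinate (else a unit vector would lie in $\C$), and the locality value is obtained from the covering criterion of Lemma~\ref{lem-localitylem} relating linear recovery to low-weight codewords of $\C^\perp$. Your argument that the minimal covering level $m_0$ automatically satisfies $A_{m_0}(\C^\perp)>0$ (because the union strictly grows at level $m_0$) is in fact slightly more careful than the paper's proof, which leaves this identification between ``smallest covering level'' and ``smallest admissible $w$'' implicit.
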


\begin{proof}
Suppose that there is an integer $i$ in $[n]$ such that
\begin{eqnarray*}
i \not\in \bigcup_{j=1}^n \bigcup_{S \in \cB_j(\C^\perp)} S.
\end{eqnarray*}
Then $c_i^\perp=0$ for all codewords $\bc^\perp=(c_0^\perp, \cdots, c_{n-1}^\perp)$ in $\C^\perp$.
Consequently, the vector $\bc=(0,\cdots, 0, 1, 0, \cdots, 0)$ of length $n$, which has only one nonzero coordinate $1$ in coordinate position $i$,
is a codeword in $\C$. This is contrary to the assumption that $d \geq 2$.

Let $w$ be the smallest integer such that $A_w(\C^\perp)>0$ and (\ref{eqn-condloc}) holds.
Then every integer $i \in [n]$ is contained in $\support(\bc^\perp)$, where $\bc^\perp$
is some codeword with weight at most $w$ in $\C^\perp$.  Then the code symbol $c_i$ in $\C$
can be recovered by a linear combination of the coordinates in the positions in $\support(\bc^\perp)\setminus \{i\}$.
Then $\C$ has linear locality $w-1$.

If the code symbol $c_i$ in every codeword $\bc$ in $\C$ can be recovered linearly by
$$
c_i=u_1c_{i_1}+ \cdots + u_hc_{i_h}
$$
where $u_i \neq 0$ and $R_i=\{i_1, \ldots, i_h\}$ is the corresponding recovering set of the code symbol $c_i$.
Then $\C^\perp$ has a codeword with weight $h+1$. Hence, $w-1$ is the minimum
linear locality.
\end{proof}

Theorem \ref{thm-generalLoc} means that every nontrivial linear code has a minimum linear locality,
and tells us how to calculate the minimum linear locality. In practice, it is also necessary and important
to find a recovering set $R_i$ for each code symbol $c_i$. But we will not deal with this problem in
this paper.

\subsection{Linear codes $\C$ with minimum linear locality $d(\C^\perp)-1$}

It follows from Lemma \ref{lem-localitylem} that the minimum linear locality of a nontrivial linear code $\C$ is at least
$d(\C^\perp)-1$. Hence, nontrivial linear codes $\C$ with minimum linear locality $d(\C^\perp)-1$ would be very
interesting in both theory and practice. In this subsection, we develop some general results for such special
codes. It will be seen later that there are indeed nontrivial linear codes $\C$  with minimum linear locality
more than $d(\C^\perp)-1$.

\begin{corollary}\label{cor-newn0}
Let $\C$ be a nontrivial linear code of length $n$ and put $d^\perp=d(\C^\perp)$.
Then $\C$ has minimum linear locality $d^\perp-1$ if and only if
\begin{eqnarray}\label{eqn-condj17}
\bigcup_{S \in \cB_{d^\perp}(\C^\perp)} S =[n].
\end{eqnarray}
\end{corollary}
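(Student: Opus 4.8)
The plan is to derive Corollary~\ref{cor-newn0} directly from Theorem~\ref{thm-generalLoc}, which has already established that the minimum linear locality equals $w-1$, where $w$ is the \emph{smallest} integer with $A_w(\C^\perp)>0$ satisfying the covering condition~\eqref{eqn-condloc}. Since $d^\perp = d(\C^\perp)$ is by definition the smallest weight with $A_w(\C^\perp)>0$ (recall $d^\perp \geq 2$ as $\C$ is nontrivial), the statement reduces to showing that the minimum linear locality equals $d^\perp-1$ precisely when the smallest covering weight $w$ coincides with $d^\perp$ itself. So the whole corollary is really the assertion that $w = d^\perp$ if and only if~\eqref{eqn-condj17} holds.

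First I would prove the forward direction. Suppose $\C$ has minimum linear locality $d^\perp-1$. By Theorem~\ref{thm-generalLoc}, the smallest integer $w$ with $A_w(\C^\perp)>0$ and~\eqref{eqn-condloc} holding satisfies $w-1 = d^\perp-1$, hence $w = d^\perp$. The covering condition~\eqref{eqn-condloc} with this $w$ reads $\bigcup_{j=1}^{d^\perp}\bigcup_{S \in \cB_j(\C^\perp)} S \supseteq [n]$. The key observation is that for $1 \leq j < d^\perp$ there are no nonzero codewords of weight $j$ in $\C^\perp$, so $\cB_j(\C^\perp) = \emptyset$ for those indices. Therefore the union over $j$ collapses to just $j = d^\perp$, giving $\bigcup_{S \in \cB_{d^\perp}(\C^\perp)} S \supseteq [n]$, and since every support is a subset of $[n]$ this forces equality in~\eqref{eqn-condj17}.

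For the converse, assume~\eqref{eqn-condj17} holds. Then the covering condition~\eqref{eqn-condloc} is satisfied with $w = d^\perp$, because the union over $j$ from $1$ to $d^\perp$ contains the $j = d^\perp$ term, which already covers $[n]$. Moreover $d^\perp$ is the smallest weight with $A_w(\C^\perp)>0$, so no smaller $w$ could even be a candidate. Hence $d^\perp$ is the smallest integer satisfying both requirements of Theorem~\ref{thm-generalLoc}, and that theorem gives minimum linear locality $d^\perp - 1$.

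The argument is essentially a bookkeeping exercise built entirely on Theorem~\ref{thm-generalLoc}, so I do not anticipate a genuine obstacle. The one point requiring care is the collapse of the union in the forward direction: one must invoke that $\cB_j(\C^\perp)$ is empty for $j < d^\perp$ to pass from the inclusion~\eqref{eqn-condloc} at $w = d^\perp$ to the single-weight statement~\eqref{eqn-condj17}, and one must be careful that the lower bound $d^\perp-1$ on linear locality (noted in the text via Lemma~\ref{lem-localitylem}) guarantees $w \geq d^\perp$ throughout, so that $w = d^\perp$ is the genuine minimum rather than an accidental value.
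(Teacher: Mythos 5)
Your proposal is correct and takes essentially the same route as the paper: the paper's own proof is a one-line citation of Theorem~\ref{thm-generalLoc} (together with Lemma~\ref{lem-localitylem}), and your argument simply fills in the bookkeeping — identifying the minimal $w$ of Theorem~\ref{thm-generalLoc} with $d^\perp$ and collapsing the union in~\eqref{eqn-condloc} using $\cB_j(\C^\perp)=\emptyset$ for $j<d^\perp$ — exactly as that citation intends.
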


\begin{proof}
The desired conclusion directly follows from Theorem \ref{thm-generalLoc} and Lemma \ref{lem-localitylem}.
\end{proof}

The following result is well known in the literature. We show that it is a corollary of Theorem \ref{thm-generalLoc}.

\begin{corollary}\label{cor-newnc}
Let $\C$ be a nontrivial cyclic code of length $n$. Then $\C$ has minimum linear locality $d(\C^\perp)-1$.
\end{corollary}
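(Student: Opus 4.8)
The plan is to derive Corollary~\ref{cor-newnc} from Theorem~\ref{thm-generalLoc} by exploiting the cyclic symmetry of $\C$, and hence of $\C^\perp$. Recall that a cyclic code is invariant under the coordinate shift $(c_0, c_1, \ldots, c_{n-1}) \mapsto (c_{n-1}, c_0, \ldots, c_{n-2})$; since $\C$ is cyclic, its dual $\C^\perp$ is cyclic as well. The key observation is that this cyclic shift generates a transitive permutation group acting on the coordinate set $[n]$, and this group is a subgroup of $\PAut(\C^\perp) \subseteq \GAut(\C^\perp)$.

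First I would set $d^\perp = d(\C^\perp)$ and pick any codeword $\bc^\perp \in \C^\perp$ of minimum weight $d^\perp$; such a codeword exists and is nonzero since $d^\perp \geq 2$ by nontriviality, so $A_{d^\perp}(\C^\perp) > 0$. Its support $S_0 := \support(\bc^\perp)$ is a block in $\cB_{d^\perp}(\C^\perp)$. Next, for each index $j \in [n]$, I would apply the $j$-fold cyclic shift $\sigma^j$ to $\bc^\perp$; because $\C^\perp$ is cyclic, $\sigma^j(\bc^\perp)$ is again a codeword of weight $d^\perp$ in $\C^\perp$, and its support is the cyclically shifted set $S_0 + j \pmod n$. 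The crucial point is that as $j$ ranges over $[n]$ and a fixed element $s_0 \in S_0$ is shifted, the value $s_0 + j \bmod n$ ranges over all of $[n]$. Hence every coordinate position lies in the support of some minimum-weight codeword, which means
\begin{eqnarray*}
\bigcup_{S \in \cB_{d^\perp}(\C^\perp)} S = [n].
\end{eqnarray*}

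Having established this covering property, I would invoke Corollary~\ref{cor-newn0}, which states precisely that $\C$ has minimum linear locality $d^\perp - 1$ if and only if the minimum-weight supports of $\C^\perp$ cover $[n]$. Since the covering condition \eqref{eqn-condj17} holds, the conclusion follows immediately. Alternatively, one can argue directly from Theorem~\ref{thm-generalLoc}: the smallest $w$ with $A_w(\C^\perp) > 0$ is exactly $w = d^\perp$, and the argument above shows that \eqref{eqn-condloc} already holds at this smallest value of $w$, so the minimum linear locality is $w - 1 = d^\perp - 1$.

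I do not anticipate a genuine obstacle here, as the result is a clean consequence of transitivity. The only point requiring a little care is the claim that the cyclic group acts transitively on coordinates, so that shifting a single fixed support sweeps out every coordinate; this is where the cyclic structure is used essentially and cannot be weakened. It is worth remarking that the same argument works verbatim for any linear code whose automorphism group $\GAut(\C^\perp)$ acts transitively on $[n]$, not merely cyclic codes, since transitivity of the permutation part is all that the support-covering argument requires.
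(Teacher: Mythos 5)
Your proposal is correct and follows essentially the same route as the paper: both exploit the fact that $\C^\perp$ is cyclic to shift a minimum-weight codeword so that its support covers every coordinate position, and then conclude via Corollary~\ref{cor-newn0} (equivalently, Theorem~\ref{thm-generalLoc}). The only cosmetic difference is that the paper argues pointwise (fix $i$, shift a minimum-weight codeword to have $i$ in its support) while you shift one fixed codeword through all $n$ positions; your closing remark about general transitive automorphism groups is exactly the content of the paper's Corollary~\ref{cor-newtrans}.
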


\begin{proof}
Put $d^\perp=d(\C^\perp)$.
Let $i \in [n]$.
Let $\bc^\perp$ be a minimum weight codeword in $\C^\perp$. By definition, the Hamming weight $\wt(\bc^\perp) \geq 2$.
Consequently, $\bc^\perp$ has a nonzero coordinate.  Since $\C^\perp$ is also cyclic,
we can assume $\bc^\perp_i \neq 0$. We then deduce that
$$
\bigcup_{S \in \cB_{d^\perp}(\C^\perp)} S \supseteq\{i\}.
$$
The desired conclusion then follows from Theorem \ref{thm-generalLoc}.
\end{proof}

\begin{corollary}\label{cor-newn}
Let $\C$ be a nontrivial linear code of length $n$ and put $d^\perp=d(\C^\perp)$.
If $(\p(\C^\perp), \B_{d^\perp}(\C^\perp))$ is a $1$-$(n, d^\perp, \lambda_1^\perp)$ design with $\lambda_1^\perp \geq 1$,
then $\C$ has minimum linear locality $d^\perp-1$.
\end{corollary}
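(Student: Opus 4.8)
The plan is to reduce this statement directly to Corollary \ref{cor-newn0}, which already characterizes minimum linear locality $d^\perp-1$ by the covering condition \eqref{eqn-condj17}. First I would unpack the hypothesis: by the definition of a $t$-design recalled in Section \ref{sec-prelim}, the assertion that $(\p(\C^\perp), \B_{d^\perp}(\C^\perp))$ is a $1$-$(n, d^\perp, \lambda_1^\perp)$ design means that each point $i \in \p(\C^\perp)=[n]$ is contained in exactly $\lambda_1^\perp$ blocks of $\B_{d^\perp}(\C^\perp)$. Since $\lambda_1^\perp \geq 1$, every coordinate position $i \in [n]$ lies in at least one block $S \in \B_{d^\perp}(\C^\perp)$, i.e., in the support of at least one minimum-weight codeword of $\C^\perp$.

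Next I would observe that this is precisely the statement that the blocks cover all points:
\begin{eqnarray*}
\bigcup_{S \in \B_{d^\perp}(\C^\perp)} S = [n],
\end{eqnarray*}
which is exactly the condition \eqref{eqn-condj17} appearing in Corollary \ref{cor-newn0}. The reverse inclusion is automatic, since every block $S$ is by construction a subset of $[n]$. Applying Corollary \ref{cor-newn0} then yields immediately that $\C$ has minimum linear locality $d^\perp-1$, which completes the argument.

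There is essentially no hard step here: the entire content of the corollary is the single observation that a nondegenerate $1$-design on the minimum-weight supports of $\C^\perp$ forces those supports to cover every coordinate of $[n]$. The only point requiring care is the role of the hypothesis $\lambda_1^\perp \geq 1$, which excludes the degenerate case of an empty or vacuous block set and is exactly what guarantees the covering; I would note that the precise value of $\lambda_1^\perp$ is irrelevant to the conclusion, as only the positivity of the replication number is used.
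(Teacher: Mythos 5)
Your proof is correct and follows essentially the same route as the paper's: both reduce the $1$-design hypothesis with $\lambda_1^\perp \geq 1$ to the covering condition $\bigcup_{S \in \cB_{d^\perp}(\C^\perp)} S = [n]$ and then conclude. The only cosmetic difference is that you cite Corollary \ref{cor-newn0} for the final step while the paper cites Theorem \ref{thm-generalLoc} directly, but since Corollary \ref{cor-newn0} is itself an immediate consequence of that theorem, the arguments are substantively identical.
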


\begin{proof}
By the definition of $1$-designs, every $i \in \cP(\C^\perp)$ is covered in $\lambda_1^\perp$ blocks in
the block set $\cB_{d^\perp}(\C^\perp)$. Hence,
\begin{eqnarray*}
\bigcup_{j=1}^{d^\perp} \bigcup_{S \in \cB_j(\C^\perp)} S =
\bigcup_{S \in \cB_{d^\perp}(\C^\perp)} S = [n].
\end{eqnarray*}
The desired conclusion then follows from Theorem \ref{thm-generalLoc}.
\end{proof}

It should be noted that there are many nontrivial linear codes with minimum linear locality  $d(\C^\perp)-1$, but
$(\p(\C^\perp), \B_{d^\perp}(\C^\perp))$ is not a $1$-design. Hence, the converse of Corollary \ref{cor-newn} is
not true. Corollary  \ref{cor-newn} will be one of the tools for studying the minimum linear locality of some
families of linear codes in this paper. Another tool is documented in the following corollary.

\begin{corollary}\label{cor-newtrans}
Let $\C$ be a nontrivial linear code. If $\Aut(\C)$ or $\Aut(\C^\perp)$ is transitive, then $\C$ has minimum linear locality
$d(\C^\perp)-1$ and $\C^\perp$ has minimum linear locality $d(\C)-1$.
\end{corollary}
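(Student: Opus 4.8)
The plan is to reduce the statement to the two tools already established: the design-producing Theorem~\ref{thm-designCodeAutm} and the locality criterion of Corollary~\ref{cor-newn}. The bridge between them is the standard duality fact that a code and its dual share the same coordinate-permutation group, so that transitivity of one automorphism group forces transitivity of the other, which also explains why the hypothesis is phrased as an ``or''.

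First I would record that the permutation part of $\GAut(\C)$ coincides with that of $\GAut(\C^\perp)$. Concretely, if $DP\gamma \in \GAut(\C)$ with $D$ diagonal, $P$ a permutation matrix and $\gamma$ a field automorphism, then $D^{-1}P\gamma \in \GAut(\C^\perp)$: the field automorphism $\gamma$ preserves inner products up to applying $\gamma$, so it fixes $\C^\perp$; and for the monomial part one checks that $\langle xDP,\, yD^{-1}P\rangle=\langle x,y\rangle$ for all $x,y$ (using $PP^{T}=I$ and that diagonal matrices commute), so the dual code is sent to itself by a monomial map with the \emph{same} permutation component $P$. Since $(\C^\perp)^\perp=\C$, the reverse inclusion holds as well, whence $\GAut(\C)$ is transitive if and only if $\GAut(\C^\perp)$ is transitive. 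In particular, whether we assume $\Aut(\C)$ transitive or $\Aut(\C^\perp)$ transitive, both groups are transitive, and it suffices to prove the two conclusions under this common assumption.

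Next I would deduce the locality of each code. Since $\C$ is nontrivial, so is $\C^\perp$, and $d^\perp:=d(\C^\perp)\geq 2$, so $A_{d^\perp}(\C^\perp)>0$. Applying Theorem~\ref{thm-designCodeAutm} to $\C^\perp$ with $t=1$ (legitimate because $\GAut(\C^\perp)$ is transitive and $d^\perp\geq 1$), the codewords of weight $d^\perp$ in $\C^\perp$ hold a $1$-$(n,d^\perp,\lambda_1^\perp)$ design; nonemptiness of the block set forces $\lambda_1^\perp\geq 1$. Corollary~\ref{cor-newn} then yields that $\C$ has minimum linear locality $d^\perp-1=d(\C^\perp)-1$. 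Interchanging the roles of $\C$ and $\C^\perp$ and using $(\C^\perp)^\perp=\C$, the same argument (Theorem~\ref{thm-designCodeAutm} applied to $\C$, then Corollary~\ref{cor-newn} applied to $\C^\perp$) shows that $\C^\perp$ has minimum linear locality $d(\C)-1$.

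The only genuinely nontrivial point is the first paragraph: verifying that transitivity transfers between $\C$ and $\C^\perp$, i.e.\ that their automorphism groups have the same permutation part. Everything after that is a direct invocation of Theorem~\ref{thm-designCodeAutm} and Corollary~\ref{cor-newn}. If one preferred to avoid quoting the design machinery, the same conclusion would follow even more directly from Corollary~\ref{cor-newn0}: the common (transitive) permutation group permutes the supports in $\B_{d^\perp}(\C^\perp)$, so starting from any single minimum-weight support and sweeping it around by the group covers every coordinate, giving $\bigcup_{S\in\B_{d^\perp}(\C^\perp)} S=[n]$, and symmetrically for $\C$.
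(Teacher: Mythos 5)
Your proposal is correct, but your primary route is not the one the paper takes. The paper's own proof is the direct orbit argument: assuming $\Aut(\C^\perp)$ transitive, it takes a single minimum-weight codeword of $\C^\perp$, uses transitivity to move its support onto every coordinate, concludes $\bigcup_{S \in \cB_{d^\perp}(\C^\perp)} S = [n]$, and invokes Theorem~\ref{thm-generalLoc}; the transfer of transitivity between $\C$ and $\C^\perp$ is only used at the very end to handle the ``or'' in the hypothesis, and is merely asserted to be ``straightforward to prove.'' Your main argument instead routes through Theorem~\ref{thm-designCodeAutm} with $t=1$ (minimum-weight codewords of $\C^\perp$ hold a $1$-design) followed by Corollary~\ref{cor-newn} --- which is precisely the alternative proof the paper itself flags in the remark immediately after its proof, so both routes are sanctioned, but they are genuinely distinct: yours is modular and reuses the design machinery as a black box, while the paper's is more elementary and self-contained, never mentioning designs. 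Two further points in your favor: you actually verify the transitivity-transfer claim (showing $DP\gamma \in \GAut(\C)$ yields $D^{-1}P\gamma \in \GAut(\C^\perp)$ with the same permutation part, via $\langle xDP, yD^{-1}P\rangle = \langle x,y\rangle$), filling in a detail the paper omits; and your closing paragraph, sweeping a minimum-weight support around by the transitive group and invoking Corollary~\ref{cor-newn0}, is essentially a reconstruction of the paper's own proof, so you have in effect produced both arguments.
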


\begin{proof}
Put $d^\perp=d(\C^\perp)$.
Let $\C$ be over $\gf(q)$ and have length $n$. Suppose that  $\Aut(\C^\perp)$ is transitive. Let $\bc^\perp$ be a minimum
weight codeword in $\C^\perp$. Then $\wt(\bc^\perp) \geq 2$. Let $i \in \support(\bc^\perp)$. For each $j \in [n] \setminus
\{i\}$, there is an automorphism $DP\gamma$ in $\Aut(\C^\perp)$ such that the permutation part $P$ sends $i$ to $j$,
as $\Aut(\C^\perp)$ is transitive. This means there is another minimum weight codeword $(\bc')^\perp$ in $\C^\perp$
such that $j \in \support((\bc')^\perp)$. Consequently,
$$
\bigcup_{j=1}^{d^\perp} \bigcup_{S \in \cB_j(\C^\perp)} S =
\bigcup_{S \in \cB_{d^\perp}(\C^\perp)} S = [n].
$$
It then follows from Theorem \ref{thm-generalLoc} that $\C$ has minimum linear locality $d(\C^\perp)-1$.

In general,  $\Aut(\C)$ and $\Aut(\C^\perp)$ are different. However, it is straightforward to prove that
$\Aut(\C)$ is transitive if and only if $\Aut(\C^\perp)$ is so. Then the  remaining desired conclusion follows
from the first conclusion proved above.
\end{proof}

Note that combining Theorem \ref{thm-designCodeAutm} and Corollary \ref{cor-newn} gives another proof of
Corollary \ref{cor-newtrans}.
Sometimes we may need to use Corollary \ref{cor-newn}, as the automorphism group of a code may be unknown.
Sometimes it is more convenient to use Corollary  \ref{cor-newtrans}. Sometimes both corollaries can be used to
study the linear locality of some linear codes. In many cases, both corollaries cannot be used to do so.
It looks impossible to find out all nontrivial linear codes with minimum linear locality
$d(\C^\perp)-1$. But  Corollaries \ref{cor-newn} and  \ref{cor-newtrans} can be employed to find many families
of such codes. Most of the families of linear codes documented in the monograph \cite{Dingbook18} are such codes,
as they support $t$-designs with $t \geq 2$ or their automorphism groups are doubly homogeneous.
Other families of such linear codes are not documented in \cite{Dingbook18}, as the monograph \cite{Dingbook18}
does not include linear codes supporting $1$-designs but not $2$-designs.

The following result would also be useful in some cases.

\begin{theorem}\label{thm-newJ1821}
Let $\C$ be a nontrivial linear code. If $\C^\perp$ is spanned by its minimum weight codewords, then
$\C$ has minimum linear locality $d(\C^\perp)-1$.
\end{theorem}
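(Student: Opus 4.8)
The plan is to reduce the statement to the covering condition of Corollary~\ref{cor-newn0} and then exploit the spanning hypothesis to verify it. Write $d^\perp = d(\C^\perp)$ and let $n$ be the length of $\C$. Since it has already been observed (via Lemma~\ref{lem-localitylem}) that the minimum linear locality of a nontrivial linear code is at least $d^\perp-1$, it suffices to prove the reverse inequality. By Corollary~\ref{cor-newn0} this amounts to showing
\[
\bigcup_{S \in \cB_{d^\perp}(\C^\perp)} S = [n],
\]
that is, that every coordinate position lies in the support of at least one minimum weight codeword of $\C^\perp$.

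First I would argue by contradiction: suppose some position $i \in [n]$ is covered by no block of $\cB_{d^\perp}(\C^\perp)$, which is to say every codeword of $\C^\perp$ of weight $d^\perp$ has its $i$-th coordinate equal to $0$. The decisive step is then to invoke the hypothesis that $\C^\perp$ is spanned by its minimum weight codewords: because the $i$-th coordinate projection is a linear functional that vanishes on each minimum weight codeword, it vanishes on their entire span, which is all of $\C^\perp$. Hence the $i$-th coordinate of every codeword of $\C^\perp$ is $0$. Consequently the weight-one vector $\be_i$ supported on position $i$ is orthogonal to every element of $\C^\perp$, so $\be_i \in (\C^\perp)^\perp = \C$, giving a codeword of weight $1$ in $\C$ and forcing $d(\C) = 1$. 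This contradicts the nontriviality assumption $d(\C) \geq 2$, so the covering condition must hold and Corollary~\ref{cor-newn0} delivers minimum linear locality exactly $d^\perp-1$.

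The argument is short and I do not anticipate a genuine computational obstacle. The only point requiring care is the middle step, where the spanning hypothesis is used to transfer a coordinate-wise vanishing condition from the spanning set of minimum weight codewords to the whole code $\C^\perp$; this transfer is immediate from the linearity of the coordinate projection, but it is precisely the place where the hypothesis is indispensable, and it is worth stating explicitly rather than leaving implicit. One could alternatively run the same contradiction directly through Theorem~\ref{thm-generalLoc} rather than Corollary~\ref{cor-newn0}, but routing through the corollary keeps the bookkeeping of the union in \eqref{eqn-condj17} cleanest.
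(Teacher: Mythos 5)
Your proposal is correct and follows essentially the same route as the paper's own proof: both argue by contradiction that an uncovered coordinate $i$ would force the weight-one vector $\be_i$ into $\C = (\C^\perp)^\perp$ (using the spanning hypothesis to propagate the vanishing of the $i$-th coordinate from the minimum weight codewords to all of $\C^\perp$), contradicting $d(\C) \geq 2$, and then conclude via Corollary~\ref{cor-newn0}. The only difference is that you spell out the linear-functional step explicitly, which the paper leaves implicit.
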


\begin{proof}
Let $\C$ have length $n$. Let $i \in [n]$.  If
$$
i \not\in \bigcup_{S \in \cB_{d^\perp}(\C^\perp)} S,
$$
then $(0,\ldots, 0, 1, 0, \ldots, 0)$ would be a codeword in $\C$, where the nonzero coordinate $1$ is in coordinate
position $i$, as all the minimum weight codewords in $\C^\perp$ span $\C^\perp$. This is contrary to the fact that
$\C$ is nontrivial. The desired conclusion then follows from Corollary \ref{cor-newn0}.
\end{proof}

\subsection{The minimum linear locality of extended cyclic code}

While any nontrivial cyclic code $\C$ has minimum linear locality $d(\C^\perp)-1$,  extended
cyclic codes may not have such property. Note that even if $\C$ is nontrivial, the extended code
$\overline{\C}$ may be trivial, as $d( \overline{\C})$ could be $1$.
The automorphism group of any cyclic code is transitive and each cyclic code supports $1$-designs.
But these may not be true for extended cyclic codes.
In this section, we consider
the linear locality of the extended cyclic codes and their duals.

Let $H$ and $\overline{H}$ denote the parity-check matrix of $\C$ and $\overline{\C}$, respectively.
Then we have the following well known lemma \cite{HP03}.

\begin{lemma}\label{thm-extendedCodeParam}
Let $\C$ be an $[n, \kappa, d]$ code over $\gf(q)$. Then $\overline{\C}$ is an $[n+1, \kappa, \overline{d}]$ linear code, where $\overline{d}=d$ or $d+1$. In the binary case, $\overline{d}=d$ if $d$ is even, and
$\overline{d}=d+1$ otherwise.

In addition, the parity-check matrix $\overline{H}$ of $\overline{\C}$ can be deduced from that of $\C$
by
\begin{eqnarray}
\overline{H}=\left[
\begin{array}{ll}
\bone & 1 \\
H &     \bzero
\end{array}
\right],
\end{eqnarray}
where $\bone=(1,1, \ldots, 1)$ and $\bzero=(0,0, \ldots, 0)^T$.
\end{lemma}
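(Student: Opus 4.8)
The plan is to treat the three assertions separately---the parameters of $\overline{\C}$, the binary distance refinement, and the explicit parity-check matrix---since each follows by unwinding the single defining equation $\sum_{i=0}^{n} c_i = 0$ of the extension. First I would pin down the dimension and length. The map $\phi \colon \C \to \overline{\C}$ sending $(c_0, \ldots, c_{n-1})$ to $(c_0, \ldots, c_{n-1}, c_n)$ with $c_n = -\sum_{i=0}^{n-1} c_i$ is plainly $\gf(q)$-linear, and it is injective because the first $n$ coordinates already determine the image. Hence $\dim(\overline{\C}) = \dim(\C) = \kappa$, and by construction the length is $n+1$.

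Next, for the minimum distance I would observe that for every nonzero $\bc \in \C$ the appended symbol $c_n$ either vanishes or not, so $\wt(\phi(\bc)) = \wt(\bc)$ or $\wt(\bc)+1$; in particular $\wt(\phi(\bc)) \ge \wt(\bc) \ge d$, giving $\overline{d} \ge d$, while extending a fixed weight-$d$ codeword shows $\overline{d} \le d+1$. This confines $\overline{d}$ to $\{d, d+1\}$. For the binary refinement the key observation is that over $\gf(2)$ one has $c_n = \sum_{i<n} c_i = \wt(\bc) \bmod 2$, so every codeword of $\overline{\C}$ has even weight. Thus if $d$ is even a minimum-weight codeword extends without any weight increase and $\overline{d}=d$, whereas if $d$ is odd no codeword of $\overline{\C}$ can have weight $d$ (that would be an odd-weight codeword), so together with $\overline{d}\le d+1$ this forces $\overline{d} = d+1$.

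Finally I would verify the parity-check matrix. Writing out $\overline{H}\,\overline{\bc}^{\,T} = \bzero$ for $\overline{\bc} = (c_0, \ldots, c_{n-1}, c_n)$, the top row $(\bone, 1)$ reproduces exactly the overall parity condition $\sum_{i=0}^{n} c_i = 0$, while the block $[\,H \ \bzero\,]$ reproduces $H(c_0, \ldots, c_{n-1})^{T} = \bzero$, i.e.\ membership of the truncation in $\C$. These are precisely the two defining conditions of $\overline{\C}$, so $\overline{\C}$ is the null space of $\overline{H}$. To conclude that $\overline{H}$ is a genuine parity-check matrix it remains to check it has full row rank $n+1-\kappa$: the rows of $[\,H \ \bzero\,]$ are independent and all carry a $0$ in the last coordinate, whereas the appended row carries a $1$ there, so it cannot lie in their span.

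The main obstacle, modest as it is for such a standard result, is the binary distance claim: the one-line bound $\overline{d} \in \{d, d+1\}$ does not by itself decide which value occurs, and the sharp conclusion hinges on the parity invariant that all extended weights are even. Everything else is a direct reading of the defining equation together with an independence check on the rows of $\overline{H}$.
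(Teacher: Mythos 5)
Your proof is correct. Note that the paper itself offers no proof of this lemma at all: it is stated as a well-known fact and attributed to Huffman and Pless \cite{HP03}, so there is no in-paper argument to compare against. What you have written is precisely the standard textbook argument: the injective linear extension map $\phi$ for the dimension, the two-sided bound $d \le \overline{d} \le d+1$, the even-weight invariant of $\overline{\C}$ over $\gf(2)$ to decide between the two values, and the rank count showing that appending the row $(\bone, 1)$ to $[\,H \ \bzero\,]$ yields a matrix of rank $(n+1)-\kappa$ whose null space is exactly $\overline{\C}$. All steps check out, including the one place where care is genuinely needed (your observation that over $\gf(2)$ the appended coordinate equals $\wt(\bc) \bmod 2$, so no codeword of $\overline{\C}$ can have odd weight), so your write-up is a complete and faithful substitute for the omitted proof.
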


We now prove the following result, which will be needed later.

\begin{theorem}\label{thm-loca-ext-cycliccode}
Let $\C$ be a nontrivial cyclic code. If $d(\overline{\C})=d(\C)+1$, then $(\overline{\C})^\perp$ has minimum linear
locality $d(\C)$.
\end{theorem}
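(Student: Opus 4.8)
The plan is to reduce the claim to a covering property of the minimum-weight codewords of $\overline{\C}$ and then apply Corollary \ref{cor-newn0}. Write $d = d(\C)$ and let $\mathcal{E} = (\overline{\C})^\perp$, so that $\mathcal{E}^\perp = \overline{\C}$ and $d(\mathcal{E}^\perp) = d(\overline{\C}) = d+1$ by hypothesis. Corollary \ref{cor-newn0}, applied to the length-$(n+1)$ code $\mathcal{E}$, states that $\mathcal{E}$ has minimum linear locality $d(\mathcal{E}^\perp)-1 = d$ exactly when the supports of the weight-$(d+1)$ codewords of $\overline{\C}$ cover all $n+1$ coordinate positions, i.e. $\bigcup_{S \in \cB_{d+1}(\overline{\C})} S = [n+1]$. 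So everything reduces to proving this covering statement, together with a routine check that $\mathcal{E}$ is nontrivial.

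First I would use the hypothesis $d(\overline{\C}) = d+1$ to pin down the extensions of minimum-weight codewords of $\C$. For $\bc = (c_0, \ldots, c_{n-1}) \in \C$ the extension is $\overline{\bc} = (c_0, \ldots, c_{n-1}, -\sum_{i=0}^{n-1} c_i)$, whose weight is $\wt(\bc)$ when $\sum_{i} c_i = 0$ and $\wt(\bc)+1$ otherwise. If some weight-$d$ codeword $\bc$ of $\C$ satisfied $\sum_{i} c_i = 0$, then $\overline{\bc}$ would be a weight-$d$ codeword of $\overline{\C}$, contradicting $d(\overline{\C}) = d+1$. Hence every minimum-weight codeword of $\C$ has nonzero coordinate sum, so its extension is a weight-$(d+1)$ codeword of $\overline{\C}$ whose last entry, in position $n$, is nonzero.

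Next I would invoke cyclicity to cover the first $n$ positions. Fixing one minimum-weight codeword $\bc$ of $\C$, its cyclic shifts $\bc^{(0)}, \ldots, \bc^{(n-1)}$ are again weight-$d$ codewords of $\C$, and since a cyclic shift moves a fixed nonzero coordinate through all of $\{0, \ldots, n-1\}$, the union of their supports is all of $\{0, \ldots, n-1\}$. By the previous paragraph each $\bc^{(k)}$ has nonzero coordinate sum, so each $\overline{\bc^{(k)}}$ is a weight-$(d+1)$ codeword of $\overline{\C}$ containing position $n$ in its support. Consequently the supports of these minimum-weight codewords of $\overline{\C}$ together cover $\{0, \ldots, n-1\}$ from the shifts and position $n$ from the appended coordinate, giving $\bigcup_{S \in \cB_{d+1}(\overline{\C})} S = [n+1]$.

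To finish I would verify that $\mathcal{E}$ is nontrivial: $d(\mathcal{E}^\perp) = d+1 \geq 3 \geq 2$ since $\C$ is nontrivial, and the covering just established shows that no coordinate of $\overline{\C}$ is identically zero, so $\mathcal{E} = (\overline{\C})^\perp$ has no weight-$1$ codeword and $d(\mathcal{E}) \geq 2$. Corollary \ref{cor-newn0} then yields minimum linear locality $d(\overline{\C}) - 1 = d = d(\C)$, as wanted. The one genuinely delicate step is the first: it is precisely the hypothesis $d(\overline{\C}) = d(\C)+1$ that forces every minimum-weight codeword of $\C$ to have nonzero coordinate sum, and hence forces the appended coordinate to lie in every such extended support. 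Without this hypothesis the last position could be missed by all weight-$(d+1)$ codewords, and the minimum linear locality would then exceed $d(\C)$.
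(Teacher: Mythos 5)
Your proposal is correct and follows essentially the same route as the paper's proof: both reduce the claim to the covering condition of Corollary \ref{cor-newn0}, use cyclicity of $\C$ to cover the first $n$ positions with supports of minimum-weight codewords, and use the hypothesis $d(\overline{\C})=d(\C)+1$ to force the appended coordinate of each extended minimum-weight codeword to be nonzero, covering position $n$. The only cosmetic difference is that you shift a single minimum-weight codeword while the paper takes the union over all of them; the argument is identical in substance.
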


\begin{proof}
Let $\C$ have length $n$. By definition, $d(\C) \geq 2$ and $d(\C^\perp) \geq 2$. Since $d(\overline{\C})=d(\C)+1$,
we know that $d((\overline{\C})^\perp) \geq 2$. Therefore, $\overline{\C}$ is nontrivial. Let $\bc_1, \ldots, \bc_h$ be
all the minimum weight codewords in $\C$, and let $\overline{\bc_i}$ be the extended codeword of $\bc_i$ in $\overline{\C}$.
Since $d(\C)\geq 2$ and $\C$ is cyclic, we have
$$
\bigcup_{i=1}^h \support(\bc_i) =[n].
$$
Since $d(\overline{\C})=d(\C)+1$, the extended coordinate in each  $\overline{\bc_i}$ is nonzero. As a result, we get
$$
\bigcup_{i=1}^h \support(\overline{\bc_i}) =[n+1].
$$
The desired conclusion then follows from Corollary \ref{cor-newn0}.
\end{proof}

\begin{corollary}\label{cor-localityextedbinarycode}
Let $\C$ be a nontrivial binary cyclic code. If $d(\C)$ is odd, then $(\overline{\C})^\perp$ has minimum linear
locality $d(\C)$.
\end{corollary}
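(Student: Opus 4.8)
The plan is to obtain this statement as an immediate specialization of Theorem \ref{thm-loca-ext-cycliccode} to the binary case, so essentially no new argument is required beyond checking that the hypothesis $d(\overline{\C})=d(\C)+1$ holds automatically under the stated parity condition. In other words, I would not prove anything from scratch; I would merely verify that the setup of the corollary falls squarely inside the setup of the already-established theorem.

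First I would invoke Lemma \ref{thm-extendedCodeParam}, which records exactly how the minimum distance behaves under the parity extension in the binary case: $d(\overline{\C})=d(\C)$ when $d(\C)$ is even, and $d(\overline{\C})=d(\C)+1$ when $d(\C)$ is odd. Since the corollary assumes $d(\C)$ is odd, this yields precisely $d(\overline{\C})=d(\C)+1$. Next I would observe that $\C$ is assumed to be a \emph{nontrivial binary cyclic} code, so all the hypotheses of Theorem \ref{thm-loca-ext-cycliccode} — that $\C$ be nontrivial and cyclic and satisfy $d(\overline{\C})=d(\C)+1$ — are met. Applying that theorem directly gives that $(\overline{\C})^\perp$ has minimum linear locality $d(\C)$, which is the claim.

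There is no substantive obstacle here; the real content is carried entirely by Theorem \ref{thm-loca-ext-cycliccode} together with the parity-extension fact in Lemma \ref{thm-extendedCodeParam}. The only point that deserves a moment's attention is recognizing that the oddness of $d(\C)$ is exactly the condition that forces the extended code's distance to strictly increase by one, which is what makes the extended coordinate nonzero in every minimum-weight codeword and thereby activates the general theorem. This is nothing more than the binary special case spelled out in the lemma, so I expect the proof to be a one- or two-line deduction rather than a genuine argument.
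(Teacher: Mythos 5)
Your proposal is correct and matches the paper's own proof exactly: the paper likewise derives the corollary in one line by combining Lemma \ref{thm-extendedCodeParam} (which gives $d(\overline{\C})=d(\C)+1$ for binary codes with odd $d(\C)$) with Theorem \ref{thm-loca-ext-cycliccode}. No gaps; your identification of the oddness condition as the trigger for the distance increase is precisely the content of the deduction.
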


\begin{proof}
The desired conclusion follows from Lemma \ref{thm-extendedCodeParam} and Theorem \ref{thm-loca-ext-cycliccode}.
\end{proof}

Corollary \ref{cor-localityextedbinarycode} has determined the minimum linear locality of  $(\overline{\C})^\perp$
for all binary cyclic codes. Specifically, either $(\overline{\C})^\perp$ is a trivial code or nontrivial binary linear code
with minimum linear locality $d(\C)$ for each nontrivial cyclic code $\C$.

\section{The minimum linear locality of some known families of linear codes}\label{sec-localityofsome}

The objective of this section is to study the minimum linear locality of several families of linear codes which
are geometric codes and their punctured and shortened codes.
We wish to find out some families of optimal LLRCs.

\subsection{The minimum linear locality of the $q$-ary Hamming codes and Simplex codes}

A parity check matrix $H_{(q,m)}$ of the Hamming code $\cH_{(q,m)}$ over $\gf(q)$ is defined by choosing
for its columns a nonzero vector from each one-dimensional subspace of $\gf(q)^m$. In terms of
finite geometry, the columns of $H_{(q,m)}$ are the points of the projective geometry $\PG(m-1,\gf(q))$.
Hence $\cH_{(q,m)}$ has length $n=(q^m-1)/(q-1)$ and dimension $n-m$.
It is well known that $\cH_{(q,m)}$ has minimum weight 3 and
any $[(q^m-1)/(q-1), (q^m-1)/(q-1)-m,3]$ code over $\gf(q)$ is monomially equivalent
to the Hamming code $\cH_{(q,m)}$ \cite{HP03}. Note that the Hamming code $\cH_{(q,m)}$ is permutation-equivalent
to a cyclic code when $\gcd(m,q-1)=1$. By Corollary \ref{cor-newnc}, its minimum linear locality is known in this case.
However, its minimum locality may not be known for the case that $\gcd(m,q-1)\neq1$. The linear locality of the binary
Hamming and Simplex codes was settled in \cite{HYUS20}. Note that binary Hamming and Simplex codes are
permutation-equivalent to cyclic codes. In this subsection, we investigate the minimum linear locality of the $q$-ary
Hamming and Simplex codes.

The weight distribution of $\cH_{(q,m)}$ is given in the following lemma \cite{LDT20}.

\begin{lemma}\label{lem-hamweight}
The weight distribution of $\cH_{(q,m)}$ is given by
\begin{eqnarray*}
  q^mA_k(\cH_{(q,m)}) &=& \sum_{\substack{0\leq i\leq\frac{q^{m-1}-1}{q-1}\\ 0\leq j\leq q^{m-1}\\ i+j=k }} \left[{\frac{q^{m-1}-1}{q-1}\choose i}{q^{m-1}\choose j}\bigg((q-1)^k+(-1)^j(q-1)^i(q^m-1)\bigg)\right]
\end{eqnarray*}
for $0\leq k\leq (q^m-1)/(q-1)$.
\end{lemma}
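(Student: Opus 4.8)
The plan is to derive the weight distribution of $\cH_{(q,m)}$ from that of its dual by means of the MacWilliams identity, since the dual of $\cH_{(q,m)}$ is the $q$-ary Simplex code, whose weight distribution is completely transparent. First I would recall that $\cH_{(q,m)}^\perp$ is the $[n,m]$ Simplex code with $n=(q^m-1)/(q-1)$, and that it is a one-weight code: each of its $q^m-1$ nonzero codewords has Hamming weight $q^{m-1}$. This follows because the columns of $H_{(q,m)}$ form a system of representatives of the one-dimensional subspaces of $\gf(q)^m$, so that any nonzero linear form vanishes on exactly $(q^{m-1}-1)/(q-1)$ of them and is nonzero on the remaining $q^{m-1}$. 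Hence its weight enumerator is
$$
W_{\cH_{(q,m)}^\perp}(x,y) = x^{n} + (q^m-1)\, x^{\,n-q^{m-1}}\, y^{q^{m-1}}.
$$

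Next I would apply the MacWilliams transform. Since $|\cH_{(q,m)}^\perp| = q^m$, we have
$$
q^m\, W_{\cH_{(q,m)}}(x,y) = W_{\cH_{(q,m)}^\perp}\big(x+(q-1)y,\; x-y\big) = \big(x+(q-1)y\big)^{n} + (q^m-1)\big(x+(q-1)y\big)^{\,n-q^{m-1}}\big(x-y\big)^{q^{m-1}}.
$$
The key bookkeeping identity is $n-q^{m-1} = (q^{m-1}-1)/(q-1)$, which I would verify by a one-line computation. Writing $n_1 = (q^{m-1}-1)/(q-1)$ and $n_2 = q^{m-1}$, so that $n = n_1 + n_2$, I would factor the first term as $\big(x+(q-1)y\big)^{n_1}\big(x+(q-1)y\big)^{n_2}$, so that both terms become products matching the two binomial blocks appearing in the claimed sum.

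The heart of the computation is then extracting the coefficient of $x^{n-k}y^{k}$, equivalently setting $x=1$ and reading off $[y^k]$. For the first term, the binomial convolution gives $\sum_{i+j=k}\binom{n_1}{i}\binom{n_2}{j}(q-1)^{i+j} = \sum_{i+j=k}\binom{n_1}{i}\binom{n_2}{j}(q-1)^{k}$; for the second, I would use $[y^i]\big(1+(q-1)y\big)^{n_1} = \binom{n_1}{i}(q-1)^i$ and $[y^j](1-y)^{n_2} = \binom{n_2}{j}(-1)^j$, producing $(q^m-1)\sum_{i+j=k}\binom{n_1}{i}\binom{n_2}{j}(q-1)^i(-1)^j$. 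Summing the two contributions and factoring out $\binom{n_1}{i}\binom{n_2}{j}$ inside the sum over $i+j=k$ yields exactly
$$
q^m A_k(\cH_{(q,m)}) = \sum_{\substack{i+j=k\\ 0\le i\le n_1,\ 0\le j\le n_2}} \binom{n_1}{i}\binom{n_2}{j}\Big((q-1)^k + (-1)^j (q-1)^i (q^m-1)\Big),
$$
which is the stated formula after substituting $n_1=(q^{m-1}-1)/(q-1)$ and $n_2=q^{m-1}$.

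I expect no serious obstacle: the only conceptual inputs are the one-weight property of the Simplex code and the correct normalization $1/|\cH_{(q,m)}^\perp|=1/q^m$ in the MacWilliams identity; everything else is a routine coefficient extraction. The one point demanding care is the identity $n-q^{m-1}=(q^{m-1}-1)/(q-1)$, since it is precisely what makes the exponents in the transformed enumerator line up with the two binomial factors $\binom{n_1}{i}$ and $\binom{n_2}{j}$, and hence with the index ranges $0\le i\le (q^{m-1}-1)/(q-1)$ and $0\le j\le q^{m-1}$ recorded in the statement.
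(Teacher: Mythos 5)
Your proposal is correct: the one-weight property of the Simplex code, the normalization $1/q^m$ in the MacWilliams transform, the identity $n-q^{m-1}=(q^{m-1}-1)/(q-1)$, and the coefficient extraction all check out, and together they yield exactly the stated formula. Note that the paper itself does not prove this lemma at all — it simply cites it from the reference [LDT20] — so there is no in-paper argument to compare against; your MacWilliams-identity derivation from the dual Simplex code is the standard proof underlying that citation, and it is complete as written.
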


The duals of the Hamming codes $\cH_{(q,m)}$ are called Simplex codes, denoted by $\Sim_{(q,m)}$, which have parameters
$[(q^m-1)/(q-1),m,q^{m-1}]$. The nonzero codewords of the $[(q^m-1)/(q-1),m,q^{m-1}]$ Simplex
codes all have weight $q^{m-1}$.

\begin{theorem}\label{thm-hamloc}
The Hamming code $\cH_{(q,m)}$ is an {$(n,n-m,3,q;q^{m-1}-1)$}-LLRC and the Simplex code $\Sim_{(q,m)}$ is an
$(n,m,q^{m-1},q;2)$-LLRC. Furthermore, the Hamming code $\cH_{(q,m)}$ and $\Sim_{(q,m)}$ are $k$-optimal.
\end{theorem}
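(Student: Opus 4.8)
The plan is to treat the two locality claims and the two $k$-optimality claims separately, and in each pair to exploit that $\Sim_{(q,m)} = \cH_{(q,m)}^\perp$.

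For the localities I would invoke Corollary \ref{cor-newtrans}. The coordinates of $\cH_{(q,m)}$ are the points of $\PG(m-1,\gf(q))$, and it is well known that $\PGL(m,\gf(q))$ embeds in $\Aut(\cH_{(q,m)})$ (an element of $\GL(m,\gf(q))$ permutes the one-dimensional subspaces of $\gf(q)^m$ and rescales the chosen column representatives, giving a monomial automorphism) and acts transitively --- indeed $2$-transitively --- on the coordinates. Both codes are nontrivial for $m \geq 2$, since $d(\cH_{(q,m)}) = 3$ and $d(\Sim_{(q,m)}) = q^{m-1} \geq 2$. Corollary \ref{cor-newtrans} then gives at once that $\cH_{(q,m)}$ has minimum linear locality $d(\Sim_{(q,m)}) - 1 = q^{m-1}-1$ and that $\Sim_{(q,m)}$ has minimum linear locality $d(\cH_{(q,m)}) - 1 = 2$; together with the known parameters $[n,n-m,3]$ and $[n,m,q^{m-1}]$ this yields the two LLRC statements.

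For the $k$-optimality I would evaluate the CM bound (\ref{cm-bound}) at the single index $t = 1$ in each case. Because (\ref{cm-bound}) already gives $k \leq \min_{t}[\,\cdots\,]$, it suffices to produce one $t \geq 1$ for which the bracketed quantity is at most $k$; the two inequalities then force equality, i.e.\ $k$-optimality. For $\cH_{(q,m)}$ we have $r+1 = q^{m-1}$ and $n-(r+1) = (q^{m-1}-1)/(q-1)$, the length of $\cH_{(q,m-1)}$. The sphere-packing (Hamming) bound for distance-$3$ codes gives $k_{opt}^{(q)}\!\big((q^{m-1}-1)/(q-1),3\big) \leq (q^{m-1}-1)/(q-1) - (m-1)$, and substituting this into $r + k_{opt}^{(q)}(n-(r+1),3)$ and collapsing the geometric sum $q^{m-1} + (q^{m-1}-1)/(q-1) = (q^m-1)/(q-1) = n$ shows the $t=1$ term is at most $n-m = k$ (with equality, as $\cH_{(q,m-1)}$ is a perfect code).

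For $\Sim_{(q,m)}$ we have $r+1 = 3$ and $d = q^{m-1}$, and here the needed ingredient is the Griesmer bound: a $q$-ary code of distance $q^{m-1}$ and dimension $m-1$ has length at least $\sum_{i=0}^{m-2} q^{m-1-i} = (q^m-q)/(q-1)$, which strictly exceeds $n-3 = (q^m-3q+2)/(q-1)$ for every $q \geq 2$. Hence $k_{opt}^{(q)}(n-3,q^{m-1}) \leq m-2$, so the $t=1$ term $2 + k_{opt}^{(q)}(n-3,q^{m-1})$ is at most $m = k$. The hard part will be this last step for the Simplex code: one has to bound $k_{opt}^{(q)}$ at the reduced length $n-3$ tightly enough, which forces a careful handling of the ceilings in the Griesmer sum (they are exact powers of $q$ only because the truncation stops at dimension $m-1$) and of the geometric-series arithmetic, since an off-by-one there would destroy the tightness that makes the code $k$-optimal.
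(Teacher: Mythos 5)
Your proof is correct, but it takes a genuinely different route from the paper's at two of the three steps. For the locality claims the paper never touches the automorphism group: it observes that $\cH_{(q,m)}$ has minimum distance $3$ while its dual $\Sim_{(q,m)}$ is a one-weight code, applies the Assmus--Mattson theorem to conclude that the minimum-weight codewords of both codes hold $2$-designs (hence $1$-designs), and then invokes Corollary \ref{cor-newn}; your route through the $2$-transitive monomial action of $\PGL(m,\gf(q))$ on the points of $\PG(m-1,\gf(q))$ together with Corollary \ref{cor-newtrans} is equally valid, and indeed the paper remarks that the two corollaries are interchangeable when both the support designs and the automorphism group are accessible --- your version trades knowledge of the dual's weight distribution for knowledge of the projective geometry. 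For the $k$-optimality of $\cH_{(q,m)}$ your computation ($t=1$ in the CM bound plus sphere packing at the shortened length $(q^{m-1}-1)/(q-1)$) is exactly the paper's, and your logic for why a single value of $t$ suffices is the right one. For the Simplex code the paper instead gets $k_{opt}^{(q)}(n-3,q^{m-1})\leq m-2$ from the Plotkin bound, whereas you use the Griesmer bound; your computation is clean and closed-form, since the sum $\sum_{i=0}^{m-2}\lceil q^{m-1}/q^i\rceil=(q^m-q)/(q-1)$ involves no genuine ceilings and exceeds $n-3$ by exactly $2$, and it is arguably more transparent than Plotkin, which only yields $M\leq q^m/(3q-2)<q^{m-1}$ and needs the extra observation that the size of a linear code is a power of $q$ to descend to dimension $m-2$. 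The step you flagged as ``the hard part'' is in fact routine, and you handled it correctly.
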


\begin{proof}
The Hamming code $\cH_{(q,m)}$ has parameters $[n, n-m,3]$ and its dual code is a one-weight code. Then by the Assmus-Mattson Theorem, the codewords of minimum weight in the  Hamming code and Simplex code both hold a 2-design. So $(\p(\cH_{(q,m)}), \B_{3}(\cH_{(q,m)}))$ and $(\p(\cH_{(q,m)}^\perp), \B_{q^{m-1}}(\cH_{(q,m)}^\perp))$ are $1$-designs. Hence, the conclusions on the minimum linear locality of the two codes follow from Corollary \ref{cor-newn}.

We now prove the dimension optimality of $\cH_{(q,m)}$. Putting $t=1$ and the parameters of the {$(n,n-m,3,q; q^{m-1}-1)$}-LLRC into
the right-hand side of the CM bound in \eqref{cm-bound}, we have
\begin{align*}
  k & \leq \min_{s \in\Z_+}\{rs+k_{opt}^{(q)}(n-(r+1)s,d)\}\\
          & \leq r+k_{opt}^{(q)}(n-(r+1),d) \\
          & =q^{m-1}-1+k_{opt}^{(q)}(n-q^{m-1},3)\\
          & \leq n-m,
\end{align*}
where the last inequality holds due to the fact that $k_{opt}^{(q)}(n-q^{m-1},3)\leq n-q^{m-1}-m+1$, which follows from the sphere packing bound. Therefore, the Hamming code $\cH_{(q,m)}$ is $k$-optimal.

Putting $t=1$ and the parameters of $(n,m,q^{m-1},q;2)$-LLRC into the right-hand side of the CM bound in \eqref{cm-bound}, we have
\begin{align*}
  k & \leq r+k_{opt}^{(q)}(n-(r+1),d) \\
          & =2+k_{opt}^{(q)}(n-3,q^{m-1})\\
          & \leq m,
\end{align*}
where the last inequality holds due to the fact that $k_{opt}^{(q)}(n-3,q^{m-1})\leq m-2$, which follows from the Plotkin bound. Therefore, the Simplex code $\Sim_{(q,m)}$ is $k$-optimal.
This completes the proof.
\end{proof}

According to the Singleton-like bound in \eqref{eq-bound}, we can obtain the following family of $d$-optimal LLRCs.

\begin{corollary}
When $m=3$, the Hamming code $\cH_{(q,3)}$ is a $(q^2+q+1,q^2+q-2,3,q; q^{2}-1)$-LLRC and is both $d$-optimal and $k$-optimal.
\end{corollary}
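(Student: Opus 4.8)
The plan is to verify the two optimality claims separately, since the previous theorem already supplies the parameters of the code. When $m=3$, the Hamming code $\cH_{(q,3)}$ has length $n=(q^3-1)/(q-1)=q^2+q+1$, dimension $n-m=q^2+q-2$, and minimum distance $3$. By Theorem \ref{thm-hamloc}, its minimum linear locality is $r=q^{m-1}-1=q^2-1$. So the code is a $(q^2+q+1,\,q^2+q-2,\,3,\,q;\,q^2-1)$-LLRC, and the $k$-optimality was already established as a special case of Theorem \ref{thm-hamloc}. Thus the only new content to prove is the $d$-optimality.

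To prove $d$-optimality, I would simply substitute the parameters into the Singleton-like bound \eqref{eq-bound}, namely $d\le n-k-\lceil k/r\rceil+2$, and check that equality holds. First I would compute $n-k=(q^2+q+1)-(q^2+q-2)=3$. Next I would evaluate the ceiling term $\lceil k/r\rceil=\lceil (q^2+q-2)/(q^2-1)\rceil$. Writing $k=q^2+q-2=(q-1)(q+2)$ and $r=q^2-1=(q-1)(q+1)$, the quotient is $(q+2)/(q+1)$, which lies strictly between $1$ and $2$ for every prime power $q\ge 2$, so the ceiling equals $2$. Substituting, the right-hand side of \eqref{eq-bound} becomes $3-2+2=3$, which equals $d$. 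Hence $\cH_{(q,3)}$ meets the Singleton-like bound with equality and is $d$-optimal.

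I do not anticipate a genuine obstacle here, as the whole argument is an arithmetic verification; the only point requiring a little care is the ceiling computation, which is why I factored $k$ and $r$ to see that $1<k/r<2$ uniformly in $q$. I would close by noting that the $k$-optimality is inherited directly from Theorem \ref{thm-hamloc} (the case $m=3$), so the corollary combines both optimality properties for this single family.
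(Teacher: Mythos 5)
Your proposal is correct and follows exactly the paper's route: the parameters and $k$-optimality are quoted from Theorem \ref{thm-hamloc}, and $d$-optimality is checked by direct substitution into the Singleton-like bound \eqref{eq-bound}. The paper merely says this verification is "easy"; your explicit ceiling computation $\lceil (q^2+q-2)/(q^2-1)\rceil = \lceil (q+2)/(q+1)\rceil = 2$, giving $3-2+2=3=d$, is precisely the omitted arithmetic and is correct.
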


\begin{proof}
The parameters and dimension optimality of $\cH_{(q,3)}$ directly follow from Theorem \ref{thm-hamloc}. Hence, we only need to prove the distance optimality. It is easy to verify that the parameters of $\cH_{(q,3)}$ satisfy the equality in \eqref{eq-bound}. Hence, it is a $d$-optimal $(q^2+q+1,q^2+q-2,3,q;q^{2}-1)$-LLRC. This completes the proof.
\end{proof}

If a linear code $\C$ supports 2-designs, then the punctured code $\C^{\{t_1\}}$ or shortened code $\C_{\{t_1\}}$ may
support 1-designs. Then we can settle the minimum linear locality of $\C^{\{t_1\}}$ or $\C_{\{t_1\}}$.
The parameters of some punctured codes and shortened codes of the Hamming code are given in the following
lemma \cite{LDT20}.

\begin{lemma}\label{lem-shorham}
Let $n=(q^m-1)/(q-1)\geq 4$, and let $t_1$ be any coordinate position of codewords in $\cH_{(q,m)}$. Then the following hold:
\begin{itemize}
  \item $(\cH_{(q,m)})_{\{t_1\}}$ is an $[n-1,n-m-1,3]$ code over $\gf(q)$ with
      \begin{eqnarray*}
        A_k((\cH_{(q,m)})_{\{t_1\}}) &=& \frac{n-k}{n}A_k(\cH_{(q,m)})
      \end{eqnarray*}
      for $0\leq k\leq n-1$, where $A_k(\cH_{(q,m)})$ was given in Lemma \ref{lem-hamweight}.
  \item $((\cH_{(q,m)})_{\{t_1\}})^\perp$ is an $[n-1,m,q^{m-1}-1]$ code over $\gf(q)$ with weight enumerator $$1+(q-1 )q^{m-1}z^{q^{m-1}-1}+(q^{m-1}-1)z^{q^{m-1}}.$$
  \item $(\Sim_{(q,m)})_{\{t_1\}}$ is an $[n-1,m-1,q^{m-1}]$ code over $\gf(q)$ with weight enumerator $1+(q^{m-1}-1)z^{q^{m-1}}.$
  \item $((\Sim_{(q,m)})_{\{t_1\}})^\perp$ is an $[n-1,n-m,2]$ code over $\gf(q)$ with weight enumerator $$\frac{1}{q^{m-1}}[(1+(q-1)z)^{n-1}+(q^{m-1}-1)(1-z)^{q^{m-1}} (1+(q-1)z)^{n-1-q^{m-1}}].$$
\end{itemize}
\end{lemma}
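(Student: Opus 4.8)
The plan is to treat the four items as two shortened codes (the first and third) together with their duals (the second and fourth), exploiting the duality relations $(\C_T)^\perp=(\C^\perp)^T$ recorded in Section~\ref{sec-prelim} and the transitivity of the automorphism groups of $\cH_{(q,m)}$ and $\Sim_{(q,m)}$. First I would dispose of the lengths and dimensions by general principles. Shortening a code at one coordinate lowers the dimension by exactly $1$ whenever that coordinate is not identically zero on the code; here this is guaranteed since $d(\cH_{(q,m)}^\perp)=q^{m-1}\ge 2$ and $d(\Sim_{(q,m)}^\perp)=d(\cH_{(q,m)})=3\ge 2$ forbid zero coordinates. Dually, puncturing at one coordinate preserves the dimension whenever the code has minimum distance at least $2$. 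This yields the claimed length $n-1$ and the dimensions $n-m-1$, $m$, $m-1$, $n-m$ for the four codes at once.

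For the weight distribution of the shortened Hamming code $(\cH_{(q,m)})_{\{t_1\}}$, the key observation is that a codeword of $\cH_{(q,m)}$ vanishing at $t_1$ keeps its weight after $t_1$ is deleted, and this gives a weight-preserving bijection onto the codewords of $(\cH_{(q,m)})_{\{t_1\}}$. Hence $A_k((\cH_{(q,m)})_{\{t_1\}})$ equals the number of weight-$k$ codewords of $\cH_{(q,m)}$ that are zero at $t_1$. Since $\Aut(\cH_{(q,m)})$ is transitive on coordinates, each coordinate lies in the support of the same number of weight-$k$ codewords; counting the incidences $\sum_t \#\{\bc:\wt(\bc)=k,\ t\in\support(\bc)\}=k\,A_k(\cH_{(q,m)})$ evenly across the $n$ coordinates shows $t_1$ lies in exactly $\tfrac{k}{n}A_k(\cH_{(q,m)})$ of them, so the number vanishing at $t_1$ is $\tfrac{n-k}{n}A_k(\cH_{(q,m)})$, which is the stated formula. (Equivalently one may use that the weight-$k$ codewords hold a $1$-design, as already invoked in the proof of Theorem~\ref{thm-hamloc}.) The minimum distance remains $3$ because $A_3(\cH_{(q,m)})>0$ forces weight-$3$ codewords avoiding $t_1$ once $n\ge 4$. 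Specialising the same count to the one-weight Simplex code gives the third item: the only nonzero weight is $q^{m-1}$, and the number of nonzero codewords vanishing at $t_1$ is $(q^m-1)-\tfrac{q^{m-1}(q^m-1)}{n}=q^{m-1}-1$, yielding the enumerator $1+(q^{m-1}-1)z^{q^{m-1}}$ of an $[n-1,m-1,q^{m-1}]$ one-weight code.

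The second item I would handle directly, since by duality $((\cH_{(q,m)})_{\{t_1\}})^\perp=(\Sim_{(q,m)})^{\{t_1\}}$ is the punctured Simplex code. Puncturing at $t_1$ is injective on $\Sim_{(q,m)}$ as $q^{m-1}\ge 2$, and a nonzero codeword loses one from its weight exactly when $t_1$ lies in its support. By the transitivity count there are $\tfrac{q^{m-1}(q^m-1)}{n}=(q-1)q^{m-1}$ codewords meeting $t_1$, which become weight $q^{m-1}-1$, while the remaining $(q^m-1)-(q-1)q^{m-1}=q^{m-1}-1$ retain weight $q^{m-1}$; this gives the enumerator $1+(q-1)q^{m-1}z^{q^{m-1}-1}+(q^{m-1}-1)z^{q^{m-1}}$ and minimum distance $q^{m-1}-1$. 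For the fourth item I would avoid a second puncturing and instead use MacWilliams: $((\Sim_{(q,m)})_{\{t_1\}})^\perp$ is precisely the dual of the third-item code, whose enumerator $x^{n-1}+(q^{m-1}-1)x^{\,n-1-q^{m-1}}y^{q^{m-1}}$ and size $q^{m-1}$ are already in hand. Applying $W^\perp(x,y)=\tfrac{1}{q^{m-1}}W(x+(q-1)y,\,x-y)$ and setting $x=1,\ y=z$ reproduces exactly the displayed enumerator of the fourth item. For its minimum distance I would argue combinatorially: deleting $t_1$ from the weight-$3$ codewords of $\cH_{(q,m)}$ whose support contains $t_1$ produces weight-$2$ codewords, whereas no weight-$1$ codeword can arise because $d(\cH_{(q,m)})=3$; hence the minimum distance is $2$, consistent with the dimension $n-m$ read off from the duality.

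The bookkeeping of lengths, dimensions and the single MacWilliams substitution are routine. The step demanding the most care is the transitivity (equivalently $1$-design) argument behind the $\tfrac{n-k}{n}$ scaling in the first and third items: I must verify that the weight-preserving bijection between weight-$k$ codewords of $\C$ vanishing at $t_1$ and codewords of $\C_{\{t_1\}}$ is applied correctly, and that transitivity of $\Aut(\cH_{(q,m)})$ truly distributes the incidence total $k\,A_k$ evenly over the $n$ coordinates, since it is this even distribution, rather than any direct manipulation of the complicated closed form in Lemma~\ref{lem-hamweight}, that makes the weight distribution of the shortened code fall out so cleanly.
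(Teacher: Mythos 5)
Your proof is correct, but there is nothing in the paper to compare it against: the paper does not prove Lemma \ref{lem-shorham} at all --- the statement is imported verbatim from \cite{LDT20}, so your argument serves as a self-contained replacement rather than an alternative to an internal proof. It is also built entirely from tools the paper already has on hand, which makes it fit well: the duality identities $(\C^\perp)^T=(\C_T)^\perp$ and $(\C^\perp)_T=(\C^T)^\perp$ from Section \ref{sec-prelim} correctly identify the second and fourth codes as the punctured Simplex code $(\Sim_{(q,m)})^{\{t_1\}}$ and the punctured Hamming code $(\cH_{(q,m)})^{\{t_1\}}$; the dimension bookkeeping is right (neither code has a zero coordinate since the dual distances are at least $2$, so shortening drops the dimension by exactly one, and puncturing is injective since both codes have minimum distance at least $2$); the central $\frac{n-k}{n}$ scaling is exactly what Theorem \ref{thm-designCodeAutm} with $t=1$ delivers, given the standard fact that the permutation parts of $\MAut(\cH_{(q,m)})$ contain the $2$-transitive action of $\PGL(m,q)$ on the points of $\PG(m-1,\gf(q))$; the counts $(q-1)q^{m-1}$ and $q^{m-1}-1$ for the Simplex code, the single MacWilliams transform giving the fourth enumerator, and the minimum-distance arguments (where the hypothesis $n\geq 4$ enters precisely through $\frac{n-3}{n}A_3>0$) all check out.

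One small repair is needed. Your parenthetical claim that one could ``equivalently use that the weight-$k$ codewords hold a $1$-design, as already invoked in the proof of Theorem \ref{thm-hamloc}'' does not actually cover all weights: that proof obtains designs from the Assmus--Mattson theorem, which for $q>2$ only guarantees the design property for weights $i$ with $d\leq i\leq w$, where the threshold $w$ defined in Theorem \ref{thm-designAMtheorem} is small compared with $n$, and the paper invokes it only for minimum-weight codewords. Your formula $A_k((\cH_{(q,m)})_{\{t_1\}})=\frac{n-k}{n}A_k(\cH_{(q,m)})$ requires equidistribution of supports over coordinates at \emph{every} weight $k$, and it is your primary argument --- transitivity of the automorphism group, acting directly on codewords --- that delivers this for all $k$. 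The proof stands as written with the transitivity argument; just delete or qualify the parenthetical.
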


With the Assmus-Mattson Theorem, we can deduce that the codewords of minimum weight in these codes in Lemma \ref{lem-shorham} hold a 1-design. Then by Corollary \ref{cor-newn}, we can settle the minimum linear locality of these codes in Lemma \ref{lem-shorham}.

\begin{theorem}\label{thm-hmcodeps}
Let $n=(q^m-1)/(q-1)\geq 4$, and let $t_1$ be any coordinate position of codewords in $\cH_{(q,m)}$. Then
we have the following.
\begin{itemize}
  \item $(\cH_{(q,m)})_{\{t_1\}}$ is a $k$-optimal $(n-1,n-m-1,3,q;q^{m-1}-2)$-LLRC.
  \item $((\cH_{(q,m)})_{\{t_1\}})^\perp$ is a $k$-optimal $(n-1,m,q^{m-1}-1,q;2)$-LLRC.
  \item $(\Sim_{(q,m)})_{\{t_1\}}$ is a $k$-optimal  $(n-1,m-1,q^{m-1},q;1)$-LLRC.
  \item $(\Sim_{(q,m)})_{\{t_1\}})^\perp$ is a $k$-optimal  $(n-1,n-m,2,q;q^{m-1}-1)$-LLRC.
\end{itemize}
\end{theorem}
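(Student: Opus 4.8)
The plan is to treat the four codes as two dual pairs and to run the same two-stage argument on each: first pin down the minimum linear locality, then establish dimension-optimality. Write $\C_1=(\cH_{(q,m)})_{\{t_1\}}$ and $\C_2=(\Sim_{(q,m)})_{\{t_1\}}$, so the four codes are exactly $\C_1$, $\C_1^\perp$, $\C_2$, $\C_2^\perp$, and every weight enumerator I need is already recorded in Lemma \ref{lem-shorham}. The key structural facts are that $\C_1$ has $d=3$ while $\C_1^\perp$ carries only the two nonzero weights $q^{m-1}-1$ and $q^{m-1}$, and that $\C_2^\perp$ has $d=2$ while $\C_2$ is a one-weight code (weight $q^{m-1}$ only).

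First I would produce the required $1$-designs via the Assmus--Mattson Theorem (Theorem \ref{thm-designAMtheorem}) with $t=1$. For the pair $(\C_1,\C_1^\perp)$ I apply the theorem to $\C_1$, where $d=3$: since $\C_1^\perp$ has at most two nonzero weights in the range $1\le i\le n-2$, the count satisfies $s\le 2=d-t$, so the hypothesis holds and both the weight-$3$ codewords of $\C_1$ and the minimum-weight codewords of $\C_1^\perp$ support $1$-designs. For the pair $(\C_2,\C_2^\perp)$ I apply the theorem to $\C_2^\perp$, where $d=2$: because $\C_2$ is one-weight, one gets $s\le 1=d-t$, so the minimum-weight codewords of both $\C_2^\perp$ and $\C_2$ support $1$-designs. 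In each case one must confirm the admissible weight ranges $d\le i\le w$ and $d^\perp\le i\le\min\{n-1-t,w^\perp\}$ contain the minimum weight in question; this is routine since all the distances are small relative to the length, with the low-dimensional cases at $m=2$ checked directly (there $\C_2$ is one-dimensional and its minimum-weight words have full support, so the covering needed below is immediate).

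With the $1$-designs in hand, Corollary \ref{cor-newn} instantly gives each minimum linear locality as $d(\,\cdot^\perp)-1$: this is $q^{m-1}-2$ for $\C_1$, $2$ for $\C_1^\perp$, $1$ for $\C_2$, and $q^{m-1}-1$ for $\C_2^\perp$, matching the four claimed values of $r$. This settles the first halves of all four bullets.

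The remaining and principal work is dimension-optimality, obtained by specializing the CM bound \eqref{cm-bound} at $t=1$ and inserting the sharpest classical bound on $k_{opt}^{(q)}$ for each code, exactly as in the proof of Theorem \ref{thm-hamloc}. For $\C_1$ the residual length is $n-q^{m-1}=(q^{m-1}-1)/(q-1)$, the length of $\cH_{(q,m-1)}$, and the sphere-packing bound forces $k_{opt}^{(q)}(n-q^{m-1},3)\le n-q^{m-1}-(m-1)$, collapsing the bound to $n-m-1$. For $\C_1^\perp$ and $\C_2$ the Plotkin bound applies, since the relative distance exceeds $1-1/q$, yielding $k_{opt}^{(q)}(n-4,q^{m-1}-1)\le m-2$ and $k_{opt}^{(q)}(n-3,q^{m-1})\le m-2$ (the latter being the very bound already used in Theorem \ref{thm-hamloc}), so the CM bound returns $m$ and $m-1$. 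The hard part will be $\C_2^\perp$, where $d=2$: here $k_{opt}^{(q)}(\,\cdot\,,2)=(\cdot)-1$, and because the locality $r=q^{m-1}-1$ makes $r+1=q^{m-1}$ nearly exhaust the length, one must argue both that larger values of $t$ are infeasible and that the $t=1$ term reproduces the dimension $n-m$; verifying this exact match is the step I expect to demand the most care. Assembling the four CM-bound computations with the locality values from Corollary \ref{cor-newn} completes the proof.
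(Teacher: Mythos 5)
Your locality arguments follow exactly the paper's route (Assmus--Mattson with $t=1$, then Corollary \ref{cor-newn}), and your CM-bound computations for the first three codes are correct; they in fact supply the details the paper omits. For $(\cH_{(q,m)})_{\{t_1\}}$ the residual length at $t=1$ is $n-q^{m-1}=(q^{m-1}-1)/(q-1)$ and sphere packing gives $k_{opt}^{(q)}(n-q^{m-1},3)\le n-q^{m-1}-(m-1)$, collapsing the bound to $n-m-1$; for $((\cH_{(q,m)})_{\{t_1\}})^\perp$ and $(\Sim_{(q,m)})_{\{t_1\}}$ the Plotkin bound indeed yields $k_{opt}^{(q)}(n-4,q^{m-1}-1)\le m-2$ and $k_{opt}^{(q)}(n-3,q^{m-1})\le m-2$, so those three codes meet the CM bound.

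There is, however, a genuine gap at exactly the step you defer, namely the fourth code $((\Sim_{(q,m)})_{\{t_1\}})^\perp$. You assert that ``the $t=1$ term reproduces the dimension $n-m$.'' It does not, except when $m=3$. Since $k_{opt}^{(q)}(n',2)=n'-1$ exactly (the sum-zero code attains the Singleton bound), the $t=1$ term of the CM bound is
\[
(q^{m-1}-1)+k_{opt}^{(q)}\left((n-1)-q^{m-1},\,2\right)=(q^{m-1}-1)+\frac{q^{m-1}-q}{q-1}-1=n-3,
\]
an exact evaluation, not an upper estimate that a cleverer classical bound could sharpen. Moreover every $t\ge 2$ gives residual length $(n-1)-tq^{m-1}<0$ (because $(n-1)-q^{m-1}=\frac{q^{m-1}-q}{q-1}<q^{m-1}$), so under any convention those terms are at least $2(q^{m-1}-1)>n-3$. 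Hence the CM bound for this code equals $n-3$, while its dimension is $n-m$; the two agree only for $m=3$, which is precisely the case verified in Corollary \ref{cor-21Janu213}. For $m\ge 4$ the bound is simply not attained, so the verification you postpone cannot be completed by any argument, and your proof of the fourth bullet (like the paper's own one-line appeal to ``similar to Theorem \ref{thm-hamloc}'', whose sphere-packing step has no analogue when $d=2$) does not go through. The honest fix is to restrict the $k$-optimality claim in the fourth bullet to $m=3$; the locality value $r=q^{m-1}-1$ there remains correct for all $m$.
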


\begin{proof}
The conclusions on the parameters of the codes follow from Lemma \ref{lem-shorham}, the Assmus-Mattson Theorem and Corollary \ref{cor-newn}. The proofs of the dimension optimality of $(\cH_{(q,m)})_{\{t_1\}}$ and $(\Sim_{(q,m)})_{\{t_1\}})^\perp$ are similar to that in Theorem \ref{thm-hamloc}, and are
omitted.
\end{proof}

Furthermore, we can obtain the following $d$-optimal LLRCs.

\begin{corollary}\label{cor-21Janu213}
The code $(\cH_{(q,3)})_{\{t_1\}}$ is a $(q^2+q,q^2+q-3,3,q; q^2-2)$-LLRC and the code
$((\Sim_{(q,3)})_{\{t_1\}})^\perp$ is a $(q^2+q,q^2+q-2,2, q; q^2-1)$-LLRC. Furthermore, they are both $d$-optimal and $k$-optimal.
\end{corollary}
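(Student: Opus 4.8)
The plan is to observe that the stated parameters and the dimension-optimality of both codes are immediate specializations of Theorem \ref{thm-hmcodeps}, so the only genuinely new content is the distance-optimality, which I would establish by a direct substitution into the Singleton-like bound \eqref{eq-bound}.

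First I would set $m=3$ in Theorem \ref{thm-hmcodeps}. Then $n=(q^3-1)/(q-1)=q^2+q+1$, so that $n-1=q^2+q$, $q^{m-1}-2=q^2-2$ and $q^{m-1}-1=q^2-1$. With these substitutions the first two bullets of Theorem \ref{thm-hmcodeps} read exactly that $(\cH_{(q,3)})_{\{t_1\}}$ is a $k$-optimal $(q^2+q,q^2+q-3,3,q;q^2-2)$-LLRC and that $((\Sim_{(q,3)})_{\{t_1\}})^\perp$ is a $k$-optimal $(q^2+q,q^2+q-2,2,q;q^2-1)$-LLRC. This yields the claimed parameters together with the $k$-optimality part of the assertion for free, so nothing further need be argued for those.

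Next I would verify the distance-optimality of each code separately by plugging its parameters into the right-hand side of \eqref{eq-bound} and checking that equality holds. For $(\cH_{(q,3)})_{\{t_1\}}$, with $n=q^2+q$, $k=q^2+q-3$, $r=q^2-2$, the right-hand side becomes $5-\left\lceil (q^2+q-3)/(q^2-2)\right\rceil$. The crux is the ceiling term: a short argument shows $1<(q^2+q-3)/(q^2-2)<2$ for every prime power $q\geq 2$ (the left inequality reduces to $q>1$, the right to $q^2-q-1>0$), so the ceiling equals $2$ and the bound evaluates to $3=d$. Similarly, for $((\Sim_{(q,3)})_{\{t_1\}})^\perp$, with $n=q^2+q$, $k=q^2+q-2$, $r=q^2-1$, the right-hand side is $4-\left\lceil (q^2+q-2)/(q^2-1)\right\rceil$; here the quotient simplifies as $(q^2+q-2)/(q^2-1)=(q+2)/(q+1)$, which lies strictly between $1$ and $2$, so again the ceiling is $2$ and the bound gives $2=d$. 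Both codes therefore meet the Singleton-like bound with equality and are $d$-optimal.

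I do not expect any substantial obstacle: once $m=3$ is substituted, the proof is essentially arithmetic, and the only point requiring a line of care is the evaluation of the two ceiling terms $\left\lceil k/r\right\rceil$, both of which turn out to equal $2$. The factorization $(q^2+q-2)/(q^2-1)=(q+2)/(q+1)$ makes the second ceiling transparent, while the first needs the two elementary inequalities noted above to pin the quotient strictly between $1$ and $2$.
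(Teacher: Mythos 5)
Your proposal is correct and takes essentially the same route as the paper: the parameters and $k$-optimality are inherited from Theorem \ref{thm-hmcodeps} with $m=3$, and the $d$-optimality is checked directly against the Singleton-like bound \eqref{eq-bound}. The paper merely asserts this check "can be easily verified," whereas you carry out the arithmetic explicitly, correctly showing both ceiling terms $\left\lceil k/r\right\rceil$ equal $2$ so that the bound evaluates to $3$ and $2$ respectively.
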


\begin{proof}
The parameters of the codes follow from Theorem \ref{thm-hmcodeps}. The $d$-optimality and $k$-opmality of the codes
are with respect to the Singleton-like bound and CM bound and can be easily verified.
\end{proof}

\subsection{The minimum linear locality of the generalised Reed-Muller codes over $\gf(q)$}

The general affine group $\GA_1(\gf(q^m))$ is defined by
$$
\GA_1(\gf(q^m))=\{ax+b: a \in \gf(q^m)^*, \ b \in \gf(q^m) \},
$$
which acts on $\gf(q^m)$ doubly transitively \cite[Section 1.7]{Dingbook18}.

We can index the coordinates of a linear code of length $q^m$ with the elements of $\gf(q^m)$. When each permutation
in $\GA_1(\gf(q^m))$ is applied to a codeword, it is applied to the indices of the coordinates.  A linear code $\C$
of length $q^m$ is said to be affine-invariant if $\GA_1(\gf(q))$ fixes $\C$. It follows from Theorem
\ref{thm-designCodeAutm} that affine-invariant codes supports $2$-designs. By Corollary \ref{cor-newtrans},
all affine-invariant codes $\C$ have minimum locality $d(\C^\perp)-1$. There are many infinite families of affine-invariant
codes \cite[Chapter 6]{Dingbook18}. Our objective in this section is to study the minimum linear locality of the generalised Reed-Muller
codes and obtain a class of either $k$-optimal or almost $k$-optimal LLRCs.

For any integer $j=\sum_{i=0}^{m-1}j_iq^i$, where $0 \leq j_i
\leq q-1$ for all $0 \leq i \leq m-1$ and $m$ is a positive integer, we define
\begin{eqnarray}\label{eqn-qweight}
\wt_q(j)=\sum_{i=0}^{m-1} j_i,
\end{eqnarray}
where the sum is taken over the ring of integers, and is called the $q$-weight of $j$.
Let $\ell$ be a positive integer with $1 \leq \ell <(q-1)m$. The $\ell$-th order
\emph{punctured generalized Reed-Muller code}\index{punctured generalized Reed-Muller code}
$\cR_q(\ell, m)^*$ over $\gf(q)$ is the cyclic code of length $n=q^m-1$ with generator polynomial
\begin{eqnarray}\label{eqn-generatorpolyPGRMcode}
g(x) = \prod_{\myatop{1 \leq j \leq n-1}{ \wt_q(j) < (q-1)m-\ell}} (x - \alpha^j),
\end{eqnarray}
where $\alpha$ is a generator of $\gf(q^m)^*$. Since $\wt_q(j)$ is a constant function on
each $q$-cyclotomic coset modulo $n=q^m-1$, $g(x)$ is a polynomial over $\gf(q)$.

The generalized Reed-Muller code $\cR_q(\ell, m)$ is defined to be the extended code
of $\cR_q(\ell, m)^*$, and its parameters are given below \cite{AK98}.

\begin{theorem}\label{thm-GRMcodePara}
Let $0 \leq \ell < q(m-1)$. Then the generalized Reed-Muller code $\cR_q(\ell, m)$ has length $n=q^m$, dimension
\begin{eqnarray}\label{eqn-needed181}
\kappa=\sum_{i=0}^\ell \sum_{j=0}^{m} (-1)^j \binom{m}{j} \binom{i-jq+m-1}{i-jq},
\end{eqnarray}
and minimum weight
\begin{eqnarray}\label{eqn-needed182}
d = (q-\ell_0)q^{m-\ell_1-1},
\end{eqnarray}
where $\ell=\ell_1 (q-1)+\ell_0$ and
$0 \leq \ell_0 < q-1$.
\end{theorem}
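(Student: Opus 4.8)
The plan is to pass from the cyclic description of $\cR_q(\ell,m)^*$ through its defining set to the classical polynomial-evaluation model of generalized Reed--Muller codes, and then read off the three parameters from that model; the length and dimension then reduce to bookkeeping, while the minimum weight is the genuinely hard part. First I would set up the functional model. Identifying $\gf(q^m)$ with $\gf(q)^m$, a monomial $x^j$ (viewed as a function on $\gf(q^m)$) corresponds to a reduced monomial $x_1^{a_1}\cdots x_m^{a_m}$ of total degree $\wt_q(j)$, where the $a_i$ are the base-$q$ digits of $j$. Using the digit-complement identity $\wt_q(j)+\wt_q(q^m-1-j)=(q-1)m$ for $1\le j\le q^m-2$, the condition $\wt_q(j)<(q-1)m-\ell$ defining the roots of $g(x)$ in \eqref{eqn-generatorpolyPGRMcode} transforms, under $j\mapsto q^m-1-j$, into the total-degree bound $a_1+\cdots+a_m\le\ell$ on the nonzeros of the code. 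Hence $\cR_q(\ell,m)^*$ is the evaluation at the $q^m-1$ nonzero points of all reduced polynomials of total degree at most $\ell$, and adjoining the evaluation at the origin (the extension) gives
$$
\cR_q(\ell,m)=\{(f(\bv))_{\bv\in\gf(q)^m}:\ f\in\gf(q)[x_1,\dots,x_m],\ \deg_{x_i}f\le q-1,\ \deg f\le\ell\}.
$$
In particular the length is $q^m$, which settles the first claim.

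For the dimension I would observe that the evaluation map from the space of reduced polynomials (those with $\deg_{x_i}f\le q-1$ in each variable) to $\gf(q)^{q^m}$ is injective, because a reduced polynomial vanishing at every point of $\gf(q)^m$ is identically zero. Therefore $\kappa$ equals the number of exponent vectors $(a_1,\dots,a_m)$ with $0\le a_i\le q-1$ and $a_1+\cdots+a_m\le\ell$. The number of these with $a_1+\cdots+a_m=i$ is, by inclusion--exclusion over the number of coordinates that exceed $q-1$, equal to $\sum_{j=0}^m(-1)^j\binom{m}{j}\binom{i-jq+m-1}{i-jq}$, and summing over $0\le i\le\ell$ produces exactly \eqref{eqn-needed181}.

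For the minimum weight, the upper bound is an explicit construction: writing $\ell=\ell_1(q-1)+\ell_0$ with $0\le\ell_0<q-1$, fixing arbitrary $c_1,\dots,c_{\ell_1}\in\gf(q)$, and choosing $B\subseteq\gf(q)$ with $|B|=\ell_0$, consider
$$
f=\prod_{i=1}^{\ell_1}\ \prod_{a\in\gf(q)\setminus\{c_i\}}(x_i-a)\ \cdot\ \prod_{b\in B}(x_{\ell_1+1}-b).
$$
Its total degree is $\ell_1(q-1)+\ell_0=\ell$; each of the first $\ell_1$ factors is nonzero at exactly one value of its variable, the last factor is nonzero at $q-\ell_0$ values, and the remaining $m-\ell_1-1$ variables are unconstrained, so $\support(f)$ has size $(q-\ell_0)q^{m-\ell_1-1}$. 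The hard part is the matching lower bound, namely that no nonzero codeword has smaller weight; for this I would invoke the classical minimum-weight theorem for generalized Reed--Muller codes \cite{AK98}. Its proof proceeds by induction on $m$: restricting a total-degree-$\ell$ polynomial to the $q$ parallel affine hyperplanes in a suitable direction yields restrictions of degree at most $\ell$ and at most $\ell-1$, to which the inductive weight bound applies, and combining the resulting estimates recovers $(q-\ell_0)q^{m-\ell_1-1}$. This inductive weight estimate, rather than the length or dimension count, is where essentially all of the difficulty lies.
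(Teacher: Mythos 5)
The paper offers no proof of this statement: Theorem \ref{thm-GRMcodePara} is quoted directly from Assmus--Key \cite{AK98}, so there is no internal argument to compare yours against. Your proposal is a correct sketch of the standard argument, and the part where you do real work is precisely what a careful author would need in order to justify the citation, namely the translation between the paper's definition of $\cR_q(\ell,m)$ (extension of a cyclic code specified by $q$-weights of exponents) and the multivariate evaluation model in which \cite{AK98} states the parameters. Your digit-complement identity $\wt_q(j)+\wt_q(q^m-1-j)=(q-1)m$ is the right mechanism, with two points stated loosely: first, a univariate monomial $x^j$ corresponds under the identification $\gf(q^m)\cong\gf(q)^m$ to a reduced multivariate polynomial (not in general a single monomial) of total degree $\wt_q(j)$, the two spans agreeing degree by degree; second, identifying the extension (overall parity check) with evaluation at the origin requires $\sum_{v\in\gf(q)^m}f(v)=0$ for all $f$ with $\deg f\le\ell$, which holds exactly because $\ell<m(q-1)$ excludes the monomial $x_1^{q-1}\cdots x_m^{q-1}$ (the theorem's hypothesis ``$\ell<q(m-1)$'' is evidently a typo for this condition, which your argument implicitly uses). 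The dimension count by inclusion-exclusion and the explicit minimum-weight codeword of weight $(q-\ell_0)q^{m-\ell_1-1}$ are both correct. The one caveat is that for the minimum-distance lower bound you invoke the classical GRM minimum-weight theorem of \cite{AK98} --- the very source the paper cites for the entire statement --- so your proposal is a proof of the reduction to that classical result plus an upper-bound construction, not a self-contained proof. That is still strictly more than the paper provides, and the hyperplane-restriction induction you sketch is indeed the standard route if one wanted to close that last step.
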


The minimum linear locality of the generalized Reed-Muller code $\cR_q(\ell, m)$ is given in the following theorem.

\begin{theorem}\label{thm-rmlocalitymain}
Let $0 \leq \ell < q(m-1)$ and $m(q-1)-1-\ell=\ell'_1(q-1)+\ell'_0$ with $0 \leq \ell'_0 < q-1$.
The generalized Reed-Muller code $\cR_q(\ell, m)$ is a $[q^m, \kappa, d,  q; (q-\ell'_0)q^{m-\ell'_1-1}-1 ]$ LLRC, where $\kappa$ and $d$ were
given in (\ref{eqn-needed181}) and  (\ref{eqn-needed182}), respectively.
\end{theorem}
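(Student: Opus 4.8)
The plan is to reduce the computation of the minimum linear locality to the determination of $d(\cR_q(\ell,m)^\perp)$, and then to identify this dual code explicitly. Since $\cR_q(\ell,m)$ is the extended code of the cyclic code $\cR_q(\ell,m)^*$, it is affine-invariant: the one-dimensional affine group $\GA_1(\gf(q^m))$, acting on the $q^m$ coordinates indexed by $\gf(q^m)$, fixes the code. As noted in the discussion preceding Theorem \ref{thm-GRMcodePara}, this group acts doubly transitively, so $\Aut(\cR_q(\ell,m))$ is in particular transitive. Hence, once I verify that $\cR_q(\ell,m)$ is nontrivial, Corollary \ref{cor-newtrans} applies directly and guarantees that the minimum linear locality equals $d(\cR_q(\ell,m)^\perp)-1$.

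First I would invoke the well-known duality for generalized Reed-Muller codes, namely $\cR_q(\ell,m)^\perp = \cR_q(m(q-1)-1-\ell,\,m)$. Setting $\ell' := m(q-1)-1-\ell$ and writing $\ell' = \ell'_1(q-1)+\ell'_0$ with $0 \le \ell'_0 < q-1$ exactly as in the statement, I would then apply the minimum-weight formula (\ref{eqn-needed182}) of Theorem \ref{thm-GRMcodePara} to the code $\cR_q(\ell',m)$. This yields $d(\cR_q(\ell,m)^\perp) = (q-\ell'_0)q^{m-\ell'_1-1}$. Combining this value with the transitivity argument above produces the claimed minimum linear locality $(q-\ell'_0)q^{m-\ell'_1-1}-1$, while the length $q^m$, the dimension $\kappa$, and the distance $d$ are read off verbatim from Theorem \ref{thm-GRMcodePara}.

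The main obstacle will be the arithmetic bookkeeping that makes these two ingredients legitimate, rather than the locality argument itself, which is immediate from Corollary \ref{cor-newtrans}. I must check that the dual parameter $\ell' = m(q-1)-1-\ell$ lies in the range for which formula (\ref{eqn-needed182}) is valid; in particular I would confirm $\ell'_1 \le m-1$ so that the exponent $m-\ell'_1-1$ is nonnegative, which follows from $\ell' < m(q-1)$ together with $\ell'_0 < q-1$. I would likewise confirm nontriviality, i.e. $d(\cR_q(\ell,m)) \ge 2$ and $d(\cR_q(\ell,m)^\perp) \ge 2$: the latter is immediate since $q-\ell'_0 \ge 2$ and $q^{m-\ell'_1-1} \ge 1$, and the former follows analogously from formula (\ref{eqn-needed182}) applied to $\cR_q(\ell,m)$ within the hypothesized range of $\ell$. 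Everything else is a direct substitution into the already-established Corollary \ref{cor-newtrans}.
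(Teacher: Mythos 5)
Your proof follows exactly the same route as the paper's: affine-invariance makes $\Aut(\cR_q(\ell,m))$ transitive, so Corollary \ref{cor-newtrans} gives minimum linear locality $d(\cR_q(\ell,m)^\perp)-1$, and the duality $\cR_q(\ell,m)^\perp = \cR_q(m(q-1)-1-\ell,m)$ combined with the weight formula (\ref{eqn-needed182}) yields the stated value $(q-\ell'_0)q^{m-\ell'_1-1}-1$. The only difference is that you explicitly verify nontriviality and the admissibility of the dual parameter $\ell'$, checks the paper leaves implicit — a minor but harmless addition.
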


\begin{proof}
The dimension and minimum weight of the code were given in Theorem \ref{thm-GRMcodePara}. We only prove its
minimum linear locality.
It is well known that the generalised Reed-Muller code $\cR_q(\ell, m)$ is affine-invariant \cite[Chapter 6]{Dingbook18}.
By Corollary  \ref{cor-newtrans}, the code has minimum duality $d(\cR_q(\ell, m)^\perp)-1$.
It was proved in \cite{AK98} that
\begin{eqnarray}\label{eqn-mnjan21}
\cR_q(\ell, m)^\perp = \cR_q(m(q-1)-1-\ell, m).
\end{eqnarray}
It then follows from Theorem \ref{thm-GRMcodePara} that  the minimum linear locality is
$$
r=d(\cR_q(\ell, m)^\perp)-1=d(\cR_q(m(q-1)-1-\ell, m))-1=(q-\ell'_0)q^{m-\ell'_1-1}-1.
$$
This completes the proof.
\end{proof}

\begin{corollary}\label{cor-2021J214}
Let $q>2$. Then
$\cR_q(1,m)$ is a $(q^m,1+m, (q-1)q^{m-1}, q; 2)$-LLRC and its dual $\cR_q(1,m)^\perp$ is a
$(q^m,q^m-1-m, 3,q; (q-1)q^{m-1}-1)$-LLRC. Both codes are
$k$-optimal.
\end{corollary}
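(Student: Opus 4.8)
The plan is to apply Theorem~\ref{thm-rmlocalitymain} in the special case $\ell=1$ to obtain both the parameters and the minimum linear locality of $\cR_q(1,m)$ and its dual, and then to verify $k$-optimality separately for each code against the CM bound in~\eqref{cm-bound}, imitating the computation already carried out in the proof of Theorem~\ref{thm-hamloc}. First I would substitute $\ell=1$ into Theorem~\ref{thm-GRMcodePara}: with $\ell=1=\ell_1(q-1)+\ell_0$ we have $\ell_1=0$ and $\ell_0=1$ (using $q>2$, so that $0\le\ell_0<q-1$ is satisfied), giving minimum distance $d=(q-1)q^{m-1}$; the dimension formula~\eqref{eqn-needed181} collapses to $\kappa=1+m$, since $\cR_q(1,m)$ is spanned by the all-one vector together with the $m$ coordinate functions. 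For the locality, I would evaluate the quantity $m(q-1)-1-\ell=m(q-1)-2=\ell'_1(q-1)+\ell'_0$; here $\ell'_1=m-1$ and $\ell'_0=q-3$ (again using $q>2$ to keep $0\le\ell'_0<q-1$), so Theorem~\ref{thm-rmlocalitymain} yields linear locality $(q-\ell'_0)q^{m-\ell'_1-1}-1=(q-(q-3))q^{m-(m-1)-1}-1=3\cdot q^0-1=2$.

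Next I would identify the dual code via~\eqref{eqn-mnjan21}: $\cR_q(1,m)^\perp=\cR_q(m(q-1)-2,m)$. Its minimum distance is exactly the locality parameter just computed for $\cR_q(1,m)$, namely $3$, and its dimension is $q^m-(1+m)=q^m-1-m$. Its own linear locality is $d(\cR_q(1,m))-1=(q-1)q^{m-1}-1$ by the symmetric statement in Corollary~\ref{cor-newtrans} (affine-invariance of the generalized Reed-Muller codes gives transitivity of the automorphism group). This produces the two claimed parameter strings $(q^m,1+m,(q-1)q^{m-1},q;2)$ and $(q^m,q^m-1-m,3,q;(q-1)q^{m-1}-1)$.

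It then remains to establish $k$-optimality of both codes. For $\cR_q(1,m)^\perp$, which has length $N=q^m$, distance $3$ and locality $r=(q-1)q^{m-1}-1$, I would put $t=1$ in~\eqref{cm-bound} and bound
\begin{eqnarray*}
k &\le& r+k_{opt}^{(q)}(N-(r+1),3) \\
  &=& (q-1)q^{m-1}-1+k_{opt}^{(q)}(q^m-(q-1)q^{m-1},3) \\
  &=& (q-1)q^{m-1}-1+k_{opt}^{(q)}(q^{m-1},3),
\end{eqnarray*}
and then apply the sphere-packing (Hamming) bound to get $k_{opt}^{(q)}(q^{m-1},3)\le q^{m-1}-m$, exactly as in the Hamming-code computation of Theorem~\ref{thm-hamloc}; summing gives $k\le q^m-1-m$, which matches the dimension. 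For $\cR_q(1,m)$, with distance $(q-1)q^{m-1}$ and locality $r=2$, I would again take $t=1$ and bound $k\le 2+k_{opt}^{(q)}(q^m-3,(q-1)q^{m-1})$, then invoke the Plotkin bound to force $k_{opt}^{(q)}(q^m-3,(q-1)q^{m-1})\le m-1$, yielding $k\le m+1$, which matches; this parallels the Simplex-code argument in Theorem~\ref{thm-hamloc}.

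The routine parts are the arithmetic of the floor-type decompositions and the dimension collapse, all of which are mechanical. The main obstacle I anticipate is pinning down the two auxiliary coding bounds cleanly: I must confirm that the sphere-packing bound genuinely yields $k_{opt}^{(q)}(q^{m-1},3)\le q^{m-1}-m$ and that the Plotkin bound yields the sharp value $m-1$ for a code of length $q^m-3$ and the large distance $(q-1)q^{m-1}$, rather than a weaker estimate that would fail to meet the CM bound with equality. Getting these inequalities tight enough to conclude \emph{equality} in~\eqref{cm-bound} is the delicate step; once the two bounds are verified to be sharp, $k$-optimality of both codes follows immediately by comparison with the dimensions $1+m$ and $q^m-1-m$.
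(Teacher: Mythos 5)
Your proposal is correct and follows essentially the same route as the paper: read off the parameters and localities from Theorem~\ref{thm-rmlocalitymain} (equivalently, from affine-invariance and Corollary~\ref{cor-newtrans} via the duality $\cR_q(1,m)^\perp=\cR_q(m(q-1)-2,m)$), then verify the CM bound with $t=1$ using the sphere-packing bound for the dual and the Plotkin bound for $\cR_q(1,m)$. One point in your favor: your sphere-packing estimate $k_{opt}^{(q)}(q^{m-1},3)\le q^{m-1}-m$ is the correct one and is exactly what is needed to reach $k\le q^m-m-1$, whereas the paper's text states $k_{opt}^{(q)}(q^{m-1},3)\le q^{m-1}-m+1$ (an off-by-one slip, apparently copied from the Hamming-code case where the length is $(q^{m-1}-1)/(q-1)$ rather than $q^{m-1}$), which as written would only give $k\le q^m-m$ and would not close the argument.
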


\begin{proof}
The parameters of the two codes were given in Theorem \ref{thm-rmlocalitymain}.
Putting the parameters of $\R_q(1,m)^\perp$ into the right-hand side of the CM bound in \eqref{cm-bound}, we have
\begin{align*}
  k & \leq \min_{s\in\Z_+}\{rs+k_{opt}^{(q)}(n-(r+1)s,d)\}\\
          & \leq r+k_{opt}^{(q)}(n-(r+1),d) \\
          & = (q-1)q^{m-1}-1+k_{opt}^{(q)}(q^{m-1},3)\\
          & \leq q^m-m-1,
\end{align*}
where the last inequality holds due to the fact that $k_{opt}^{(q)}(q^{m-1},3)\leq q^{m-1}-m+1$, which follows from the Sphere packing bound. Therefore, the $\R_q(1,m)^\perp$ is $k$-optimal.

Taking the parameters of $\cR_q(1,m)$ on the right-hand side of the CM bound in \eqref{cm-bound}, we have
                    \begin{align*}
                          k & \leq \min_{s\in\Z_+}\{rs+k_{opt}^{(q)}(n-(r+1)s,d)\}\\
                             & \leq r+k_{opt}^{(q)}(n-(r+1),d) \\
                             & = 2+k_{opt}^{(q)}(q^m-3,(q-1)q^{m-1})\\
                             & \leq m+1,
                    \end{align*}
                  where the last inequality holds due to the fact that $k_{opt}^{(q)}(q^m-3,(q-1)q^{m-1})\leq m-1$, which follows from the Plotkin bound. Therefore, $\cR_q(1,m)$ is $k$-optimal. This completes the proof.
\end{proof}

In this section, we found two classes of affine-invariant codes $\cR_1(1,m)$ and $\cR_q(1,m)^\perp$, which are $k$-optimal. It would be nice if other classes of $d$-optimal or $k$-optimal affine-invariant codes could be found. Notice that
many classes of affine-invariant codes are known in the literature.

\subsection{The minimum linear locality of ovoid codes}

In the projective space PG$(3,\gf(q))$ with $q >2$, an ovoid $\V$ is a set of $q^2+1$ points such that
no three of them are collinear (i.e., on the same line). In other words, an ovoid is a $(q^2+1)$-cap
(a cap with $q^2+1$ points) in PG$(3,\gf(q))$, and thus a maximal cap. Two ovoids are said to be
equivalent if there is a collineation (i.e., automorphism) of PG$(3,\gf(q))$ that sends one to the
other.

A classical ovoid $\V$ can be defined as the set of all points given by
\[
\V = \{(0,0,1,0)\}\cup\{(x, y, x^2+xy+ay^2, 1) : x, y\in\gf(q)\}
\]
where $a\in\gf(q)$ is such that the polynomial $x^2+x+a$ has no root in $\gf(q)$. Such ovoid is
called an elliptic quadric, as the points come from a non-degenerate elliptic quadratic form.

For $q = 2^{2e+1}$ with $e\geq 1$, there is an ovoid which is not an elliptic quadric, and is called the
Tits ovoid. It is
defined by
\[
\T=\{(0,0,1,0)\}\cup\{(x,y,x^\sigma+xy+y^{\sigma+2},1): x,y\in\gf(q)\},
\]
where $\sigma=2^{e+1}$.

Let $\V$ be an ovoid in PG$(3,\gf(q))$ with $q > 2$. Denote
$
\V = \{\bv_1,\bv_2,\ldots,\bv_{q^2+1}\},
$
where each $\bv_i$ is a column vector in $\gf(q)^4$. Let $\C_\V$ be the linear code over $\gf(q)$ with generator
matrix
$
G_\V=[\bv_1\bv_2\cdots\bv_{q^2+1}].
$
It is known that $\C_\V$ is a $[q^2+1,4,q^2-q]$ code over $\gf(q)$ with weight enumerator
\[
1+(q^2-q)(q^2+1)z^{q^2-q}+(q-1)(q^2+1)z^{q^2}
\]
and its dual $\C_\V^\perp$ is a $[q^2+1,q^2-3,4]$ almost MDS code over $\gf(q)$ \cite[Chapter 13]{Dingbook18}. Conversely, the set of
column vectors of a generator matrix of any $[q^2+1,4,q^2-q]$ code over $\gf(q)$ is an ovoid in
PG$(3,\gf(q))$. Hence, ovoids in PG$(3,\gf(q))$ and $[q^2+1,4,q^2-q]$ codes over $\gf(q)$ are
equivalent in the sense that one can be used to construct the other, and a $[q^2+1,4,q^2-q]$ code over $\gf(q)$ is called an ovoid code over $\gf(q)$.

Ovoid codes are very interesting in combinatorics, as they support $3$-designs, which are documented below
\cite[Chapter 13]{Dingbook18}.

\begin{lemma}\label{lem-22}
The supports of all minimum weight codewords in an ovoid code form a 3-$(q^2+1,q^2-q,(q-2)(q^2-q-1))$ design and the supports of all codewords of weight 4 in the dual of the ovoid code form a 3-$(q^2+1,4,q-2)$ design.
\end{lemma}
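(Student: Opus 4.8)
The plan is to prove both design statements in Lemma~\ref{lem-22} by invoking the Assmus-Mattson Theorem (Theorem~\ref{thm-designAMtheorem}), since we already know the complete weight enumerators of both $\C_\V$ and $\C_\V^\perp$. First I would record the relevant parameters: the ovoid code $\C_\V$ is a $[q^2+1,4,q^2-q]$ code whose nonzero weights are only $q^2-q$ and $q^2$, while its dual $\C_\V^\perp$ is a $[q^2+1,q^2-3,4]$ code. To establish a $3$-design I set $t=3$, which requires $t<d$; here $d=q^2-q>3$ for $q>2$, so this holds. The key quantity is $s$, the number of nonzero weights $i$ of $\C_\V^\perp$ in the range $1\le i\le n-t=q^2-2$. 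Because $\C_\V^\perp$ is an almost MDS code of dimension $q^2-3$ over a length $q^2+1$, its nonzero weights range from $4$ up to $q^2+1$, and I would use the Assmus-Mattson side condition $s\le d-t$ to confirm the hypothesis is met. The main arithmetic obstacle is verifying that $s\le d-t=q^2-q-3$, i.e. that the number of distinct nonzero weights of $\C_\V^\perp$ in the truncated range is small enough.

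Next I would apply the two conclusions of the Assmus-Mattson Theorem directly. For the first design, the theorem gives that all codewords of weight $i$ in $\C_\V$ with $d\le i\le w$ support a $3$-design; since the only nonzero weights of $\C_\V$ are $q^2-q$ and $q^2$, and the minimum weight $q^2-q$ lies in the admissible range, the minimum-weight codewords support a $3$-$(q^2+1,q^2-q,\lambda)$ design. For the second design, the dual conclusion of the theorem gives that all codewords of weight $i$ in $\C_\V^\perp$ with $d^\perp\le i\le\min\{n-t,w^\perp\}$ support a $3$-design; taking $i=d^\perp=4$ yields that the weight-$4$ codewords of $\C_\V^\perp$ support a $3$-$(q^2+1,4,\lambda')$ design.

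The remaining task is to pin down the two index parameters $\lambda=(q-2)(q^2-q-1)$ and $\lambda'=q-2$. These are not given automatically by the Assmus-Mattson Theorem, which only asserts the $t$-design property. I would compute $\lambda$ and $\lambda'$ from the standard counting identity relating the number of blocks $b$ to $\lambda$ in a $t$-$(v,k,\lambda)$ design, namely
\begin{eqnarray*}
\lambda \binom{v}{t} = b \binom{k}{t},
\end{eqnarray*}
where $v=q^2+1$, $t=3$. For the dual design the number of blocks $b$ equals the number of weight-$4$ codewords divided by the number of codewords sharing a fixed support; since $\C_\V^\perp$ is an almost MDS code, each weight-$4$ support is held by exactly $q-1$ scalar multiples, so $b=A_4(\C_\V^\perp)/(q-1)$ with $k=4$, giving $\lambda'=q-2$. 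For the primal design I would use $A_{q^2-q}(\C_\V)=(q^2-q)(q^2+1)$ from the weight enumerator, divide by $q-1$ to count distinct supports, and solve for $\lambda$ via the same identity with $k=q^2-q$.

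The step I expect to be the main obstacle is the verification of the Assmus-Mattson side condition $s\le d-t$ for the dual code, since it demands controlling the number of distinct nonzero weights of the almost MDS code $\C_\V^\perp$ within the truncated range $1\le i\le q^2-2$ rather than over the full length. I would address this either by directly bounding the number of nonzero weight classes of an almost MDS code, or by citing the documented weight distribution of $\C_\V^\perp$ from \cite[Chapter 13]{Dingbook18}, from which $s$ can be read off explicitly. Once the side condition is confirmed, both design conclusions and their index values follow by the routine counting above.
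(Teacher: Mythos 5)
You have the right tool (the Assmus--Mattson Theorem~\ref{thm-designAMtheorem} plus support/block counting)---and in fact the paper gives no proof of Lemma~\ref{lem-22} at all, it simply cites \cite[Chapter 13]{Dingbook18}, where the result comes from exactly this kind of argument---but you apply the theorem in the wrong direction, and in your orientation its hypothesis is genuinely false, not merely hard to verify. You take $\C=\C_\V$, so $d=q^2-q$, $t=3$, and $s$ must count the nonzero weights of the almost MDS code $\C_\V^\perp$ in the range $[1,\,q^2-2]$; you then need $s\le d-t=q^2-q-3$. But $\C_\V^\perp$ has dimension $q^2-3$ and its nonzero weights fill out essentially all of $\{4,5,\ldots,q^2+1\}$. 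Already for $q=3$, the MacWilliams transform of the enumerator $1+60z^6+20z^9$ gives $A_4^\perp=60$, $A_5^\perp=144$, $A_6^\perp=60$, $A_7^\perp=240$, so $s=4>3=d-t$ and the hypothesis fails; for larger $q$ the candidate weights $4,5,\ldots,q^2-2$ number $q^2-5$, so at least $q-2$ of them would have to be absent, and there is no reason this ever happens. Thus the step you yourself flagged as ``the main obstacle'' is not a verification that will go through: citing the exact weight distribution of $\C_\V^\perp$ only makes the failure explicit.

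The repair is to swap the roles of the two codes, which is precisely why Assmus--Mattson works for this pair: take $\C=\C_\V^\perp$, so $d=4$ and $t=3<d$, and now $s$ counts the nonzero weights of the \emph{two-weight} ovoid code $\C_\V$ in $[1,\,n-t]=[1,\,q^2-2]$. Since $q^2>q^2-2$, only the weight $q^2-q$ survives, giving $s=1=d-t$, so the hypothesis holds (with equality). Checking $w\ge 4$ for $\C_\V^\perp$ and $w^\perp\ge q^2-q$ for $\C_\V$ is routine, and then the two bullets of Theorem~\ref{thm-designAMtheorem} simultaneously give that the weight-$4$ codewords of $\C_\V^\perp$ and the weight-$(q^2-q)$ codewords of $\C_\V$ support $3$-designs. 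Your second stage is essentially sound: minimum-weight codewords with equal support are proportional (otherwise a suitable linear combination would be a nonzero codeword of weight below the minimum distance), so the block counts are $A_{q^2-q}(\C_\V)/(q-1)=q(q^2+1)$ and $A_4(\C_\V^\perp)/(q-1)$, and the identity $\lambda\binom{q^2+1}{3}=b\binom{k}{3}$ indeed yields $\lambda=(q-2)(q^2-q-1)$ for the first design. Note, however, that extracting $\lambda'=q-2$ this way still requires the value of $A_4(\C_\V^\perp)$, which is not among the data you listed and must be computed from the MacWilliams identity or cited from \cite[Chapter 13]{Dingbook18}.
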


The linear locality of an ovoid code and its dual is described in the next theorem.

\begin{theorem}\label{thm-ovoidloca21}
An ovoid code $\C_o$ is a $(q^2+1,4,q;q^2-q,3)$-LLRC and its dual $\C_o^\perp$ is a $(q^2+1,q^2-3,4,q;q^2-q-1)$-LLRC. Moreover, $\C_o$ is $k$-optimal and $\C_o^\perp$ is $d$-optimal and $k$-optimal.
\end{theorem}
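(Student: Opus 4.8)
The plan is to organize the argument into three stages: recording the parameters of the two codes, determining their minimum linear locality via the support designs of Lemma \ref{lem-22}, and verifying the optimality claims against the two bounds from Section \ref{sec-MOM}. The parameters $[q^2+1,4,q^2-q]$ for $\C_o$ and $[q^2+1,q^2-3,4]$ for $\C_o^\perp$ are already established in the discussion preceding Lemma \ref{lem-22}, so I would simply cite them and move on.

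For the minimum linear locality, the key observation is that every $3$-design is in particular a $1$-design. By Lemma \ref{lem-22}, the weight-$4$ codewords of $\C_o^\perp$ support a $3$-design, so $(\cP(\C_o^\perp),\cB_4(\C_o^\perp))$ is a $1$-$(q^2+1,4,\lambda_1)$ design with $\lambda_1\geq 1$; applying Corollary \ref{cor-newn} to $\C_o$, whose dual distance is $d(\C_o^\perp)=4$, gives minimum linear locality $4-1=3$. Symmetrically, the minimum weight codewords of $\C_o$ support a $3$-design, hence a $1$-design, and applying Corollary \ref{cor-newn} to $\C_o^\perp$, whose dual distance is $d((\C_o^\perp)^\perp)=d(\C_o)=q^2-q$, gives minimum linear locality $(q^2-q)-1=q^2-q-1$. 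These two computations produce exactly the stated LLRC parameter sets.

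The optimality assertions require the bulk of the computation. For the $k$-optimality of $\C_o$ I would set $t=1$ and $r=3$ in the CM bound \eqref{cm-bound}, obtaining $k\leq 3+k_{opt}^{(q)}(q^2-3,q^2-q)$, and then invoke the Plotkin bound: since $d=q^2-q$ exceeds $(1-1/q)(q^2-3)$, the residual code satisfies $k_{opt}^{(q)}(q^2-3,q^2-q)\leq 1$, so $k\leq 4$ with equality. For $\C_o^\perp$ the $d$-optimality is a direct check against the Singleton-like bound \eqref{eq-bound}: with $n-k=4$ and $\lceil k/r\rceil=\lceil (q^2-3)/(q^2-q-1)\rceil=2$ (the ceiling equals $2$ precisely because $q>2$), the bound reads $d\leq 4-2+2=4$, which is attained. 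The $k$-optimality of $\C_o^\perp$ follows from the CM bound with $t=1$ and $r=q^2-q-1$, giving $k\leq (q^2-q-1)+k_{opt}^{(q)}(q+1,4)$, where the Singleton bound supplies $k_{opt}^{(q)}(q+1,4)\leq q-2$, so $k\leq q^2-3$ with equality.

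The main obstacle, and the only genuinely delicate point, is choosing the correct auxiliary bound in each CM-bound estimate. For $\C_o$ the residual code $k_{opt}^{(q)}(q^2-3,q^2-q)$ has very high relative distance, so the Singleton bound is far too weak and the Plotkin bound is needed; for $\C_o^\perp$ the cruder Singleton bound already suffices. I would also record explicitly the hypothesis $q>2$ built into the definition of an ovoid, since this is exactly what forces $\lceil (q^2-3)/(q^2-q-1)\rceil=2$ and validates the Plotkin estimate; the remaining manipulations are routine.
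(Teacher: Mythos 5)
Your proposal is correct and follows essentially the same route as the paper's own proof: locality from Lemma \ref{lem-22} plus Corollary \ref{cor-newn}, the $d$-optimality of $\C_o^\perp$ by direct check against the Singleton-like bound, and both $k$-optimality claims via the CM bound with $t=1$, using the Plotkin bound for $k_{opt}^{(q)}(q^2-3,q^2-q)\leq 1$ and the Singleton bound for $k_{opt}^{(q)}(q+1,4)\leq q-2$. Your only additions are explicit details the paper leaves implicit, namely the computation $\lceil (q^2-3)/(q^2-q-1)\rceil=2$ for $q>2$ and the verification that the Plotkin condition $d>(1-1/q)(q^2-3)$ holds.
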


\begin{proof}
The parameters follow from Lemma \ref{lem-22} and Corollary \ref{cor-newn}. It is easy to check the distance optimality of $\C_o^\perp$. Then we check the dimension optimality of $\C_o^\perp$. Putting $t=1$ into the right-hand side of the CM bound in \eqref{cm-bound}, one has
\begin{align*}
  k & \leq r+k_{opt}^{(q)}(n-(r+1),d) \\
          & = (q-1)q-1+k_{opt}^{(q)}(q+1,4)\\
          & \leq q^2-3,
\end{align*}
where the last inequality holds due to the fact that $k_{opt}^{(q)}(q+1,4)\leq q-2$, which follows from the classical Singleton bound. Therefore, the code $\C_o^\perp$ is $k$-optimal.

Taking the parameters of $\C_o$ on the right-hand side of the CM bound in \eqref{cm-bound}, one has
\begin{align*}
  k & \leq \min_{t\in\Z_+}\{rt+k_{opt}^{(q)}(n-(r+1)t,d)\}\\
          & \leq r+k_{opt}^{(q)}(n-(r+1),d) \\
          & = 3+k_{opt}^{(q)}(q^2-3,q^2-q)\\
          & \leq 4,
\end{align*}
where the last inequality holds due to the fact that $k_{opt}^{(q)}(q^2-3,q^2-q)\leq 1$, which follows from the Plotkin bound. Therefore, $\C_o$ is $k$-optimal. This completes the proof.
\end{proof}

In \cite{LDT20}, Liu et al. studied some shortened and punctured codes of an ovoid code, and obtained the following results.

\begin{lemma}\label{lem-LDT20}
Let $q\geq 4$, and let $\C_o$ be a $[q^2+1,4,q^2-q]$ code over $\gf(q)$. For any coordinate position $\{t_1\}$, the following hold.
\begin{itemize}
  \item $(\C_o)_{\{t_1\}}$ is a $[q^2,3,q^2-q]$ code over $\gf(q)$ with weight enumerator
      \[
1+q(q^2-1)z^{q^2-q}+(q-1)z^{q^2}.
\]
  \item $((\C_o)_{\{t_1\}})^\perp$ is a $[q^2,q^2-3,3]$ almost MDS code over $\gf(q)$.
  \item $((\C_o^\perp)_{\{t_1\}})^\perp$ is a $[q^2,4,q^2-q-1]$ code over $\gf(q)$ with weight enumerator
      \[
1+q^2(q-1)z^{q^2-q-1}+q(q^2-1)z^{q^2-q}+q^2(q-1)z^{q^2-1}+(q-1)z^{q^2}.
\]
  \item $(\C_o^\perp)_{\{t_1\}}$ is a $[q^2,q^2-4,4]$ almost MDS code over $\gf(q)$.
\end{itemize}
Furthermore, these codes hold 2-design.
\end{lemma}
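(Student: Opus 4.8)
The plan is to derive everything from two inputs: the duality identities $(\C^\perp)_T=(\C^T)^\perp$ and $(\C^\perp)^T=(\C_T)^\perp$ for shortening versus puncturing, together with the $3$-design structure of $\C_o$ and $\C_o^\perp$ recorded in Lemma \ref{lem-22}. First I would use these identities to recognize the two ``dual'' items: $((\C_o)_{\{t_1\}})^\perp=(\C_o^\perp)^{\{t_1\}}$ and $((\C_o^\perp)_{\{t_1\}})^\perp=\C_o^{\{t_1\}}$, so that they are simply the punctured dual code and the punctured ovoid code. All four codes have length $q^2$, and their dimensions follow from the standard rules: since $d(\C_o)\ge 2$ and $d(\C_o^\perp)\ge 2$, neither code has an identically zero coordinate, so shortening lowers the dimension by exactly one while puncturing preserves it. This yields the dimensions $3$, $q^2-3$, $4$, and $q^2-4$ at once.

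Next I would compute the two weight enumerators (items one and three) from a single balancedness principle: for a weight-$k$ class whose supports form a $1$-design with a constant number $q-1$ of codewords per support, each coordinate lies in the support of exactly $\tfrac{k}{n}A_k$ weight-$k$ codewords, so $\tfrac{n-k}{n}A_k$ of them vanish at $t_1$. Lemma \ref{lem-22} supplies the $3$-design (hence $1$-design) on the minimum-weight class of $\C_o$, the constant multiplicity $q-1$ holds because a minimum-weight codeword is determined up to scalar by its support, and the weight-$q^2$ class is handled directly since each such codeword is the complement of a single point. For the shortened code $(\C_o)_{\{t_1\}}$ the surviving codewords are exactly those of $\C_o$ vanishing at $t_1$, giving $A_{q^2-q}=\tfrac{q+1}{q^2+1}(q^2-q)(q^2+1)=q(q^2-1)$ and $A_{q^2}=q-1$. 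For the punctured code $\C_o^{\{t_1\}}$ a codeword keeps its weight if it is zero at $t_1$ and loses one otherwise, so the two weights $q^2-q$ and $q^2$ split into the four weights $q^2-q-1,\,q^2-q,\,q^2-1,\,q^2$; tallying the ``zero at $t_1$'' and ``nonzero at $t_1$'' counts produces a four-term enumerator whose coefficients I would verify by checking they sum to $q^4$ as a consistency test.

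For the two almost-MDS items I only need the minimum distances, which I would pin down by exhibiting extremal codewords and reading off the support designs. For $(\C_o^\perp)^{\{t_1\}}$ the bound $d\ge d(\C_o^\perp)-1=3$ is standard; equality holds because the weight-$4$ codewords of $\C_o^\perp$ form a $3$-design, hence a $1$-design, so some weight-$4$ codeword is nonzero at $t_1$ and punctures to weight $3$, giving $d=3=n-k$. For $(\C_o^\perp)_{\{t_1\}}$ shortening cannot lower the distance, so $d\ge 4$; since the weight-$4$ design is nontrivial some block misses $t_1$, i.e.\ some weight-$4$ codeword vanishes at $t_1$ and survives shortening, giving $d=4=n-k$. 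Finally, the $2$-design claims I would obtain uniformly from the derived and residual designs of the $3$-designs in Lemma \ref{lem-22}: fixing $t_1$, the blocks through $t_1$ (with $t_1$ deleted) form a $2$-design and the blocks missing $t_1$ form a $2$-design, and these are precisely the minimum-weight support designs $\cB$ of the punctured and shortened codes, respectively.

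I expect the main obstacle to be the bookkeeping in the punctured enumerator of item three: one must split each of the two original weight classes of $\C_o$ according to vanishing at $t_1$, correctly route the contributions to the four target weights, and justify that the balancedness constant is genuinely uniform. The crux there is the $1$-design multiplicity argument (minimum-weight codewords determined up to scalar by their support), which is also what guarantees that the derived and residual designs used for the $2$-design claims are simple. The minimum-distance and dimension parts are comparatively routine once the duality identities have reduced each code to a puncturing or shortening of $\C_o$ or $\C_o^\perp$.
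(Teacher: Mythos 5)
The paper itself offers no proof of this lemma: it is quoted verbatim from \cite{LDT20}, so there is no internal argument to compare against, and your self-contained proof is a genuine addition. Your route is essentially sound, and its ingredients fit together correctly: the duality identities $((\C_o)_{\{t_1\}})^\perp=(\C_o^\perp)^{\{t_1\}}$ and $((\C_o^\perp)_{\{t_1\}})^\perp=\C_o^{\{t_1\}}$; the dimension bookkeeping from $d(\C_o)\geq 2$ and $d(\C_o^\perp)\geq 2$; the balancedness count $\frac{n-w}{n}A_w$ for codewords vanishing at $t_1$; extremal codewords for the two AMDS distances (every point lies on a block of the $3$-$(q^2+1,4,q-2)$ design, and, since $q\geq 4$, some block misses any given point); and derived/residual designs of the two $3$-designs for the $2$-design claims, which is exactly what Corollary \ref{cor-newn} needs downstream. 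One step needs tightening: your ``determined up to scalar by its support'' argument is valid only for minimum-weight codewords, so it gives the uniform multiplicity $q-1$ for the weight-$(q^2-q)$ supports but not for the weight-$q^2$ class of $\C_o$. For that class you must separately show each point's complement occurs as a support exactly $q-1$ times; either argue directly (two non-proportional codewords with the same weight-$q^2$ support would force, for each $\lambda\in\gf(q)^*$, the fibre $\{j: c_j/c'_j=\lambda\}$ to have size $0$ or $q$, and these fibres cannot cover $q^2$ positions since $(q-1)q<q^2$), or sidestep it: once $A_{q^2-q}((\C_o)_{\{t_1\}})=q(q^2-1)$ is known, the dimension forces $A_{q^2}((\C_o)_{\{t_1\}})=q^3-1-q(q^2-1)=q-1$, and this same number is the count of weight-$q^2$ codewords of $\C_o$ vanishing at $t_1$, which is precisely what the punctured enumerator requires.

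The most valuable feature of your proposal is that your consistency check exposes an error in the statement you were asked to prove. Carrying out your tally for $\C_o^{\{t_1\}}$ gives $A_{q^2-q-1}=q^2(q-1)^2$ (the $q^2(q-1)$ blocks through $t_1$, each carrying $q-1$ codewords), $A_{q^2-q}=q(q^2-1)$, $A_{q^2-1}=q^2(q-1)$ and $A_{q^2}=q-1$, which sum to $q^4$ as a $4$-dimensional code demands. The enumerator printed in the lemma has $q^2(q-1)$ as the coefficient of $z^{q^2-q-1}$; its coefficients sum to $3q^3-2q^2\neq q^4$, so the printed statement cannot be correct and that coefficient is missing a factor of $(q-1)$. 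Your method therefore proves the corrected statement rather than the literal one; since Theorem \ref{thm-localityps21} uses only the code parameters and the $1$-design property of the minimum-weight classes of the duals, which your derived/residual argument establishes, the correction does not affect anything downstream in the paper.
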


The minimum linear locality of these punctured and shortened codes of ovoid codes and their duals are documented
in the following theorem.

\begin{theorem}\label{thm-localityps21}
Let $q \geq 4$.
Then the code  $(\C_o)_{\{t_1\}}$ is a $k$-optimal $(q^2,3,q^2-q,q;2)$-LLRC and the
code $((\C_o^\perp)_{\{t_1\}})^\perp$ is a $k$-optimal $(q^2,4,q^2-q-1,q;3)$-LLRC.
The code $(\C_o^\perp)_{\{t_1\}}$ is a $(q^2,q^2-4,4,q;q^2-q-2)$-LLRC and the code $((\C_o)_{\{t_1\}})^\perp$ is a
$(q^2,q^2-3,3,q; q^2-q-1)$-LLRC. Furthermore,
$(\C_o^\perp)_{\{t_1\}}$ and $((\C_o)_{\{t_1\}})^\perp$ are both $d$-optimal and $k$-optimal.
\end{theorem}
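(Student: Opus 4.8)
Let $q \geq 4$. Then the code $(\C_o)_{\{t_1\}}$ is a $k$-optimal $(q^2,3,q^2-q,q;2)$-LLRC and the code $((\C_o^\perp)_{\{t_1\}})^\perp$ is a $k$-optimal $(q^2,4,q^2-q-1,q;3)$-LLRC. The code $(\C_o^\perp)_{\{t_1\}}$ is a $(q^2,q^2-4,4,q;q^2-q-2)$-LLRC and the code $((\C_o)_{\{t_1\}})^\perp$ is a $(q^2,q^2-3,3,q; q^2-q-1)$-LLRC. Furthermore, $(\C_o^\perp)_{\{t_1\}}$ and $((\C_o)_{\{t_1\}})^\perp$ are both $d$-optimal and $k$-optimal.

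The plan is to treat the four codes as two dual pairs and to obtain every locality value as an immediate application of Corollary \ref{cor-newn}. First I would record the dual structure: $(\C_o)_{\{t_1\}}$ and $((\C_o)_{\{t_1\}})^\perp$ are mutually dual, and likewise $(\C_o^\perp)_{\{t_1\}}$ and $((\C_o^\perp)_{\{t_1\}})^\perp$. Lemma \ref{lem-LDT20} supplies all four sets of parameters, the minimum distances $q^2-q$, $3$, $4$, $q^2-q-1$, and the key fact that each of the four codes holds a $2$-design. Since a $2$-design is in particular a $1$-design, the minimum-weight codewords of each code support a $1$-$(q^2,d,\lambda_1)$ design with $\lambda_1\geq 1$. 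Applying Corollary \ref{cor-newn} to each member of a pair, using the minimum distance of its dual as read off Lemma \ref{lem-LDT20}, gives the minimum linear locality $d(\C^\perp)-1$: namely $2$ for $(\C_o)_{\{t_1\}}$, $3$ for $((\C_o^\perp)_{\{t_1\}})^\perp$, $q^2-q-2$ for $(\C_o^\perp)_{\{t_1\}}$, and $q^2-q-1$ for $((\C_o)_{\{t_1\}})^\perp$.

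For the two codes claimed $d$-optimal I would substitute directly into the Singleton-like bound \eqref{eq-bound}. For $(\C_o^\perp)_{\{t_1\}}$, with $(n,k,r)=(q^2,q^2-4,q^2-q-2)$, the decomposition $q^2-4=(q^2-q-2)+(q-2)$ together with $0<q-2<q^2-q-2$ for $q\geq 4$ forces $\lceil k/r\rceil=2$, so the bound reads $d\leq q^2-(q^2-4)-2+2=4$, which is attained. For $((\C_o)_{\{t_1\}})^\perp$, with $(n,k,r)=(q^2,q^2-3,q^2-q-1)$, the analogous decomposition $q^2-3=(q^2-q-1)+(q-2)$ with $0<q-2<q^2-q-1$ gives $\lceil k/r\rceil=2$ and hence $d\leq 3$, again attained.

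For $k$-optimality I would set $t=1$ in the CM bound \eqref{cm-bound}, reducing it to $k\leq r+k_{opt}^{(q)}(n-(r+1),d)$, and then bound the residual term by the classical bound appropriate to the regime. The two small-dimension codes require the Plotkin bound: on the residual length $n-(r+1)=q^2-3$ the distance $q^2-q$ exceeds the threshold $\frac{q-1}{q}(q^2-3)$, giving $k_{opt}^{(q)}(q^2-3,q^2-q)\leq 1$ and thus $k\leq 2+1=3$ for $(\C_o)_{\{t_1\}}$; similarly $k_{opt}^{(q)}(q^2-4,q^2-q-1)\leq 1$ yields $k\leq 3+1=4$ for $((\C_o^\perp)_{\{t_1\}})^\perp$. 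The two large-dimension codes need only the classical Singleton bound on the residual: $k_{opt}^{(q)}(q+1,4)\leq q-2$ gives $k\leq (q^2-q-2)+(q-2)=q^2-4$ for $(\C_o^\perp)_{\{t_1\}}$, and $k_{opt}^{(q)}(q,3)\leq q-2$ gives $k\leq (q^2-q-1)+(q-2)=q^2-3$ for $((\C_o)_{\{t_1\}})^\perp$. In every case the bound equals the true dimension, so all four codes are $k$-optimal.

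The routine computations dominate the proof, so the only genuine care is needed in the $k$-optimality step: one must select the Plotkin bound precisely for the two codes whose dual distance is close to the residual length, and the Singleton bound for the other two, and then verify that the Plotkin threshold inequality $d>\frac{q-1}{q}\bigl(n-(r+1)\bigr)$ really holds on the punctured length. This threshold check, confirming $q^2-q>\frac{q-1}{q}(q^2-3)$ and $q^2-q-1>\frac{q-1}{q}(q^2-4)$ for all $q\geq 4$, is the one place where a wrong choice of bound would produce a non-matching estimate, so I would carry it out explicitly before invoking Plotkin.
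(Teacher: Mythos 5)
Your proposal is correct and follows essentially the same route as the paper: locality of all four codes from Lemma \ref{lem-LDT20} plus Corollary \ref{cor-newn}, $d$-optimality by direct substitution into the Singleton-like bound \eqref{eq-bound}, and $k$-optimality via the CM bound with $t=1$, using the Plotkin bound for the two low-dimensional codes and the classical Singleton bound for the two high-dimensional ones. In fact you are more explicit than the paper, which writes out only the $(\C_o^\perp)_{\{t_1\}}$ case (with a typo, citing $k_{opt}^{(q)}(q+1,3)\leq q-2$ where $k_{opt}^{(q)}(q+1,4)\leq q-2$ is meant) and declares the remaining cases similar, whereas you carry out all four computations including the Plotkin threshold checks.
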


\begin{proof}
The parameters of these codes follow from Lemma \ref{lem-LDT20} and Corollary \ref{cor-newn}. It is easy to verify the distance optimality of $(\C_o^\perp)_{\{t_1\}}$ and $((\C_o)_{\{t_1\}})^\perp$ with respect to the Singleton-like bound.
The proofs of dimension optimality of $(\C_o^\perp)_{\{t_1\}}$ and $((\C_o)_{\{t_1\}})^\perp$ are similar, so we just prove the dimension optimality of $(\C_o^\perp)_{\{t_1\}}$. Taking $t=1$ into the right-hand side of the CM bound in \eqref{cm-bound}, one arrives at
\begin{align*}
  k & \leq r+k_{opt}^{(q)}(n-(r+1),d) \\
          & = (q-1)q-2+k_{opt}^{(q)}(q+1,4)\\
          & \leq q^2-4,
\end{align*}
where the last inequality holds due to the fact that $k_{opt}^{(q)}(q+1,3)\leq q-2$, which follows from the classical Singleton bound. Therefore, the code $(\C_o^\perp)_{\{t_1\}}$ is $k$-optimal. The completes the proof.
\end{proof}

Ovoid codes are very attractive in the sense that $\C_o^\perp$, $(\C_o^\perp)_{\{t_1\}}$ and $((\C_o)_{\{t_1\}})^\perp$
are both $d$-optimal and $k$-optimal. Recall that ovoid codes are the same as ovoids in projective geometry.
In addition, ovoid codes support
$3$-designs, which are related to inversive planes (also called M\"obius planes) \cite[Chapter 13]{Dingbook18}.
Furthermore, the trace codes of some ovoid codes are also optimal \cite{DH19}. These facts show that ovoid
codes are really diamonds.

\subsection{The minimum linear locality of maximal arc codes}

Throughout this section, let $q=2^m$ for some positive integer $m \geq 2$.
A \emph{maximal $(n, h)$-arc $\cA$}\index{maximal arc} in the projective plane
$\PG(2, \gf(q))$ is a subset of $n=hq+h-q$ points such that every line meets $\cA$
in $0$ or $h$ points. A maximal $(n, h)$-arc $\cA$ in $\PG(2, \gf(q))$ exists if and
only if $h$ divides $q$, where $2 \leq h <q$. Hence, in this section, we let $h=2^i$
for some $i$ with $1 \leq i <m$. There are several known families of maximal arcs
and the reader is referred to \cite[Section 12.7]{Dingbook18} for further information.

Let $\cA$ be a maximal $(n, h)$-arc  in PG$(2,\gf(q))$. Denote
$
\cA = \{\ba_1,\ba_2,\ldots,\ba_{n}\},
$
where each $\ba_i$ is a column vector in $\gf(q)^3$. Let $\C(\cA)$ denote the linear code over $\gf(q)$ with generator
matrix
$
G_\cA=[\ba_1\ba_2\cdots\ba_{n}].
$
We call  $\C(\cA)$ a \emph{maximum arc code}. The following theorem was proved in \cite[Section 12.7]{Dingbook18}.

\begin{theorem}\label{thm-marccode}
Let $q=2^m$ for any $m \geq 3$ and $h=2^i$ with $2 \leq i < m$. Let $\cA$ be a maximal
$(n, h)$-arc in $\PG(2, \gf(q))$. Then the maximum arc code $\C(\cA)$ has parameters
$[n,\, 3,\, n-h]$ and
weight enumerator
\begin{eqnarray}\label{eqn-wte23}
1+ \frac{(q^2-1)n}{h}z^{n-h} + \frac{(q^3-1)h-(q^2-1)n}{h}z^{n},
\end{eqnarray}
where $n=hq+h-q$. The dual code  $\C(\cA)^\perp$ has parameters $[n, n-3, 3]$.
Furthermore, the minimum weight codewords in both $\C(\cA)$ and $\C(\cA)^\perp$ support a $2$-design.
\end{theorem}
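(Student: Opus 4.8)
The plan is to exploit the dictionary between the codewords of $\C(\cA)$ and the lines of $\PG(2,\gf(q))$. A generic codeword is $\bc_\bu=\bu G_\cA$ with $\bu\in\gf(q)^3$, and its $i$-th coordinate vanishes exactly when the arc point $\ba_i$ lies on the line $H_\bu$ of $\PG(2,\gf(q))$ having coordinate vector $\bu$. Hence, for $\bu\neq\bzero$, one has $\wt(\bc_\bu)=n-|\cA\cap H_\bu|$. First I would invoke the defining property of a maximal $(n,h)$-arc: every line meets $\cA$ in $0$ or $h$ points, and every line through a point of $\cA$ is therefore secant (it meets $\cA$ in $h$ points). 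This forces $\wt(\bc_\bu)\in\{n-h,\,n\}$, giving the two-weight structure at once. Since $h=2^i<2^m=q$, no line contains all of $\cA$, so the columns of $G_\cA$ span $\gf(q)^3$ and $\dim\C(\cA)=3$; the smaller nonzero weight $n-h$ is the minimum distance.

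Next I would count lines. A standard incidence count (each of the $n$ arc points lies on $q+1$ lines, all secant, and each secant carries exactly $h$ arc points) gives $n(q+1)/h$ secant lines and hence $(q^2+q+1)-n(q+1)/h$ external lines. Each line arises from $q-1$ nonzero scalar multiples of $\bu$, so $A_{n-h}=(q-1)\cdot n(q+1)/h=(q^2-1)n/h$ and $A_n=(q^3-1)-A_{n-h}$, which is exactly the weight enumerator \eqref{eqn-wte23}. For the dual, the length is $n$ and the dimension is $n-3$, while $d(\C(\cA)^\perp)$ equals the least number of linearly dependent columns of $G_\cA$. No column is zero and no two columns are proportional (the $\ba_i$ are distinct projective points), so $d^\perp\geq 3$; conversely any secant carries $h\geq 4$ arc points, and three collinear points furnish a weight-$3$ dependency, whence $d^\perp=3$.

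Finally I would establish the two $2$-designs directly from the geometry rather than through the Assmus--Mattson machinery. For $\C(\cA)^\perp$, the supports of weight-$3$ codewords are precisely the collinear triples of $\cA$: a weight-$3$ codeword is a dependency among three pairwise-independent, jointly-dependent vectors of $\gf(q)^3$, unique up to scalar, so the blocks are in bijection with such triples. Any two arc points $P,Q$ determine a unique line $PQ$, which is secant and hence carries $h$ arc points, so the triples containing $\{P,Q\}$ are obtained by adjoining any of the remaining $h-2$ points; every pair lies in exactly $h-2$ blocks, yielding a $2$-$(n,3,h-2)$ design. For $\C(\cA)$, the support of a minimum weight codeword $\bc_\bu$ (with $H_\bu$ secant) is the complement in $\cA$ of the $h$ arc points on $H_\bu$, and distinct secants give distinct blocks because two lines share at most one point. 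A pair $\{P,Q\}$ lies in the block of a secant $L$ iff $P,Q\notin L$, and inclusion--exclusion over the $q+1$ secants through $P$, the $q+1$ through $Q$, and the single secant $PQ$ shows that each pair is avoided by exactly $n(q+1)/h-(2q+1)$ secants; this count is independent of the pair, giving a $2$-design.

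The routine parts are the positivity of the two weight-enumerator coefficients and of the design parameter $n(q+1)/h-(2q+1)$. The only genuinely delicate point is the determination of $d^\perp$: it requires that secants carry at least three arc points, which fails for $h=2$ and is exactly why the theorem restricts to $i\geq 2$. Everything else reduces to the line--codeword dictionary together with elementary incidence counting in $\PG(2,\gf(q))$.
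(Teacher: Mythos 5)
Your proof is correct, and it is worth noting that the paper itself contains no argument for this statement: Theorem \ref{thm-marccode} is quoted with a citation to \cite[Section 12.7]{Dingbook18}, so your write-up supplies what the paper leaves to the reference. Your route is the standard geometric one and is essentially self-contained: the dictionary $\wt(\bc_\bu)=n-|\cA\cap H_\bu|$ plus the maximal-arc property gives the two weights; the incidence count $n(q+1)=h\cdot(\text{number of secants})$ gives $A_{n-h}=(q^2-1)n/h$ and hence the weight enumerator; the column-dependency argument gives $d^\perp=3$; and both $2$-designs are verified by direct counting (collinear triples yield a $2$-$(n,3,h-2)$ design, and complements of secants yield a $2$-$(n,n-h,\lambda)$ design with $\lambda=n(q+1)/h-2q-1$ by inclusion--exclusion, a value one can sanity-check against the identity $\lambda(n-1)=r(n-h-1)$ with $r=n(q+1)/h-(q+1)$). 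Two points in your proposal deserve emphasis. First, the direct design count is genuinely necessary here: neither the Assmus--Mattson theorem (the dual code $\C(\cA)^\perp$ has too many nonzero weights for the condition $s\leq d-t$ to hold) nor the automorphism-group criterion of Theorem \ref{thm-designCodeAutm} (the stabilizer of a maximal arc need not act $2$-homogeneously on its points) applies, in contrast to how the paper handles Hamming, ovoid, and Reed--Muller codes, so this theorem really does require the geometric argument. Second, your observation that $d^\perp=3$ hinges on secants carrying $h\geq 4$ points, i.e.\ on $i\geq 2$, and fails for hyperovals ($h=2$), correctly isolates the one place where the hypothesis $2\leq i<m$ enters.
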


\begin{theorem}\label{thm-localitymaxarc}
Let $q=2^m$ for any $m \geq 3$ and $h=2^i$ with $2 \leq i < m$. Let $\cA$ be a maximal
$(n, h)$-arc in $\PG(2, \gf(q))$. Then $\C(\cA)$ is a $k$-optimal $(n, 3, n-h, q; 2)$-LLRC and
 $\C(\cA)^\perp$ is a $d$-optimal and $k$-optimal $(n, n-3, 3, q; n-h-1)$-LLRC.
\end{theorem}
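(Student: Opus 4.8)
The theorem has four assertions: the linear localities of $\C(\cA)$ and $\C(\cA)^\perp$, and their respective optimality claims. The plan is to read off the parameters from Theorem~\ref{thm-marccode}, derive the two localities from the design structure via Corollary~\ref{cor-newn}, and then verify optimality by plugging into the Singleton-like bound~\eqref{eq-bound} and the CM bound~\eqref{cm-bound}, exactly as was done for the Hamming, Simplex, and ovoid codes.

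\medskip

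First I would establish the localities. Theorem~\ref{thm-marccode} gives $\C(\cA)$ with parameters $[n,3,n-h]$ and $\C(\cA)^\perp$ with parameters $[n,n-3,3]$, and asserts that the minimum weight codewords in \emph{both} codes support a $2$-design. A $2$-design is in particular a $1$-design, so $(\p(\C(\cA)), \B_{n-h}(\C(\cA)))$ and $(\p(\C(\cA)^\perp), \B_{3}(\C(\cA)^\perp))$ are each $1$-$(n,k,\lambda_1)$ designs with $\lambda_1\geq 1$. Applying Corollary~\ref{cor-newn} to $\C(\cA)^\perp$ (whose dual is $\C(\cA)$, with $d(\C(\cA))=n-h$) yields minimum linear locality $(n-h)-1$; applying it to $\C(\cA)$ (whose dual is $\C(\cA)^\perp$, with $d(\C(\cA)^\perp)=3$) yields minimum linear locality $3-1=2$. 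This matches the claimed localities $r=2$ for $\C(\cA)$ and $r=n-h-1$ for $\C(\cA)^\perp$.

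\medskip

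Next I would verify the optimality claims. For the distance-optimality of $\C(\cA)^\perp$, I would substitute $n\mapsto n$, $k=n-3$, $d=3$, $r=n-h-1$ into the Singleton-like bound~\eqref{eq-bound} and check that $d\leq n-k-\lceil k/r\rceil+2$ holds with equality; since $k=n-3$ and $r=n-h-1$ with $n=hq+h-q$, the ceiling term $\lceil (n-3)/(n-h-1)\rceil$ should evaluate to $2$ for the relevant range of $q,h$, giving $n-(n-3)-2+2=3=d$. For the dimension-optimality claims I would set $t=1$ in the CM bound and bound $k_{opt}^{(q)}$ by a classical bound, as in the earlier proofs: for $\C(\cA)^\perp$ one gets $k\leq r+k_{opt}^{(q)}(n-(r+1),3)=(n-h-1)+k_{opt}^{(q)}(h,3)$, and the Sphere-packing (or Singleton) bound on $k_{opt}^{(q)}(h,3)$ should pin this down to $n-3$; for $\C(\cA)$ one gets $k\leq 2+k_{opt}^{(q)}(n-3,n-h)$, where the Plotkin bound forces the second term to be $1$, yielding $k\leq 3$.

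\medskip

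The main obstacle is the arithmetic in the optimality verifications, not the locality derivation. The locality part is essentially immediate from Theorem~\ref{thm-marccode} and Corollary~\ref{cor-newn}. The delicate points are: confirming that $\lceil (n-3)/(n-h-1)\rceil=2$ in the Singleton-like computation (one must check that $n-h-1 < n-3 \leq 2(n-h-1)$, i.e.\ that $h\geq 2$ and $n\geq 2h+1$ hold given $n=hq+h-q$ and $2\leq 2^i=h<q=2^m$), and correctly invoking the right classical bound so that $k_{opt}^{(q)}(h,3)\leq h-2$ and $k_{opt}^{(q)}(n-3,n-h)\leq 1$ come out exactly, making the CM bound tight. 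Since the paper explicitly omits these routine verifications elsewhere (e.g.\ in Theorem~\ref{thm-hmcodeps} and Corollary~\ref{cor-21Janu213}), I expect the final proof to state the parameters and localities via Corollary~\ref{cor-newn}, and then remark that the distance- and dimension-optimality follow from the Singleton-like and CM bounds by direct computation, with the bounding inequalities on $k_{opt}^{(q)}$ spelled out as above.
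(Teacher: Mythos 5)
Your proposal is correct and follows essentially the same route as the paper's proof: the localities come from Theorem \ref{thm-marccode} together with Corollary \ref{cor-newn}, the $d$-optimality of $\C(\cA)^\perp$ is a direct check of the Singleton-like bound, and both $k$-optimality claims are obtained by setting $t=1$ in the CM bound and invoking the classical Singleton bound (giving $k_{opt}^{(q)}(h,3)\leq h-2$) and the Plotkin bound (giving $k_{opt}^{(q)}(n-3,n-h)\leq 1$), respectively. The only difference is that you spell out the ceiling computation $\lceil (n-3)/(n-h-1)\rceil=2$, which the paper dismisses as "straightforward to verify."
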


\begin{proof}
It follows from Corollary \ref{cor-newn} and Theorem \ref{thm-marccode} that $\C(\cA)^\perp$ has minimum linear locality
$d(\C(\cA))-1$ and $\C(\cA)$ has  minimum linear locality $d(\C(\cA)^\perp)-1$. The parameters of the two codes then
follow from Theorem \ref{thm-marccode}. It is straightforward to verify that the parameters of $\C(\cA)^\perp$  meet
the Singleton-like bound.
We now prove the dimension optimality of $\C(\cA)^\perp$. Putting $t=1$ into the right-hand side of the CM bound in \eqref{cm-bound}, we have
\begin{align*}
  k & \leq r+k_{opt}^{(q)}(n-(r+1),d) \\
          & = n-h-1+k_{opt}^{(q)}(h,3)\\
          & \leq n-3,
\end{align*}
where the last inequality holds due to the fact that $k_{opt}^{(q)}(h,3)\leq h-2$, which follows from the classical Singleton bound. Therefore, the $\C(\cA)^\perp$ is $k$-optimal.

Taking the parameters of $\C(\cA)$ on the right-hand side of the CM bound in \eqref{cm-bound}, one has
\begin{align*}
  k & \leq \min_{t\in\Z_+}\{rt+k_{opt}^{(q)}(n-(r+1)t,d)\}\\
          & \leq r+k_{opt}^{(q)}(n-(r+1),d) \\
          & = 2+k_{opt}^{(q)}(n-3, n-h)\\
          & \leq 3,
\end{align*}
where the last inequality holds due to the fact that $k_{opt}^{(q)}(n-3,n-h)\leq 1$, which follows from the Plotkin bound. Therefore, $\C(\cA)$ is $k$-optimal. This completes the proof.
\end{proof}

A family of extended cyclic codes with the parameters of the code in Theorem \ref{thm-marccode} were documented in
\cite[Section 12.8]{Dingbook18}. We are interested in maximal arc codes, as they are $k$-optimal LLRCs  and their duals
are $d$-optimal and $k$-optimal LLRCs.

\section{The minimum linear locality of near MDS codes}\label{sec-localityofNMDS}

\subsection{Some general theory on the minimum linear locality of near MDS codes}

The Singleton defect of an $[n, k, d]$ code $\C$ is defined by def$(\C) = n- k + 1-d$. Thus, MDS codes are codes with defect 0. A code $\C$ is said to be almost MDS (AMDS for short) if it has defect 1. Hence, AMDS codes have parameters $[n, k, n-k]$. A code is said to be near MDS (NMDS for short) if the code and its dual code both are AMDS. By definition, $\C$ is near MDS if and only if $\C^\perp$ is so. Then an $[n, k]$ code $\C$ over $\gf(q)$ is NMDS if and
only if $d(\C)+d(\C^\perp) = n$ \cite{DL95}. The following lemma will be needed later (\cite{DodLan95}, \cite{FaldumWillems97}).

\begin{lemma}\label{lem-aug291}
Let $\C$ be an $[n, k, n-k]$ AMDS code over $\gf(q)$.
\begin{itemize}
\item If $k \geq 2$, then $\C$ is generated by its codewords of weight $n-k$ and $n-k+1$.
\item If $k \geq 2$ and $n-k >q$, then $\C$ is generated by its minimum weight codewords.
\end{itemize}
\end{lemma}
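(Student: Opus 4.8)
The plan is to translate both statements into the geometry of the columns of a generator matrix and then argue combinatorially. First I would fix a generator matrix $G$ of $\C$ with columns $\bg_1,\ldots,\bg_n\in\gf(q)^k$ and use the dictionary $\bx\mapsto\bx G$ between $\gf(q)^k$ and $\C$: the codeword $\bx G$ vanishes exactly on those coordinates $j$ with $\bx\cdot\bg_j=0$, i.e.\ on the columns lying in the hyperplane $\bx^\perp$, so $\wt(\bx G)=n-|\{j:\bg_j\in\bx^\perp\}|$. The hypothesis $d(\C)=n-k$ says every nonzero codeword has at most $k$ zeros, which is precisely the statement that every hyperplane through the origin contains at most $k$ of the columns (equivalently, any $k+1$ columns span $\gf(q)^k$). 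Minimum-weight codewords correspond to hyperplanes containing exactly $k$ columns ($k$-secants), and weight-$(n-k+1)$ codewords to hyperplanes containing exactly $k-1$ columns.

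For the first item I would use a dual-basis construction. Since $\dim\C=k$, the columns span $\gf(q)^k$, so I can choose $k$ linearly independent columns $\bg_{i_1},\ldots,\bg_{i_k}$; let $\bx^{(1)},\ldots,\bx^{(k)}$ be the dual basis, defined by $\bx^{(s)}\cdot\bg_{i_t}=\delta_{st}$. For each $s$ the hyperplane $(\bx^{(s)})^\perp$ contains the $k-1$ columns $\bg_{i_t}$ with $t\ne s$, so $\bx^{(s)}G$ has at least $k-1$ zeros, hence weight at most $n-k+1$; together with $d(\C)=n-k$ this forces its weight into $\{n-k,n-k+1\}$. The $\bx^{(s)}$ form a basis of $\gf(q)^k$ and $\bx\mapsto\bx G$ is injective, so the $k$ codewords $\bx^{(s)}G$ are linearly independent and therefore span $\C$. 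This settles the first item.

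For the second item I would argue by contradiction, assuming the minimum-weight codewords span a proper subcode. Then their normals lie in a common hyperplane, which means there is a nonzero point $\by$ lying on every $k$-secant hyperplane; the plan is to exhibit a $k$-secant avoiding $\by$. Since $\bigcap_s(\bx^{(s)})^\perp=\{\bzero\}$, some basis-hyperplane $W=\langle\bg_{i_t}:t\ne s\rangle$ does not contain $\by$, and it already carries the $k-1$ spanning columns. If $W$ in fact carries $k$ columns it is a $k$-secant missing $\by$ and we are done, so I may assume $W$ meets the column set in exactly $k-1$ columns. Then I would consider the pencil of the $q+1$ hyperplanes through the $(k-2)$-dimensional subspace $M$ spanned by $k-2$ of those columns: exactly $k-2$ columns lie on $M$, each of the remaining $n-(k-2)$ columns lies on a unique pencil member, exactly one member passes through $\by$, and every other member avoids $\by$ and so (by assumption) is not a $k$-secant and carries at most one column off $M$, while the member through $\by$ carries at most two. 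Counting off-$M$ columns gives $n-(k-2)\le q\cdot 1+2$, i.e.\ $n-k\le q$, contradicting $n-k>q$.

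The main obstacle is the bookkeeping in this pencil count: I must verify that exactly $k-2$ columns lie on $M$ (so that the off-$M$ columns are genuinely partitioned by the pencil), that precisely one pencil member passes through $\by$, and that the bound ``at most $k$ columns per hyperplane'' indeed yields ``at most one off-$M$ column'' for the members missing $\by$. Small dimension needs separate care: when $k=2$ the subspace $M$ degenerates to $\{\bzero\}$ and the pencil becomes the family of all lines through the origin, and I would check that the identical count still produces $n-k\le q$. Finally, if the columns form a multiset (repeated projective points), the whole argument must be read with multiplicities, which I would carry through the secant bounds and the pencil count uniformly.
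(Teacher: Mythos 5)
Your proof is correct, but there is no in-paper argument to compare it against: the paper states this lemma purely as a quotation from the literature (the citations are to Dodunekov--Landgev and to Faldum--Willems) and gives no proof, so your proposal stands as a self-contained reconstruction of the classical geometric argument. For the first item, your dual-basis codewords $\bx^{(s)}G$ are exactly the rows of a systematic generator matrix attached to the information set $\{i_1,\ldots,i_k\}$: each has at least $k-1$ zeros, hence weight at most $n-k+1$, and weight at least $n-k$ since it is nonzero, and being $k$ linearly independent codewords they span $\C$; this is the standard proof. For the second item, the reduction is also right: since $\bx\mapsto\bx G$ is an isomorphism onto $\C$, the minimum-weight codewords span a proper subcode if and only if their normals lie in a common hyperplane $\by^{\perp}$, i.e.\ the nonzero point $\by$ lies on every $k$-secant.

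The three obstacles you flag all resolve exactly as you anticipate. (i) In the case where $W$ carries exactly $k-1$ columns, those columns are the $k-1$ independent basis columns, and since $M\subseteq W$, exactly $k-2$ columns (with multiplicity) lie on $M$; so the off-$M$ columns are genuinely partitioned by the pencil. (ii) $\by\notin M$ because $\by\notin W$, so exactly one pencil member passes through $\by$; each of the other $q$ members is, by hypothesis, not a $k$-secant, hence carries at most $k-1$ columns and so at most one column off $M$, while the member through $\by$ carries at most two, giving $n-(k-2)\le q+2$, i.e.\ $n-k\le q$, the desired contradiction. (iii) For $k=2$ one has $M=\{\bzero\}$ and the pencil is the set of all $q+1$ lines through the origin; the identical count gives $n\le q+2$, again $n-k\le q$. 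The multiset reading is also safe: a zero column lies in every hyperplane, and a duplicate of a basis column lies in $W$, so either would push $W$ into your first case ($W$ a $k$-secant missing $\by$), which is an immediate contradiction; all other repeated columns pass through the secant bounds and the pencil count verbatim when counted with multiplicity.
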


\begin{theorem}\label{thm-myJ2121}
Let $\C$ be a nontrivial NMDS code. Then the minimum linear locality of $\C$ is either $d(\C^\perp)-1$
or $d(\C^\perp)$. In particular, the minimum linear locality of $\C$ is $d(\C^\perp)-1$ if the minimum weight
codewords in $\C^\perp$ generate $\C^\perp$.
\end{theorem}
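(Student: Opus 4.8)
The plan is to trap the minimum linear locality between the two consecutive values $d^\perp-1$ and $d^\perp$, where $d^\perp=d(\C^\perp)$, by combining the general lower bound with an upper bound extracted from the generation structure of AMDS codes. First I would record the parameters: since $\C$ is a nontrivial NMDS $[n,k]$ code, it is AMDS with $d(\C)=n-k\geq 2$, and its dual $\C^\perp$ is an $[n,\,n-k,\,k]$ AMDS code with $d^\perp=k\geq 2$ and dimension $n-k\geq 2$. As already observed at the start of this subsection, Lemma \ref{lem-localitylem} shows that the minimum linear locality of $\C$ is at least $d^\perp-1$, which gives the lower half of the claim; it then remains to show it is at most $d^\perp$.

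For the upper bound I would apply the first item of Lemma \ref{lem-aug291} to the AMDS code $\C^\perp$ (whose dimension is $n-k\geq 2$), concluding that $\C^\perp$ is generated by its codewords of weight $d^\perp$ and $d^\perp+1$. The key observation is that these generators must cover every coordinate: if some position $i$ lay outside the support of every such codeword, then every codeword of $\C^\perp$ would vanish at $i$, forcing the unit vector $e_i$ into $\C$ and contradicting $d(\C)\geq 2$. Hence
\[
\bigcup_{S \in \cB_{d^\perp}(\C^\perp)} S \ \cup \ \bigcup_{S \in \cB_{d^\perp+1}(\C^\perp)} S = [n].
\]
I then split into two cases. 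If $\bigcup_{S \in \cB_{d^\perp}(\C^\perp)} S = [n]$, Corollary \ref{cor-newn0} gives minimum linear locality exactly $d^\perp-1$. Otherwise some coordinate is covered only by a weight-$(d^\perp+1)$ codeword, so $A_{d^\perp+1}(\C^\perp)>0$ while the covering condition fails at $w=d^\perp$; Theorem \ref{thm-generalLoc} then identifies the relevant smallest integer as $w=d^\perp+1$, yielding minimum linear locality $d^\perp$. In either case the value lies in $\{d^\perp-1,\,d^\perp\}$.

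For the final assertion, if the minimum weight codewords of $\C^\perp$ already generate $\C^\perp$, then $\C^\perp$ is spanned by its minimum weight codewords and Theorem \ref{thm-newJ1821} immediately yields minimum linear locality $d^\perp-1$. The only genuinely nontrivial input is Lemma \ref{lem-aug291}, which supplies the weight-$d^\perp$/weight-$(d^\perp+1)$ generation; once that is in hand, the covering argument and the translation through Theorem \ref{thm-generalLoc} are routine. The part requiring the most care will simply be the bookkeeping of the NMDS/AMDS parameter identifications, so that the dimension hypothesis $n-k\geq 2$ of Lemma \ref{lem-aug291} is correctly verified from nontriviality of $\C$.
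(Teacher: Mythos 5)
Your proposal is correct and follows essentially the same route as the paper's proof: both hinge on the first item of Lemma \ref{lem-aug291} applied to the AMDS dual, the covering argument that a coordinate missed by all weight-$d^\perp$ and weight-$(d^\perp+1)$ codewords would force a weight-one codeword into $\C$, the translation through Theorem \ref{thm-generalLoc}, and Theorem \ref{thm-newJ1821} for the final assertion. The only difference is organizational—you apply Lemma \ref{lem-aug291} unconditionally and split on the covering condition, while the paper splits first on whether the minimum weight codewords generate $\C^\perp$—but this is a cosmetic rearrangement of the same argument.
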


\begin{proof}
Let $n$ denote the length of $\C$. If $\C^\perp$ is generated by its minimum weight codewords,
it then follows from Theorem  \ref{thm-newJ1821} that $\C$ has minimum linear locality $d(\C^\perp)-1$.
Assume now that all the minimum weight codewords in $\C^\perp$ do not generate $\C^\perp$.
Since $\C^\perp$ is nontrivial, $\dim(\C^\perp) \geq 2$. It then follows from Lemma \ref{lem-aug291}
that $\C^\perp$ is generated by all the codewords of weights $d(\C^\perp)$ and $d(\C^\perp)+1$.
If the union of the supports of all the codes of weights $d(\C^\perp)$ and $d(\C^\perp)+1$ does not
contain $i  \in [n]$, then $\C$ must be have a zero coordinate in position $i$. This means that $d(\C)=1$,
which contradicts to our assumption that $\C$ is nontrivial.  It then follows from Theorem \ref{thm-generalLoc}
that the minimum linear locality of $\C$ is either $d(\C^\perp)-1$ or $d(\C^\perp)$.
\end{proof}

MDS codes are very interesting due to the following theorem whose proof is straightforward by following the assumptions and the parameters of NMDS codes.

\begin{theorem}\label{thm-nmdslocality0}
If a nontrivial NMDS code $\C$ over $\gf(q)$ with parameters $[n, k, n-k]$ has minimum linear locality $d(\C^\perp)-1$,
then $\C$
is a $d$-optimal and $k$-optimal $(n, k, n-k, q; k-1)$-LLRC with respect to the Singleton-like bound and the CM bound,
respectively.

If a nontrivial NMDS code $\C$ over $\gf(q)$ with parameters $[n, k, n-k]$ has minimum linear locality $d(\C^\perp)$,
then $\C$
is an almost $d$-optimal and $k$-optimal $(n, k, n-k, q; k)$-LLRC with respect to the Singleton-like bound and the CM bound,
respectively.
\end{theorem}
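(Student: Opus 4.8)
The plan is to reduce the entire statement to substituting the parameters of $\C$ into the Singleton-like bound \eqref{eq-bound} and the CM bound \eqref{cm-bound} and reading off the resulting (in)equalities. The one preliminary fact I would record first is the value of $d(\C^\perp)$. Since $\C$ is NMDS with parameters $[n,k,n-k]$ we have $d(\C)=n-k$, and the characterization $d(\C)+d(\C^\perp)=n$ of NMDS codes gives $d(\C^\perp)=k$. Because $\C$ is nontrivial, $d(\C^\perp)\geq 2$, hence $k\geq 2$; this is needed to evaluate a ceiling below. With $d^\perp=k$ recorded, the two cases of the theorem are precisely linear locality $r=k-1$ and $r=k$.

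For the first case ($r=k-1$) I would first verify $d$-optimality. Substituting $d=n-k$ and $r=k-1$ into the right-hand side of \eqref{eq-bound} gives $n-k-\lceil k/(k-1)\rceil+2$. Since $k/(k-1)=1+1/(k-1)$ with $0<1/(k-1)\leq 1$ for $k\geq 2$, one has $\lceil k/(k-1)\rceil=2$, so the bound reads $d\leq n-k$, met with equality. For $k$-optimality I would take $t=1$ in \eqref{cm-bound}, obtaining the chain $k\leq\min_{t}[\cdots]\leq (k-1)+k_{opt}^{(q)}(n-k,n-k)$. A length-$(n-k)$ code of minimum distance $n-k$ has dimension at most $1$ by the Singleton bound (the repetition code attains it), so $k_{opt}^{(q)}(n-k,n-k)=1$ and the chain collapses to $k\leq k$, forcing equality throughout.

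For the second case ($r=k$) the same substitution into \eqref{eq-bound} yields $n-k-\lceil k/k\rceil+2=n-k+1$, and since $d=n-k$ equals this value minus one, $\C$ is almost $d$-optimal. For $k$-optimality I would again set $t=1$ in \eqref{cm-bound}, getting $k\leq k+k_{opt}^{(q)}(n-k-1,n-k)$; as the required minimum distance $n-k$ exceeds the length $n-k-1$, no nonzero code with these parameters exists, so $k_{opt}^{(q)}(n-k-1,n-k)=0$ and again $k\leq k$ with equality.

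There is no genuine obstacle: once $d(\C^\perp)=k$ is in place the argument is a short computation. The only points demanding care are the evaluation $\lceil k/(k-1)\rceil=2$, which relies on $k\geq 2$ and hence on nontriviality, and the two degenerate values of $k_{opt}^{(q)}$ (one the equality case of the Singleton bound, the other the vacuous case where the demanded distance exceeds the length). I would also be careful to present each $k$-optimality step as the chain $k\leq\min_{t}[\cdots]\leq(\text{value at }t=1)=k$, so that equality is \emph{forced}, rather than asserting without justification that the minimum in \eqref{cm-bound} is attained at $t=1$.
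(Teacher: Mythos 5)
Your proof is correct and is exactly the "straightforward" parameter substitution the paper has in mind — the paper omits the details entirely, asserting the result follows by "following the assumptions and the parameters of NMDS codes," and your computation (using $d(\C^\perp)=k$ from the NMDS characterization, $\lceil k/(k-1)\rceil=2$ for $k\geq 2$, and the degenerate values $k_{opt}^{(q)}(n-k,n-k)=1$ and $k_{opt}^{(q)}(n-k-1,n-k)=0$) fills them in. Your chain-of-inequalities treatment of the CM bound with $t=1$ also matches the pattern the paper uses in its other $k$-optimality proofs (e.g., for the Hamming, Simplex, ovoid, and maximal arc codes).
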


We will demonstrate later that some nontrivial NMDS codes $\C$ have minimum linear locality $d(\C^\perp)-1$
and some nontrivial NMDS codes $\C$ indeed have minimum linear locality $d(\C^\perp)$. Of course,
nontrivial NMDS codes $\C$ with minimum linear locality $d(\C^\perp)-1$ are better. Therefore, we are
more interested in nontrivial MDS code $\C$ with  minimum linear locality $d(\C^\perp)-1$.

\begin{corollary}\label{cor-21jan188}
Let $\C$ be a  nontrivial NMDS code over $\gf(q)$ with parameters $[n, k, n-k]$. If $\C^\perp$ does not have
a codeword of weight $d(\C^\perp)+1$, Then $\C$
is a $d$-optimal and $k$-optimal $(n, k, n-k, q; k-1)$-LLRC.
\end{corollary}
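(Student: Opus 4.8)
The plan is to reduce the statement to an application of Theorem~\ref{thm-myJ2121} and Theorem~\ref{thm-nmdslocality0}, the bridge between them being Lemma~\ref{lem-aug291}. First I would record the basic parameters forced by the NMDS hypothesis: since $\C$ is NMDS with parameters $[n,k,n-k]$, its dual $\C^\perp$ is also AMDS, of dimension $n-k$, and the relation $d(\C)+d(\C^\perp)=n$ gives $d(\C^\perp)=k$. Thus $\C^\perp$ has parameters $[n,n-k,k]$, and the claimed locality $k-1$ is exactly $d(\C^\perp)-1$. Nontriviality of $\C$ gives $d(\C)=n-k\geq 2$, hence $\dim(\C^\perp)=n-k\geq 2$, which is precisely the hypothesis needed to invoke Lemma~\ref{lem-aug291} on $\C^\perp$.

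The key step is to show that $\C^\perp$ is spanned by its minimum weight codewords. Applying the first part of Lemma~\ref{lem-aug291} to the AMDS code $\C^\perp$ (whose minimum weight is $d(\C^\perp)=k$), I obtain that $\C^\perp$ is generated by its codewords of weight $d(\C^\perp)$ and $d(\C^\perp)+1$. The hypothesis of the corollary removes the second batch: there are no codewords of weight $d(\C^\perp)+1$, so the generating set collapses to the codewords of weight $d(\C^\perp)$ alone. Hence $\C^\perp$ is generated by its minimum weight codewords.

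With this in hand, Theorem~\ref{thm-myJ2121} immediately yields that $\C$ has minimum linear locality $d(\C^\perp)-1=k-1$. Feeding this value into the first half of Theorem~\ref{thm-nmdslocality0} then delivers the conclusion that $\C$ is a $d$-optimal and $k$-optimal $(n,k,n-k,q;k-1)$-LLRC. I do not expect a serious obstacle: the whole content lies in the single observation that forbidding weight-$(d^\perp+1)$ codewords forces the weight-$d^\perp$ codewords to span $\C^\perp$ through Lemma~\ref{lem-aug291}. The only point requiring care is verifying the dimension hypothesis $\dim(\C^\perp)\geq 2$ of that lemma, which, as noted above, follows from the nontriviality condition $d(\C)\geq 2$.
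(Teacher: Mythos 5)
Your proposal is correct and takes essentially the same route as the paper: apply Lemma~\ref{lem-aug291} to the AMDS code $\C^\perp$ with parameters $[n,n-k,k]$, use the hypothesis to collapse the generating set to the minimum weight codewords, and conclude via Theorem~\ref{thm-nmdslocality0}. The only cosmetic difference is that you invoke the ``in particular'' clause of Theorem~\ref{thm-myJ2121} where the paper cites Theorem~\ref{thm-newJ1821} directly, but that clause is itself proved from Theorem~\ref{thm-newJ1821}, so the content is identical (and your verification of the lemma's hypothesis via $\dim(\C^\perp)=n-k\geq 2$ is, if anything, slightly more careful than the paper's).
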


\begin{proof}
By definition, $\C^\perp$ has parameters $[n, n-k, k]$. Since $\C$ is nontrivial, $d(\C^\perp)=k \geq 2$. By Lemma
\ref{lem-aug291}, $\C^\perp$ is generated by its codewords of weights $d(\C^\perp)$ and $d(\C^\perp)+1$. Since  $\C^\perp$
does not have a codeword of weight $d(\C^\perp)+1$, $\C^\perp$ is generated by its codewords of weight $d(\C^\perp)$.
By Theorem \ref{thm-newJ1821}, $\C$ has minimum linear locality $d(\C^\perp)-1$. The desired conclusion then follows
from Theorem \ref{thm-nmdslocality0}.
\end{proof}

We remark that under the condition of Corollary \ref{cor-21jan188}, it can be proved that the minimum weight codewords
in $\C^\perp$ support a $1$-design \cite{DT20}.  To prove another result about the minimum linear locality, we need the following
lemma \cite{FaldumWillems97}.

\begin{lemma}\label{lem-121FW}
Let $\C$ be an NMDS code. Then for every minimum weight codeword $\bc$ in $\C$, there exists,
up to a multiple, a unique minimum weight codeword $\bc^\perp$ in $\C^\perp$ such that
$\support(\bc) \cap \support(\bc^\perp)=\emptyset$. In particular, $\C$ and $\C^\perp$
have the same number of minimum weight codewords.
\end{lemma}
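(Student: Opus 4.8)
The plan is to translate the disjoint-support condition into a dimension count for a subcode of $\C^\perp$ and then to pin that dimension down exactly using the defining column-rank properties of an NMDS code. Write $\C$ as an $[n,k,n-k]$ code, so that $\C^\perp$ is an $[n,n-k,k]$ code. Fix a minimum weight codeword $\bc\in\C$, set $S=\support(\bc)$ with $|S|=n-k$, and put $T=[n]\setminus S$, so $|T|=k$. A codeword $\bc^\perp\in\C^\perp$ satisfies $\support(\bc)\cap\support(\bc^\perp)=\emptyset$ exactly when $\bc^\perp$ vanishes on $S$, i.e. $\bc^\perp\in\C^\perp(S)$ and $\support(\bc^\perp)\subseteq T$. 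Thus the existence-and-uniqueness claim reduces to showing that $\C^\perp(S)$ is one-dimensional and that its nonzero vectors have weight exactly $k$.

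The heart of the argument is the computation $\dim\C^\perp(S)=1$. Let $H$ be a generator matrix of $\C^\perp$, i.e. a parity-check matrix of $\C$, and let $H|_S$ be the submatrix of its columns indexed by $S$. Projecting $\C^\perp$ onto the coordinates in $S$ gives $\dim\C^\perp(S)=(n-k)-\rank(H|_S)$, so it suffices to prove $\rank(H|_S)=n-k-1$. For the upper bound I would use $\bc$ itself: since $\bc\in\C$ is supported on $S$, the identity $H\bc^{T}=\bzero$ is a nontrivial linear dependence among the $n-k$ columns of $H$ indexed by $S$, whence $\rank(H|_S)\le n-k-1$. For the matching lower bound I would invoke the NMDS hypothesis: because $d(\C)=n-k$ equals the smallest number of linearly dependent columns of the parity-check matrix $H$, every $n-k-1$ columns of $H$ are independent, so $\rank(H|_S)\ge n-k-1$. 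Hence $\rank(H|_S)=n-k-1$ and $\dim\C^\perp(S)=1$.

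Let $\bc^\perp$ span $\C^\perp(S)$. It is nonzero and supported inside $T$, so $k=d(\C^\perp)\le\wt(\bc^\perp)\le|T|=k$, which forces $\support(\bc^\perp)=T$. Thus $\bc^\perp$ is a minimum weight codeword of $\C^\perp$ disjoint from $\bc$, and it is unique up to a scalar since $\dim\C^\perp(S)=1$; this establishes the first assertion.

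For the counting assertion I would first run the same dimension count on $\C$ itself: the nonzero codeword $\bc$ lies in $\C(T)$, so $\dim\C(T)\ge 1$, while $d(\C^\perp)=k$ makes every $k-1$ columns of a generator matrix $G$ of $\C$ independent, giving $\dim\C(T)=k-\rank(G|_T)\le 1$; hence $\C(T)=\langle\bc\rangle$. Consequently any two minimum weight codewords of $\C$ with the same support are scalar multiples, so each support of a minimum weight codeword of $\C$ carries exactly $q-1$ of them, and symmetrically for $\C^\perp$. The involution $S\mapsto[n]\setminus S$ then matches the supports of minimum weight codewords of $\C$ bijectively with those of $\C^\perp$, by the existence statement applied once to $\C$ and once to $\C^\perp$ (using $(\C^\perp)^\perp=\C$); dividing by the common multiplicity $q-1$ gives equality of the two codeword counts. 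I expect the sole delicate point to be the exact evaluation $\dim\C^\perp(S)=1$, since it hinges on making the dependence supplied by $\bc$ meet the independence guaranteed by the AMDS property, leaving no slack in the column rank.
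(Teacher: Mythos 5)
Your proof is correct. Note, however, that the paper never proves this lemma at all: it is quoted verbatim from the reference [FaldumWillems97] (Faldum and Willems, ``Codes of small defect''), so there is no in-paper argument to compare against. Your self-contained argument is the classical one for NMDS codes and every step checks out: the rank computation $\rank(H|_S)=n-k-1$ is exactly pinned between the dependence supplied by $\bc$ (upper bound) and the independence of any $d(\C)-1=n-k-1$ columns of the parity-check matrix (lower bound), giving $\dim\C^\perp(S)=1$; the weight sandwich $k=d(\C^\perp)\le\wt(\bc^\perp)\le|T|=k$ then forces the spanning codeword to have support exactly $[n]\setminus\support(\bc)$. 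The counting step is also sound: the dual-distance argument shows each support class in $\C$ and in $\C^\perp$ contains exactly $q-1$ codewords, and complementation of supports is an involution interchanging the two families, so the counts agree. One could shorten the final step by observing that the first assertion directly yields a map on support classes whose inverse is the same map applied to $\C^\perp$ (using $(\C^\perp)^\perp=\C$ and the fact that NMDS is a self-dual property), which is precisely what you do; nothing is missing.
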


We now provide the following result, which is useful in certain cases.

\begin{theorem}\label{thm-21jan191}
Let $\C$ be an NMDS code and let $d^\perp=d(\C^\perp)$. If
$$\bigcap_{S \in \cB_{d^\perp}(\C^\perp)} S = \emptyset,$$
then $\C^\perp$ has minimum linear locality $d(\C)-1$.
\end{theorem}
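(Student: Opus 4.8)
The plan is to apply Corollary \ref{cor-newn0} to the code $\C^\perp$. Since the dual of $\C^\perp$ is $\C$ itself and $d((\C^\perp)^\perp)=d(\C)$, that corollary tells us $\C^\perp$ has minimum linear locality $d(\C)-1$ \emph{if and only if}
$$
\bigcup_{S \in \cB_{d(\C)}(\C)} S = [n].
$$
So the whole task reduces to showing that the supports of the minimum weight codewords of $\C$ cover every coordinate position, and I would establish this by contradiction using the hypothesis on $\C^\perp$.

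First I would record the NMDS parameters. Writing $\C$ as an $[n,k,n-k]$ code, its dual $\C^\perp$ is $[n,n-k,k]$, so $d(\C)=n-k$ and $d^\perp=d(\C^\perp)=k$, whence $d(\C)+d^\perp=n$. The crucial structural input is Lemma \ref{lem-121FW}: applied to the NMDS code $\C^\perp$, it furnishes, for every minimum weight codeword $\bc^\perp$ of $\C^\perp$, a minimum weight codeword $\bc$ of $(\C^\perp)^\perp=\C$ with $\support(\bc)\cap\support(\bc^\perp)=\emptyset$. The key observation to exploit is that $|\support(\bc)|=d(\C)$ and $|\support(\bc^\perp)|=d^\perp$ add up to exactly $n$; since the two supports are disjoint and their sizes partition $n$, they must be \emph{complementary}, i.e. $\support(\bc)=[n]\setminus\support(\bc^\perp)$.

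Then I would argue by contradiction. Suppose $\bigcup_{S \in \cB_{d(\C)}(\C)} S \neq [n]$, so there is a position $i\in[n]$ lying outside $\support(\bc)$ for every minimum weight codeword $\bc$ of $\C$. Take an arbitrary minimum weight codeword $\bc^\perp$ of $\C^\perp$ and let $\bc$ be its complementary partner in $\C$ produced above. Because $\support(\bc)=[n]\setminus\support(\bc^\perp)$ and $i\notin\support(\bc)$, we get $i\in\support(\bc^\perp)$. As $\bc^\perp$ was arbitrary, $i$ lies in the support of \emph{every} minimum weight codeword of $\C^\perp$, that is $i\in\bigcap_{S \in \cB_{d^\perp}(\C^\perp)} S$, contradicting the hypothesis that this intersection is empty. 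Hence the union covers $[n]$, and Corollary \ref{cor-newn0} completes the argument.

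The main obstacle is the translation between the two set-theoretic conditions, and the point that makes it work is the complementarity of supports: disjointness alone would not suffice, but the NMDS weight balance $d(\C)+d^\perp=n$ upgrades the disjoint partner of Lemma \ref{lem-121FW} into an exact complement, which is precisely what converts ``$i$ missed by all minimum supports of $\C$'' into ``$i$ hit by all minimum supports of $\C^\perp$.'' I would also note that nontriviality (so that $d^\perp=k\geq 2$, minimum weight codewords exist, and $\C^\perp$ is itself NMDS and nontrivial) is what legitimizes invoking both Lemma \ref{lem-121FW} and Corollary \ref{cor-newn0}.
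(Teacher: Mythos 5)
Your proposal is correct and follows essentially the same route as the paper's own proof: both invoke Lemma \ref{lem-121FW} together with the NMDS relation $d(\C)+d(\C^\perp)=n$ to upgrade disjointness of minimum-weight supports to exact complementarity, then argue that a coordinate missed by every set in $\cB_{d(\C)}(\C)$ would lie in every set in $\cB_{d^\perp}(\C^\perp)$, contradicting the hypothesis, and finish with Corollary \ref{cor-newn0}. Your write-up is merely more explicit than the paper's about why disjoint supports must be complementary and about the nontriviality needed to invoke the two cited results.
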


\begin{proof}
Let $\C$ have parameters $[n, k, d]$ with $d=n-k$. It follows from Lemma \ref{lem-121FW} that each set in
$\cB_d(\C)$ is the complement of a set in $\cB_{d^\perp}(\C^\perp)$ and vice versa. If an integer $i \in [n]$
is not in $\cup_{S \in \cB_d(\C)}$, we then deduce that it must be in $\cap_{S \in \cB_{d^\perp}(\C^\perp)}$.
This is contrary to the assumption that $\cap_{S \in \cB_{d^\perp}(\C^\perp)} S = \emptyset$. Consequently,
$$
\bigcup_{S \in \cB_d(\C)} = [n].
$$
The desired conclusion then follows from Corollary \ref{cor-newn0}.
\end{proof}

\subsection{The minimum linear locality of NMDS cyclic codes}

According to Corollary \ref{cor-newnc} and Theorem \ref{thm-nmdslocality0}, every nontrivial NMDS cyclic code $\C$ and its dual are both $d$-optimal
and $k$-optimal LLRCs. In this subsection, we document such NMDS cyclic codes. We begin with the following example.

\begin{example}
The ternary Golay code $\C_{\mathrm{Golay}}$ has parameters $[11, 6, 5]$ and weight enumerator
$$
1+132z^5+132z^6+330z^8+110z^9+24z^{11}.
$$
The dual code $\C_{\mathrm{Golay}}^\perp$ has parameters $[11, 5, 6]$ and weight enumerator
$$
1+132z^6+110z^9.
$$
Hence, the ternary Golay code is an nontrivial NMDS. It is well known that $\C_{\mathrm{Golay}}$ is a BCH code,
an irreducible cyclic code, and also a quadratic residue code. By Corollary \ref{cor-newnc} and Theorem \ref{thm-nmdslocality0}, both codes are
$d$-optimal and $k$-optimal LLRCs.
\end{example}

The following two theorems document two classes of nontrivial NMDS cyclic codes \cite{DT20}.

\begin{theorem}\label{thm-SQScode2}
Let $q=2^s$ with $s \geq 4$ being even. Then the narrow-sense BCH code $\C_{(q, q+1, 3,1)}$ over $\gf(q)$
has parameters $[q+1, q-3, 4]$, and its dual code $\C_{(q, q+1, 3,1)}^\perp$ has parameters $[q+1, 4, q-3]$
and weight enumerator
\begin{eqnarray*}
1+ \frac{(q-4)(q-1)q(q+1)}{24}z^{q-3} +  \frac{(q-1)q(q+1)}{2} z^{q-2} +\\
 \frac{(q+1)q^2(q-1)}{4} z^{q-1} +
\frac{(q-1)(q+1)(2q^2+q+6)}{6} z^q + \\
 \frac{3q^4 - 4q^3 - 3q^2 + 4q}{8} z^{q+1}.
\end{eqnarray*}
Further,  the codewords of weight $4$ in $\C_{(q, q+1, 3,1)}$ support a $2$-$(q+1, 4, (q-4)/2)$ design, and
the codewords of weight $q-3$ in the dual code $\C_{(q, q+1, 3,1)}^\perp$ support a $2$-$(q+1, q-3, \lambda^\perp)$ design
with
$$
\lambda^\perp=\frac{(q-4)^2(q-3)}{24}.
$$
\end{theorem}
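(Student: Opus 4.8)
The plan is to reduce the whole theorem to a single root-counting problem over the group of $(q+1)$-th roots of unity, and then to read off the parameters, the weight enumerator, and the two designs in turn.

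First I would fix the algebraic setting and the parameters of $\C_{(q,q+1,3,1)}$. Since $q=2^s$ is even, $\gcd(q+1,q)=1$ and $\ord_{q+1}(q)=2$ because $q\equiv-1\pmod{q+1}$; hence a primitive $(q+1)$-th root of unity $\beta$ lives in $\gf(q^2)$. The $q$-cyclotomic cosets modulo $q+1$ are $C_0=\{0\}$ and $C_j=\{j,\,q+1-j\}$, so the narrow-sense defining set $C_1\cup C_2=\{1,2,q-1,q\}$ has size $4$ (the four elements are distinct since $q\ge 16$), giving $\dim\C=(q+1)-4=q-3$. Dually, $\C^\perp$ has nonzeros $\{\pm1,\pm2\}\bmod(q+1)$ and dimension $4$. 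I would index the coordinates by $z=\beta^i\in U:=\{z\in\gf(q^2):z^{q+1}=1\}$ and write every codeword of $\C^\perp$ as
\[
\bc(a,b)=\Big(\Tr_{\gf(q^2)/\gf(q)}(az+bz^2)\Big)_{z\in U},\qquad a,b\in\gf(q^2).
\]

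The core step, and the main obstacle, is the weight distribution of $\C^\perp$. Using $z^q=z^{-1}$ on $U$ and clearing denominators, $\Tr_{\gf(q^2)/\gf(q)}(az+bz^2)=0$ is equivalent to $P_{a,b}(z)=0$, where
\[
P_{a,b}(z)=bz^4+az^3+a^qz+b^q ,
\]
so $\wt(\bc(a,b))=(q+1)-\#\{z\in U:P_{a,b}(z)=0\}$, and the claimed weights $q-3,q-2,q-1,q,q+1$ correspond to the quartic having exactly $4,3,2,1,0$ roots in $U$. I would exploit that $P_{a,b}$ is self-conjugate in the sense that $z^4P_{a,b}(1/z)=P_{a,b}^{(q)}(z)$ (coefficients raised to the $q$-th power), so its root set is stable under $z\mapsto z^{-q}$, whose fixed points are exactly $U$; this pairs up the roots outside $U$ and forces the generic separable case to have an \emph{even} number of $U$-roots, with the odd values occurring only on the degeneracy loci $b=0$, vanishing discriminant, and repeated roots. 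The counts $N_j=\#\{(a,b):P_{a,b}\text{ has exactly }j\text{ roots in }U\}$ can be obtained either by a direct Galois analysis of $P_{a,b}$ over $\gf(q^2)$ (splitting off the cases $b=0$ and $b\ne0$ with or without multiple roots) or by a moment method: the functionals $(a,b)\mapsto\Tr_{\gf(q^2)/\gf(q)}(az+bz^2)$ give $\sum_jN_j=q^4$, $\sum_jjN_j=(q+1)q^3$, and, since any two of them for distinct $z_1,z_2\in U$ are $\gf(q)$-independent, $\sum_j\binom{j}{2}N_j=\binom{q+1}{2}q^2$; the remaining moments require counting the triples and quadruples of points of $U$ whose associated functionals become $\gf(q)$-dependent, which is precisely the delicate input and should reproduce the stated frequencies.

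Once the weight enumerator of $\C^\perp$ is established, the rest follows cleanly. Its smallest nonzero weight is $q-3$, so $\C^\perp$ is AMDS; applying the MacWilliams identities yields the weight distribution of $\C$, in particular $A_1(\C)=A_2(\C)=A_3(\C)=0$ and $A_4(\C)>0$, whence $d(\C)=4$ and $d(\C)+d(\C^\perp)=q+1=n$, so $\C$ is NMDS with parameters $[q+1,q-3,4]$. For the two $2$-designs I would not rely on the automorphism group (the obvious symmetries only yield a dihedral group of order $2(q+1)$, which is not $2$-homogeneous) but instead invoke the generalized Assmus--Mattson theorem of \cite{TDX19}, since the classical version fails here: with $t=2$ the number of relevant nonzero dual weights is $s=3>d-t=2$. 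This gives that the weight-$4$ codewords of $\C$ and the weight-$(q-3)$ codewords of $\C^\perp$ each hold a $2$-design. The indices are then forced by counting: by Lemma~\ref{lem-121FW} the minimum-weight supports of $\C$ and $\C^\perp$ are distinct up to scalars and equinumerous, so each design has $b=A_4(\C)/(q-1)=(q-4)q(q+1)/24$ blocks, and the standard relation $\lambda=b\binom{k}{2}/\binom{q+1}{2}$ gives $\lambda=(q-4)/2$ for the weight-$4$ design and $\lambda^\perp=(q-4)^2(q-3)/24$ for the weight-$(q-3)$ design, matching the statement.
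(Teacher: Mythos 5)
Your algebraic setup is sound and in fact matches the starting point of the source of this result: the paper itself gives no proof of this theorem but quotes it from \cite{DT20}, and that source likewise uses the trace representation $\bc(a,b)=(\Tr_{q^2/q}(az+bz^2))_{z\in U_{q+1}}$ and the reduction of weights to counting roots of $P_{a,b}(z)=bz^4+az^3+a^qz+b^q$ in $U_{q+1}$. The first genuine gap is that the central count is never performed: you reduce the weight enumerator to ``counting the triples and quadruples of points of $U$ whose associated functionals become $\gf(q)$-dependent'' and then assert this ``should reproduce the stated frequencies.'' That count \emph{is} the theorem --- the fourth moment requires exactly the number of $4$-subsets of $U_{q+1}$ supporting a weight-$4$ codeword, which is equivalent to the coefficient of $z^{q-3}$ being proved. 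Moreover, this deferred step is precisely where the hypothesis that $s$ is \emph{even} must enter: if $s$ is odd then $3\mid q+1$, the cube roots of unity lie in $U_{q+1}$, and $1+\omega^j+\omega^{2j}=0$ for all $j\in\{\pm1,\pm2\}$ exhibits a weight-$3$ codeword, so $d(\C)=3$ and the statement fails. A proof whose only use of the hypotheses is ``the four coset elements are distinct since $q\geq 16$'' cannot be complete, since it would equally ``prove'' the false odd-$s$ statement.

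The second gap is the design step. You correctly note that the classical Assmus--Mattson theorem fails ($s=3>d-t=2$) and that the visible symmetries form only a dihedral group, but you then invoke the generalized Assmus--Mattson theorem of \cite{TDX19} without stating or verifying any of its hypotheses, and here this cannot be waved through: an argument using only the weight enumerators of $\C$ and $\C^\perp$ is structurally insufficient. Indeed, for a $2$-subset $T$, the shortened code $(\C^\perp)_T$ has dimension $2$ and three possible nonzero weights $q-3,q-2,q-1$, while the available data $A_0(\C^T)=1$ and $A_1(\C^T)=0$ supply only two MacWilliams equations; the missing third equation is $A_2(\C^T)=(q-1)\lambda_T$, i.e.\ exactly the quantity whose independence of $T$ \emph{is} the $2$-design property. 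Any AM-type theorem strong enough to apply must therefore carry extra hypotheses encoding additional structure of this code, and you verify none. The source \cite{DT20} instead proves the design property by explicitly characterizing the weight-$4$ codewords (the $4$-subsets of $U_{q+1}$ on which the corresponding $4\times 4$ minor of the parity-check matrix vanishes) and directly counting the $(q-4)/2$ blocks through a fixed pair of points, after which the dual design follows from the complementary-support pairing of Lemma \ref{lem-121FW}. Your closing derivation of $\lambda=(q-4)/2$ and $\lambda^\perp=(q-4)^2(q-3)/24$ from the block count is correct, but it is conditional on both unproven steps.
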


\begin{theorem}\label{thm-SQScode}
Let $q=3^s$ with $s \geq 2$. Then the narrow-sense BCH code $\C_{(q, q+1, 3,1)}$ over $\gf(q)$
has parameters $[q+1, q-3, 4]$, and its dual code $\C_{(q, q+1, 3,1)}^\perp$ has parameters $[q+1, 4, q-3]$
and weight enumerator
\begin{eqnarray*}
1+ \frac{(q-1)^2q(q+1)}{24} z^{q-3} + \frac{(q-1)q(q+1)(q+3)}{4} z^{q-1} + \\
\frac{(q^2-1)(q^2-q+3)}{3} z^q +  \frac{3(q-1)^2q(q+1)}{8} z^{q+1}.
\end{eqnarray*}
Further, the minimum weight codewords in $\C_{(q, q+1, 3,1)}^\perp$ support a $3$-$(q+1, q-3, \lambda)$ design
with
$$
\lambda=\frac{(q-3)(q-4)(q-5)}{24},
$$
and the minimum weight codewords in $\C_{(q, q+1, 3,1)}$ support a $3$-$(q+1, 4, 1)$ design, i.e.,
a Steiner quadruple system\index{Steiner quadruple system} $S(3,4,3^s+1)$. Furthermore,
the codewords of weight 5 in $\C_{(q, q+1, 3,1)}$ support a $3$-$(q+1, 5, (q-3)(q-7)/2)$ design.
\end{theorem}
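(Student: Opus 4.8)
The plan is to proceed in four stages: pin down the parameters via cyclotomic cosets, establish that $\C=\C_{(q,q+1,3,1)}$ is NMDS, reduce the whole weight distribution to a single minimum-weight count, carry out that count through the characteristic-$3$ geometry of the unit circle, and finally read off the $t$-designs and their indices.

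First I would compute the dimension. Since $q\equiv -1\pmod{q+1}$, the $q$-cyclotomic cosets modulo $q+1$ are $C_s=\{s,\,q+1-s\}$, so the defining set of the narrow-sense designed-distance-$3$ code is $C_1\cup C_2=\{1,2,q-1,q\}$, giving $\dim\C=q-3$ and $\dim\C^\perp=4$. For the minimum distances I would use two bounds. On the one hand, a Hartmann--Tzeng refinement of the BCH bound applied to the defining set $\{-2,-1,1,2\}$ (legitimate here because $\gcd(3,q+1)=1$ when $q=3^s$) yields $d(\C)\ge 4$. On the other hand, writing $m=\ord_{q+1}(q)=2$, the words of $\C^\perp$ have the trace form $\big(\Tr_{q^2/q}(a_1\beta^j+a_2\beta^{2j})\big)_{j=0}^{q}$ with $a_1,a_2\in\gf(q^2)$; on the set of $(q+1)$-th roots of unity (where $x^q=x^{-1}$) the condition $\Tr_{q^2/q}(a_1x+a_2x^2)=0$ becomes the quartic $a_2x^4+a_1x^3+a_1^qx+a_2^q=0$, which has at most four roots, so $d(\C^\perp)\ge q-3$, with equality attained by the configurations built below. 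Since the dual of an MDS code is MDS and $\C^\perp$ is then not MDS, $d(\C)=5$ is impossible, so $d(\C)=4$ and $\C$ is NMDS with parameters $[q+1,q-3,4]$ and $[q+1,4,q-3]$.

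The crux is the weight enumerator, and I would reduce it to counting the minimum-weight words. A word of $\C^\perp$ of weight $q-3$ vanishes at a $4$-subset $\{x_1,x_2,x_3,x_4\}$ of the circle; for four circle points all coefficient relations of the quartic are automatic except the vanishing of the second elementary symmetric function $e_2=\sum_{i<j}x_ix_j$, so these $4$-subsets are exactly those with $e_2=0$. I would then prove they form a Steiner quadruple system $S(3,4,q+1)$: given three distinct circle points, $e_2=0$ determines the fourth as $x_4=-e_2'/e_1'$ with $e_1'=x_1+x_2+x_3$ and $e_2'=x_1x_2+x_1x_3+x_2x_3$, and the characteristic-$3$ facts make this clean: $e_1'\neq 0$ (otherwise $e_1'=0$ forces $e_2'=0$ on the circle, so the three points would be roots of a pure cube $x^3-e_3'$, impossible for distinct points since $y\mapsto y^3$ is injective in characteristic $3$); $x_4$ lies on the circle (automatic in any characteristic); and $x_4\notin\{x_1,x_2,x_3\}$ (otherwise $(x_1-x_2)(x_1-x_3)=0$, using $2=-1$). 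This gives a unique block through every triple, so $A_{q-3}(\C^\perp)=(q-1)\binom{q+1}{3}/\binom{4}{3}=\frac{(q-1)^2q(q+1)}{24}$, and by Lemma~\ref{lem-121FW} also $A_4(\C)$. Substituting this single number into the NMDS weight-distribution formulas \cite{DodLan95,FaldumWillems97} produces the complete enumerator of $\C^\perp$, including the special vanishing $A_{q-2}(\C^\perp)=0$ (which is in fact equivalent to the above value).

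Finally I would extract the designs. For the weight-$(q-3)$ words of $\C^\perp$ and the weight-$4$ words of $\C$, the Assmus--Mattson theorem (Theorem~\ref{thm-designAMtheorem}) applies with $t=3$: the only nonzero weight of $\C^\perp$ in $[1,n-3]$ is $q-3$, so $s=1\le d(\C)-t=1$. The weight-$4$ design is the Steiner system just built ($\lambda=1$) and the weight-$(q-3)$ design is its block-complement, giving $\lambda=\frac{(q-3)(q-4)(q-5)}{24}$ via $\lambda=b\binom{q-3}{3}/\binom{q+1}{3}$. The weight-$5$ design is the delicate point, since ordinary Assmus--Mattson does not reach it (its parameter $w$ equals $4$); here I would invoke the generalized Assmus--Mattson theorem of \cite{TDX19}, after which $\lambda=(q-3)(q-7)/2$ follows by computing $A_5(\C)$ from the NMDS formula and checking that each weight-$5$ support carries a one-dimensional subcode. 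The main obstacle is the Steiner-system count: everything downstream bootstraps from $A_4(\C)$, and that count succeeds precisely because of the two characteristic-$3$ phenomena (injectivity of $y\mapsto y^3$ and $2=-1$) that fail in general and already produce the different enumerator recorded for $q=2^s$ in Theorem~\ref{thm-SQScode2}; the secondary obstacles are justifying the Hartmann--Tzeng improvement and the one-dimensionality of the weight-$5$ supports.
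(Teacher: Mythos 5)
You should know at the outset that the paper itself contains no proof of this theorem: it is imported from the cited source \cite{DT20} (``The following two theorems document two classes of nontrivial NMDS cyclic codes \cite{DT20}''), so there is no in-paper argument to compare against. Judged on its own merits, your plan is essentially correct, and it in fact reconstructs the strategy of the cited source. The individual steps check out: the $q$-cyclotomic cosets modulo $q+1$ are $\{s,q+1-s\}$, so the defining set $\{1,2,q-1,q\}$ gives dimension $q-3$; the Hartmann--Tzeng bound applies with the two consecutive runs $\{-2,-1\}$ and $\{1,2\}$ at gap $c=3$, and $\gcd(3,q+1)=1$ is exactly what makes it legitimate, giving $d(\C)\geq 4$; the trace representation turns vanishing into the quartic $a_2x^4+a_1x^3+a_1^qx+a_2^q$, whose Vieta conditions on $e_1,e_3,e_4$ are indeed automatically consistent for four points on the unit circle (and a suitable $a_2$ with $a_2^{q-1}=e_4$ exists), so the zero-sets of minimum-weight dual codewords are precisely the $4$-subsets with $e_2=0$; both characteristic-$3$ steps of your Steiner argument are correct ($e_1'=0$ forces $e_2'=0$ and hence a perfect cube $x^3-e_3'$, and $x_4=x_3$ forces $(x_1-x_3)(x_2-x_3)=0$); and I verified that the Dodunekov--Landgev recursion applied to $A_{q-3}(\C^\perp)=\frac{(q-1)^2q(q+1)}{24}$ reproduces $A_{q-2}=0$, the stated $z^{q-1}$ and $z^{q}$ coefficients, and $A_5(\C)=\frac{(q-1)^2q(q+1)(q-3)(q-7)}{120}$, which is exactly consistent with $\lambda=(q-3)(q-7)/2$.

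The only soft spot is the one you flagged: the weight-$5$ design. Your appeal to the generalized Assmus--Mattson theorem of \cite{TDX19} is the right tool, but it is worth being precise about what is missing and why it is recoverable. Under the hypothesis you verified ($t=3$, $s=1=d-t$), the Assmus--Mattson machinery already yields that the \emph{number of weight-$5$ codewords} whose support contains a fixed $3$-set is independent of that $3$-set; what fails at weight $5$ is only the simplicity threshold $w=4$, i.e.\ the guarantee that each $5$-set carries at most a one-dimensional subcode. That one-dimensionality check, which you leave open, actually follows from the Steiner system you already constructed: if a $5$-set $S$ supported a two-dimensional subcode of $\C$, then the $4\times 5$ matrix with rows $(x_i^0),(x_i^1),(x_i^3),(x_i^4)$, $x_i\in S$, would have rank at most $3$, and since each of its $4\times 4$ minors is a nonzero Vandermonde factor times $e_2$ of the corresponding four points, all five $4$-subsets of $S$ would be blocks; two of these share a triple, contradicting the uniqueness of the block through a triple. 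So the gap is genuinely fillable from ingredients already in your plan, and with that addition your proposal is a complete and faithful proof of the statement.
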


Notice that the minimum weight codewords in the code $\C_{(q, q+1, 3,1)}^\perp$ in Theorem \ref{thm-SQScode}
generate the code, as it does not have a codeword of weight $q-2$.
The next theorem gives the third class of NMDS cyclic codes \cite{TD21}.

\begin{theorem}\label{thm:odd-4designs}
Let $m \geq 5$ be odd and $q=2^m$. Then the narrow-sense BCH code $\C_{(q, q+1, 4,1)}$ over $\gf(q)$
is a $[q+1, q-5, 6]$ NMDS code and $\C_{(q, q+1, 4,1)}^\perp$  has parameters $[q+1, 6, q-5]$.
Furthermore,
the minimum weight codewords in $\C_{(q, q+1, 4,1)}$ support a $4$-$(q+1, 6, (q-8)/2)$ design
and
the minimum weight codewords in $\C_{(q, q+1, 4,1)}^\perp$ support a $4$-$(q+1, q-5, \lambda)$ design
with
$$
\lambda=\frac{q-8}{30} \binom{q-5}{4}.
$$
\end{theorem}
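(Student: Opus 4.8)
The plan is to work over $\gf(q^2)$, where $\ord_{q+1}(q)=2$ since $q\equiv-1\pmod{q+1}$, and to index the coordinates by the ``unit circle'' $U=\{x\in\gf(q^2):x^{q+1}=1\}$, which has $|U|=q+1$. I would first settle the dimension through the $q$-cyclotomic cosets modulo $q+1$: from $q\equiv-1$ one gets $C_s=\{s,\,q+1-s\}$, so $C_1,C_2,C_3$ are pairwise disjoint cosets of size $2$ (the hypothesis $m\ge5$ rules out any collision), whence $\deg g=6$ and $\dim\C=q-5$, so $\C^\perp$ has dimension $6$.

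The crux device is the evaluation form of the dual. Writing a codeword of $\C^\perp$ as $c_x=\Tr_{\gf(q^2)/\gf(q)}(a_1x+a_2x^2+a_3x^3)$ for $x\in U$ and using $x^q=x^{-1}$, I multiply through by $x^3$ to see that its zeros on $U$ are exactly the $U$-roots of $F_a(x)=a_3x^6+a_2x^5+a_1x^4+a_1^qx^2+a_2^qx+a_3^q$. A nonzero $F_a$ has at most $6$ roots, giving $d(\C^\perp)\ge q-5$. Moreover $F_a$ is self-reciprocal up to conjugation, $F_a(x)=x^6F_a(1/x)^{[q]}$ (coefficients raised to the $q$-th power), so its roots are stable under $x\mapsto 1/x^q$; this map fixes $U$ pointwise and pairs off-$U$ roots two-by-two, so the number of roots on $U$ is always \emph{even}. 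The same symmetry persists in the degenerate cases $a_3=0$, so every nonzero weight of $\C^\perp$ that is at most $q-3$ is either $q-5$ or $q-3$ (six or four $U$-roots). I would then exhibit one $F_a$ splitting completely over $U$ to force $d(\C^\perp)=q-5$ exactly; this is where $3\mid q+1$, i.e.\ $m$ odd, enters, since cubing is then $3$-to-$1$ on $U$.

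The main obstacle is proving $d(\C)=6$, i.e.\ that the AMDS code $\C^\perp$ is genuinely near MDS. Equivalently, every five columns of the parity-check matrix of $\C$ — the $6\times(q+1)$ matrix whose columns are the $\gf(q)$-forms of $(x,x^2,x^3)$, $x\in U$ — must be $\gf(q)$-independent. A five-term dependency would produce distinct $x_1,\dots,x_5\in U$ and nonzero $c_\ell\in\gf(q)$ with $\sum_\ell c_\ell x_\ell^{\,j}=0$ for $j=1,2,3$, and automatically for $j=-1,-2,-3$ after raising to the $q$-th power; thus $(c_\ell)$ lies in the kernel of the $6\times5$ generalized Vandermonde matrix with exponents $\pm1,\pm2,\pm3$. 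I would prove this kernel trivial by showing that a suitable $5\times5$ minor (a Schur-type determinant after shifting the exponents by $3$) is nonzero, using that $\gcd(3,q-1)=1$ and cubing is $3$-to-$1$ on $U$ for odd $m$. This delicate characteristic-$2$ nonvanishing is the step I expect to be hardest. Once $d(\C)\ge6$ is secured, the weight-$(q-5)$ codeword of $\C^\perp$ forces $A_6(\C)>0$, so $d(\C)=6$ and $d(\C)+d(\C^\perp)=q+1$, making $\C$ near MDS.

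Finally, for the designs I would apply the Assmus--Mattson Theorem (Theorem \ref{thm-designAMtheorem}) with $t=4$. Here $t<d=6$, and the parity fact shows the only nonzero dual weights in $[1,q-3]$ are $q-5$ and $q-3$, so $s=2=d-t$ and the hypothesis holds; hence the weight-$6$ codewords of $\C$ and the weight-$(q-5)$ codewords of $\C^\perp$ both hold $4$-designs. To determine $\lambda$, I note by Lemma \ref{lem-121FW} that $A_6(\C)=A_{q-5}(\C^\perp)$, so both designs share the block count $b=A_6(\C)/(q-1)$; I would compute $b$ by counting the $6$-subsets of $U$ arising as $U$-root sets of some $F_a$ (again using the $3$-to-$1$ cubing and the $\PGL(2,q)$-symmetry of $U$), and then read off $\lambda$ from $\lambda\binom{q+1}{4}=b\binom{k}{4}$ with $k=6$ and $k=q-5$, yielding $(q-8)/2$ and $\frac{q-8}{30}\binom{q-5}{4}$ respectively.
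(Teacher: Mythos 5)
First, a point of context: the paper you are being measured against does not actually prove this theorem --- it quotes it from reference [TD21], where the proof is a long, direct characterization-and-counting argument for the weight-$6$ supports. There is a structural reason the original authors could not take the short Assmus--Mattson route you propose, and it is exactly where your argument breaks. Your setup (cyclotomic cosets giving $\dim\C=q-5$, the trace/Delsarte description of $\C^\perp$, and $d(\C^\perp)\ge q-5$ from the degree-$6$ polynomial $F_a$) is fine, but the parity lemma at the heart of your design argument is false. The identity $F_a(x)=x^6F_a^{[q]}(1/x)$ only forces the root \emph{multiset} (with multiplicity) to be stable under $x\mapsto x^{-q}$; it controls nothing about the parity of the number of \emph{distinct} roots on $U$, and it is the distinct zeros that determine the weight. (Also, $x\mapsto x^{-q}$ is an involution only on $\gf(q^2)$, so the ``pairing'' of off-$U$ roots is itself shaky.) Concretely, for $a_1=a_2=0$ one has $F_a(x)=a_3x^6+a_3^q=a_3\bigl(x^3+\sqrt{a_3^{q-1}}\bigr)^2$, which for one third of the $a_3\in\gf(q^2)^*$ has exactly three (double) roots on $U$; so $\C^\perp$ has words of weight $q-2$ with an odd number of zeros. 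Worse, weight $q-4$ also occurs: take distinct $u_2,\dots,u_5\in U$ with $e_1=u_2+\cdots+u_5\ne 0$ and set $u_1=\sqrt{e_3/e_1}$, where $e_3$ is the third elementary symmetric function; one checks $e_1^{q+1}=e_3^{q+1}$, so $u_1\in U$, and for generic choices $u_1\notin\{u_2,\dots,u_5\}$. The degree-$6$ polynomial with root multiset $\{u_1,u_1,u_2,\dots,u_5\}$ then has vanishing $x^3$-coefficient and satisfies all the conjugacy constraints on its coefficients, hence equals some $F_a$, producing a codeword with exactly five zeros.

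This is not a repairable slip, because the theorem itself forces your key claim to fail. Since $d(\C)=6$, every set of $j\le 5$ coordinates supports exactly $q^{6-j}$ codewords of $\C^\perp$; the resulting moment identities, combined with the theorem's own value $A_{q-5}(\C^\perp)=\frac{(q-8)(q-1)}{30}\binom{q+1}{4}$, give
\[
A_{q-4}(\C^\perp)=(q-1)\binom{q+1}{4},\qquad
A_{q-3}(\C^\perp)=\frac{q(q-1)}{2}\binom{q+1}{4},
\]
both nonzero. Hence the nonzero dual weights in $[1,\,n-t]=[1,\,q-3]$ are $q-5$, $q-4$ \emph{and} $q-3$, so $s=3>2=d-t$, and the hypothesis of the classical Assmus--Mattson theorem (Theorem \ref{thm-designAMtheorem}) simply does not hold; swapping the roles of $\C$ and $\C^\perp$ fails even more badly. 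So the entire design portion of your proposal collapses, and with it the $\lambda$ computation (which you had in any case deferred to an unspecified count of $U$-root sets --- that count \emph{is} the hard content of the theorem, and is what [TD21] carries out directly: the blocks are the $6$-subsets of $U$ with vanishing third elementary symmetric function, and one must count, for each $4$-subset of $U$, its completions to such a $6$-subset, obtaining $(q-8)/2$). Separately, the step you yourself flag as hardest, $d(\C)=6$, is only sketched; but the fatal gap is the false parity claim and the consequent inapplicability of Assmus--Mattson.
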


Many NMDS codes have been constructed (see, for example, \cite{AL05,AL08,DeBoer96,DL95,DodLan95,DodLan00,FaldumWillems97,Giulietti04,JK19,MMP02,TongDing,WH21,TV91}). It is worthwhile to check if some of them
are cyclic.

\subsection{The minimum linear locality of the extended codes of some NMDS cyclic codes}

In this section, we investigate the minimum linear locality of the extended codes of some NMDS cyclic codes, and will make
use of Theorem \ref{thm-loca-ext-cycliccode}.

\begin{theorem}\label{thm-SQScodeext}
Let $q=3^s$ with $s \geq 2$. Then the extended code $\overline{\C_{(q, q+1, 3,1)}}$ over $\gf(q)$
has parameters $[q+2, q-3, 5]$, and its dual code $(\overline{\C_{(q, q+1, 3,1)}})^\perp$ has parameters
$[q+2, 5, q-3]$. Furthermore, $(\overline{\C_{(q, q+1, 3,1)}})^\perp$ is a $d$-optimal and $k$-optimal
$(q+2, 5, q-3, q; 4)$-LLRC.
\end{theorem}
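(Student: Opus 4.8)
The plan is to derive the whole statement from a single fact: adjoining the overall parity check to $\C:=\C_{(q,q+1,3,1)}$ raises its minimum distance from $4$ to $5$. By Theorem \ref{thm-SQScode}, $\C$ is a nontrivial $[q+1,q-3,4]$ NMDS cyclic code with $d(\C^\perp)=q-3$, and by Lemma \ref{thm-extendedCodeParam} its extension $\overline{\C}$ has length $q+2$, dimension $q-3$, and minimum distance either $4$ or $5$. So the only real question is which of these two values occurs.

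The key step, and the main obstacle, is to show $d(\overline{\C})=5$. Since $\C$ is narrow-sense of designed distance $3$, its defining set is $C_1\cup C_2=\{1,q\}\cup\{2,q-1\}$, which in particular does not contain $0$; hence $\bone\notin\C^\perp$. For $\bc\in\C$ one has $\sum_i c_i=c(1)=c(\beta^0)$, so the subcode $\C'=\{\bc\in\C:\ c(1)=0\}$ is again cyclic and its defining set contains $\{0,1,2,q-1,q\}$, i.e. the five consecutive residues $\{-2,-1,0,1,2\}\pmod{q+1}$. The BCH bound then forces $d(\C')\ge 6$. I would use this twice. First: every nonzero codeword of $\overline{\C}$ has weight at least $5$, because a codeword built from $\bc$ with $c(1)=0$ lies in $\C'$ and already has weight $\ge 6$, while one built from $\bc$ with $c(1)\ne 0$ has weight $\wt(\bc)+1\ge 5$. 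Moreover a weight-$4$ codeword of $\C$ must satisfy $c(1)\ne 0$ (otherwise it would lie in $\C'$), so its extension has weight exactly $5$. Hence $d(\overline{\C})=5=d(\C)+1$.

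With $d(\overline{\C})=d(\C)+1$ established and $\C$ nontrivial cyclic, Theorem \ref{thm-loca-ext-cycliccode} immediately yields that $(\overline{\C})^\perp$ has minimum linear locality $d(\C)=4$. To obtain the parameters of $(\overline{\C})^\perp$ I would read off from the form of $\overline{H}$ in Lemma \ref{thm-extendedCodeParam} that its codewords are exactly the vectors $(\bw,a)$ with $\bw=\bv+a\bone$ for $\bv\in\C^\perp$ and $a\in\gf(q)$, so that $\bw$ ranges over the augmented code $\widetilde{\C^\perp}$ (well defined since $\bone\notin\C^\perp$). Now $(\widetilde{\C^\perp})^\perp=\langle\bone\rangle^\perp\cap\C=\C'$, so the second use of $d(\C')\ge 6$ shows that any five columns of a generator matrix of the $[q+1,5]$ code $\widetilde{\C^\perp}$ are linearly independent; hence $\widetilde{\C^\perp}$ is MDS with minimum distance $q-3$. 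Consequently, when $a=0$ the codeword $(\bw,0)$ reduces to an element of $\C^\perp$ of weight $\ge q-3$, and when $a\ne 0$ the vector $\bw$ is a nonzero element of $\widetilde{\C^\perp}$, giving weight $\wt(\bw)+1\ge(q-3)+1$. Therefore $d((\overline{\C})^\perp)=q-3$, so $(\overline{\C})^\perp$ is a $[q+2,5,q-3]$ code and $\overline{\C}$ (equivalently $(\overline{\C})^\perp$) is NMDS.

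Finally, $(\overline{\C})^\perp$ is a nontrivial NMDS code with parameters $[q+2,5,q-3]$ whose minimum linear locality equals $4=d(\overline{\C})-1=d(((\overline{\C})^\perp)^\perp)-1$, i.e. exactly $d(\C^\perp)-1$ in the language of Theorem \ref{thm-nmdslocality0}. Invoking that theorem then certifies $(\overline{\C})^\perp$ as a $d$-optimal and $k$-optimal $(q+2,5,q-3,q;4)$-LLRC, completing the argument. Every step after the minimum-distance computation is routine bookkeeping with the extension and augmentation formulas; the one place that genuinely needs care is the BCH-bound count of five consecutive roots yielding $d(\C')\ge 6$, which drives both $d(\overline{\C})=5$ and the MDS property of $\widetilde{\C^\perp}$.
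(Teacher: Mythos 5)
Your proof is correct, and it reaches every claim of the theorem, but it takes a genuinely different route from the paper's for the two minimum-distance computations, which are the real content here. The paper works explicitly over $\gf(q^2)$: it proves $d(\overline{\C_{(q,q+1,3,1)}})=5$ by assuming a weight-$4$ extended codeword exists, raising the relation $ax+by+cz+dw=0$ to the $q$-th power (so that $x^q=x^{-1}$ on $U_{q+1}$), and showing the resulting $4\times 4$ Vandermonde-type determinant is nonzero; it then proves $d((\overline{\C_{(q,q+1,3,1)}})^\perp)=q-3$ by writing the dual codewords as $(\bc_{(a,b)}+c\bone,c)$ via Delsarte's trace representation and observing that $\tr_{q^2/q}(au+bu^2)+c$ has at most four zeros on $U_{q+1}$. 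You replace both computations by a single structural fact: the subcode $\C'=\{\bc\in\C:\ c(1)=0\}$ is cyclic with defining set $\{0,\pm 1,\pm 2\}\pmod{q+1}$, so the BCH bound gives $d(\C')\geq 6$. Used once, this splits the codewords of $\overline{\C}$ according to whether $c(1)=0$ and yields $d(\overline{\C})=5$; used again through the duality $(\widetilde{\C^\perp})^\perp=\C\cap\langle\bone\rangle^\perp=\C'$, it shows $\C'$ (hence its dual $\widetilde{\C^\perp}$) is MDS, which pins down $d((\overline{\C})^\perp)=q-3$ after the easy case split on the last coordinate. Both arguments then finish identically via Theorem \ref{thm-loca-ext-cycliccode} and Theorem \ref{thm-nmdslocality0}. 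Your route is shorter and more conceptual — no root-of-unity determinants or zero-counting — and it isolates the reusable principle that extending by the overall parity check adjoins the root $\beta^0$ to the defining set, lengthening the consecutive run; the paper's heavier computation, on the other hand, produces explicit descriptions of low-weight codewords, which is the kind of information the authors exploit elsewhere (weight enumerators, design-theoretic statements). One small point of care in your write-up that does hold: $\bone\notin\C^\perp$ (since $0$ is not in the defining set of $\C$), which is needed both for the augmented code $\widetilde{\C^\perp}$ to be well defined and for the nonvanishing of $\bw=\bv+a\bone$ when $a\neq 0$.
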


\begin{proof}
Put $n=q+1$.
Let $\alpha$ be a generator of $\gf(q^2)^*$ and $\beta=\alpha^{q-1}$. Then $\beta$ is an $n$-th root of unity
in $\gf(q^2)$. Let $\m_\beta(x)$ and $\m_{\beta^2}(x)$ denote the minimal polynomial of $\beta$ and $\beta^2$ over $\gf(q)$,
respectively.
Note that $\m_\beta(x)$ has only roots $\beta$ and $\beta^q$ and  $\m_{\beta^2}(x)$ has roots $\beta^2$ and $\beta^{q-1}$.
We deduce that $\m_\beta(x)$ and $\m_{\beta^2}(x)$ are distinct irreducible polynomials of degree $2$.
By definition, $g(x):=\m_\beta(x)\m_{\beta^2}(x)$ is the generator polynomial of $\C_{(q, q+1, 3,1)}$.  Put $\gamma=\beta^{-1}$. Then $\gamma^{q+1}=\beta^{-(q+1)}=1$.
It then follows from Delsarte's theorem that the trace expression of $\C_{(q, q+1, 3,1)}^\perp$ is given by
\begin{eqnarray*}
\C_{(q, q+1, 3,1)}^\perp=\{\bc_{(a,b)}: a, b \in \gf(q^2)\},
\end{eqnarray*}
where $\bc_{(a,b)}=(\tr_{q^2/q}(a\gamma^i +b \gamma^{2i}))_{i=0}^q$. Define
\begin{eqnarray*}
H=\left[
\begin{array}{rrrrr}
1  & \gamma^1 & \gamma^2 & \cdots & \gamma^q \\
1  & \gamma^2 & \gamma^4 & \cdots & \gamma^{2q}
\end{array}
\right].
\end{eqnarray*}
It is easily seen that $H$ is a parity-check matrix of $\C_{(q, q+1, 3,1)}$, i.e.,
$$
\C_{(q, q+1, 3,1)}=\{\bc \in \gf(q)^{q+1}: \bc H^T=\bzero\}.
$$
Note that $\{1, \gamma\}$ is a basis of $\gf(q^2)$ over $\gf(q)$. Every $\gamma^j$ can be expressed as
$\gamma^j=a_{j,0}+a_{j,1}\gamma$, where $a_{j,0} \in \gf(q)$ and $a_{j,1} \in \gf(q)$. Later, $H$ refers to
the $4 \times n$ matrix over $\gf(q)$ defined by these $a_{j,i}$.

From Lemma \ref{thm-extendedCodeParam}, $\overline{\C_{(q, q+1, 3,1)}}$ has parameters $[q+2,q-3,d(\overline{\C_{(q, q+1, 3,1)}})]$ and the parity-check matrix $\overline{H}$ of $\overline{\C_{(q, q+1, 3,1)}}$ is
\begin{eqnarray*}
\overline{H}=\left[
\begin{array}{ll}
\bone & 1 \\
H &     \bzero
\end{array}
\right],
\end{eqnarray*}
where $\bone=(1,1, \ldots, 1)$ and $\bzero=(0,0, \ldots, 0)^T$.

Note that $d(\overline{\C_{(q, q+1, 3,1)}})=d(\C_{(q, q+1, 3,1)})+1$ or $d(\overline{\C_{(q, q+1, 3,1)}})=d(\C_{(q, q+1, 3,1)})$. Now we prove that $d(\overline{\C_{(q, q+1, 3,1)}})=d(\C_{(q, q+1, 3,1)})+1=5$. Let $U_{q+1}$ denote the set of all $(q+1)$-th roots of unity in $\gf(q^2)$. Suppose $d(\overline{\C_{(q, q+1, 3,1)}})=d(\C_{(q, q+1, 3,1)})=4$. Then there are four pairwise distinct elements $x, y, z, w$ in $U_{q+1}$ such that
\begin{eqnarray}\label{eqn-tchm105}
a \left[ \begin{array}{c}
1 \\
x \\
x^2
\end{array}
\right]
+
b \left[ \begin{array}{c}
1 \\
y \\
y^2
\end{array}
\right]
+
c \left[ \begin{array}{c}
1 \\
z \\
z^2
\end{array}
\right]
+
d \left[ \begin{array}{c}
1 \\
w \\
w^2
\end{array}
\right]
=0,
\end{eqnarray}
where $a, b, c,d \in \gf(q)^*$.
Raising to the $q$-th power both sides of the equation $ax+by+cz+dw=0$ yields
\begin{eqnarray}\label{eqn-tchm106}
ax^{-1}+by^{-1}+cz^{-1}+dw^{-1}=0.
\end{eqnarray}
Combining (\ref{eqn-tchm105}) and (\ref{eqn-tchm106}) gives
\begin{eqnarray*}
a \left[ \begin{array}{c}
x^{-1} \\
1 \\
x \\
x^2
\end{array}
\right]
+
b \left[ \begin{array}{c}
y^{-1} \\
1 \\
y \\
y^2
\end{array}
\right]
+
c \left[ \begin{array}{c}
z^{-1} \\
1 \\
z \\
z^2
\end{array}
\right]
+
d \left[ \begin{array}{c}
w^{-1} \\
1 \\
w \\
w^2
\end{array}
\right]
=0.
\end{eqnarray*}
It then follows that
\begin{eqnarray*}
\left|
\begin{array}{llll}
x^{-1} & y^{-1} & z^{-1} & w^{-1} \\
1 & 1 & 1 & 1 \\
x  & y & z & w \\
x^2 & y^2 & z^2 & w^{2}
\end{array}
\right|=\frac{(x-y)(x-z)(x-w)(y-z)(y-w)(z-w)}{xyzw}
=0.
\end{eqnarray*}
This is contrary to our assumption that $x, y, z,w$ are pairwise distinct. Hence,
\begin{eqnarray}\label{eqn-21J25}
d(\overline{\C_{(q, q+1, 3,1)}})=d(\C_{(q, q+1, 3,1)})+1=5.
\end{eqnarray}

We now prove the following equalities:
\begin{eqnarray}\label{eqn-2021jan25}
(\overline{\C_{(q, q+1, 3,1)}})^\perp = \widetilde{ \overline{\C_{(q, q+1, 3,1)}^\perp} },
\end{eqnarray}
where $\widetilde{D}$ denotes the augmented code of a code $D$. It is easily verified that the sum of all coordinates in each codeword
 $$
 \bc_{(a,b)}=(\tr_{q^2/q}(a\gamma^i +b \gamma^{2i}))_{i=0}^q
 $$
 is zero, as both $\gamma$ and $\gamma^2$ are $n$-th roots of unity. Consequently,
 $\overline{\C^\perp}$ is generated by the matrix $[H \bzero]$, where $H$ is the $4 \times n$ matrix over $\gf(q)$ defined above. Then the equality in  (\ref{eqn-2021jan25}) follows from Lemma  \ref{thm-extendedCodeParam}.
 By (\ref{eqn-21J25}), we conclude that the all-one vector is not a codeword in $\C_{(q, q+1, 3,1)}^\perp$.
It then follows from (\ref{eqn-2021jan25}) that
$$
\dim ((\overline{\C_{(q, q+1, 3,1)}})^\perp) = 1+ \dim(\C_{(q, q+1, 3,1)}^\perp) = 5.
$$

Now we prove $d((\overline{\C_{(q, q+1, 3,1)}})^\perp)=q-3$. Note that $(\overline{\C_{(q, q+1, 3,1)}})^\perp$ has generator matrix $\overline{H}$ and
\begin{eqnarray*}
\C_{(q, q+1, 3,1)}^\perp=\{\bc_{(a,b)}: a, b \in \gf(q^2)\},
\end{eqnarray*}
where $\bc_{(a,b)}=(\tr_{q^2/q}(a\gamma^i +b \gamma^{2i}))_{i=0}^q$.
It follows from (\ref{eqn-2021jan25}) that
the codewords in $(\overline{\C_{(q, q+1, 3,1)}})^\perp$ have the form $(\bc_{(a,b)}+c\bone,c)$, where $c\in \gf(q)$.
Let $u \in U_{q+1}$. Then
\begin{eqnarray*}
\tr_{q^2/q}(au+bu^2)+c
&=& au+bu^2 + a^qu^{-1}+b^q u^{-2}+c \\
&=& u^{-2}(bu^4+au^3 +a^qu+b^q+cu^2).
\end{eqnarray*}
Hence, there are at most four $u \in U_{q+1}$ such that $\tr_{q^2/q}(au+bu^2)+c=0$ if $(a, b, c) \neq (0,0,0)$.
As a result, for $(a, b, c) \neq (0,0,0)$,
we have
$$
\wt((\bc_{(a,b)}+c\bone,c))=\wt(\bc_{(a,b)}+c\bone)+1 \geq q+1-4+1=q-2,
$$
and for $(a, b) \neq (0,0), c=0$, we have
$$
\wt((\bc_{(a,b)},0))=\wt(\bc_{(a,b)}) \geq q+1-4=q-3.
$$
This means that $d((\overline{\C_{(q, q+1, 3,1)}})^\perp)\geq q-3$. If $d((\overline{\C_{(q, q+1, 3,1)}})^\perp)=q-2$, then $(\overline{\C_{(q, q+1, 3,1)}})^\perp$ would be an MDS code and $\overline{\C_{(q, q+1, 3,1)}}$ would also be an MDS code, which leads to a contradiction. We then conclude that $d((\overline{\C_{(q, q+1, 3,1)}})^\perp)=q-3$. Now both $\overline{\C_{(q, q+1, 3,1)}}$ and its dual are AMDS. Since $d(\overline{\C_{(q, q+1, 3,1)}})=5=d(\C_{(q, q+1, 3,1)})+1$, by Theorem \ref{thm-loca-ext-cycliccode} we deduce that
$(\overline{\C_{(q, q+1, 3,1)}})^\perp$ has locality $d(\C_{(q, q+1, 3,1)})=4$.
The optimality of $(\overline{\C_{(q, q+1, 3,1)}})^\perp$ then
follows  from Theorem \ref{thm-nmdslocality0}.
\end{proof}

\begin{theorem}\label{thm-2SQScodeext}
Let $q=2^s$ with $s \geq 4$ being even. Then the extended code $\overline{\C_{(q, q+1, 3,1)}}$ over $\gf(q)$
has parameters $[q+2, q-3, 5]$, and its dual code $(\overline{\C_{(q, q+1, 3,1)}})^\perp$ has parameters
$[q+2, 5, q-3]$. Furthermore, $(\overline{\C_{(q, q+1, 3,1)}})^\perp$ is a $d$-optimal and $k$-optimal
$(q+2, 5, q-3, q; 4)$-LLRC.
\end{theorem}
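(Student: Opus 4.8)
The plan is to mirror the proof of Theorem~\ref{thm-SQScodeext}, since the present statement is identical in form and differs only in the underlying characteristic; the essential task is to verify that each arithmetic step carried out there is characteristic-free, and that the starting near-MDS structure is now supplied by Theorem~\ref{thm-SQScode2} in place of Theorem~\ref{thm-SQScode}. Write $\C:=\C_{(q,q+1,3,1)}$ and set $n=q+1$. I would fix a generator $\alpha$ of $\gf(q^2)^*$, put $\beta=\alpha^{q-1}$ and $\gamma=\beta^{-1}$, so that $\gamma$ is a primitive $(q+1)$-th root of unity, and let $U_{q+1}$ denote the group of $(q+1)$-th roots of unity in $\gf(q^2)$. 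By Theorem~\ref{thm-SQScode2}, $\C$ is a $[q+1,q-3,4]$ NMDS code and $\C^\perp$ is $[q+1,4,q-3]$; by Delsarte's theorem $\C^\perp=\{(\tr_{q^2/q}(a\gamma^i+b\gamma^{2i}))_{i=0}^q:a,b\in\gf(q^2)\}$, and the rows $(\gamma^i)_i$ and $(\gamma^{2i})_i$, expanded over $\gf(q)$, furnish a parity-check matrix $H$ of $\C$.

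The first substantive step is to prove $d(\overline{\C})=d(\C)+1=5$. I would argue by contradiction: any weight-$4$ codeword of $\overline{\C}$ must avoid the appended coordinate, for otherwise its restriction would be a weight-$3$ codeword of $\C$, contradicting $d(\C)=4$. Such a codeword supplies four pairwise distinct $x_1,x_2,x_3,x_4\in U_{q+1}$ and nonzero coefficients $c_1,c_2,c_3,c_4\in\gf(q)$ with $\sum_\ell c_\ell x_\ell^{\,j}=0$ for $j=0,1,2$. Raising the $j=1$ relation to the $q$-th power and using $u^q=u^{-1}$ for $u\in U_{q+1}$ produces a fourth relation $\sum_\ell c_\ell x_\ell^{-1}=0$, and the resulting $4\times 4$ coefficient matrix has determinant $\prod_{i<j}(x_i-x_j)/(x_1x_2x_3x_4)$, which is nonzero since the $x_\ell$ are distinct---a contradiction. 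Both the identity $u^q=u^{-1}$ and this Vandermonde-type evaluation hold verbatim when $q$ is a power of $2$, so nothing changes from the characteristic-$3$ case.

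Next I would identify the dual. Since $\sum_{i=0}^q\gamma^i=\sum_{i=0}^q\gamma^{2i}=0$, every codeword $\bc_{(a,b)}$ has zero coordinate sum, so $\overline{\C^\perp}$ is generated by $[H\ \bzero]$ and, by Lemma~\ref{thm-extendedCodeParam}, $(\overline{\C})^\perp=\widetilde{\overline{\C^\perp}}$. The equality $d(\overline{\C})=5=d(\C)+1$ forces the all-one vector out of $\C^\perp$, so the augmentation strictly raises the dimension and $\dim((\overline{\C})^\perp)=\dim(\C^\perp)+1=5$. To compute the dual distance I would take a generic codeword $(\bc_{(a,b)}+c\bone,c)$ with $c\in\gf(q)$ and note that, for $u\in U_{q+1}$, $\tr_{q^2/q}(au+bu^2)+c=0$ iff the quartic $bu^4+au^3+cu^2+a^qu+b^q$ vanishes; being a nonzero polynomial of degree at most $4$ whenever $(a,b,c)\neq(0,0,0)$, it has at most four roots in $U_{q+1}$, so the restriction to the first $q+1$ coordinates has weight at least $q-3$. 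Ruling out the value $q-2$ (which would make $(\overline{\C})^\perp$, and hence $\overline{\C}$, MDS, contradicting $d(\overline{\C})=5$) then pins $d((\overline{\C})^\perp)=q-3$; again every displayed computation is characteristic-free.

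Assembling the pieces, $\overline{\C}$ is $[q+2,q-3,5]$ and $(\overline{\C})^\perp$ is $[q+2,5,q-3]$, and $5+(q-3)=q+2$ confirms the pair is NMDS. Theorem~\ref{thm-loca-ext-cycliccode} then gives $(\overline{\C})^\perp$ minimum linear locality $d(\C)=4$, and Theorem~\ref{thm-nmdslocality0} upgrades this to both $d$-optimality and $k$-optimality of the resulting $(q+2,5,q-3,q;4)$-LLRC. I expect the only genuine obstacle to lie in the two distance computations---$d(\overline{\C})=5$ and $d((\overline{\C})^\perp)=q-3$---since a hidden characteristic-$2$ degeneracy could in principle appear there; the reassuring point is that both reduce, respectively, to the nonvanishing of a Vandermonde-type determinant and to the at-most-four-roots bound for a degree-$4$ polynomial, neither of which depends on the characteristic.
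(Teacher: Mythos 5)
Your proposal is correct and takes essentially the same approach as the paper: the paper's own proof of this theorem is omitted with the remark that it is ``similar to that of Theorem \ref{thm-SQScodeext}'', and your write-up is precisely that adaptation, with Theorem \ref{thm-SQScode2} correctly substituted as the source of the $[q+1,q-3,4]$ NMDS parameters and the key steps (the Vandermonde-determinant argument for $d(\overline{\C})=5$, the quartic root count for $d((\overline{\C})^\perp)\geq q-3$, the MDS contradiction, and the appeal to Theorems \ref{thm-loca-ext-cycliccode} and \ref{thm-nmdslocality0}) all verified to be characteristic-free.
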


\begin{proof}
The proof of this theorem is similar to that of Theorem \ref{thm-SQScodeext} and is omitted.
\end{proof}

Notice that the two theorems above provide not only two families of $d$-optimal and $k$-optimal LLRCs, but also
two families of NMDS codes with new parameters.

\subsection{The linear locality of some NMDS codes from oval polynomials}

Oval polynomials were used to construct NMDS codes in \cite{WH21}. These NMDS codes are not cyclic.
In this subsection, we study the minimum linear locality of some of them. To introduce these codes, we
need oval polynomials. Throughout this subsection, let $q=2^m$, where $m \geq 3$.

An oval polynomial $f(x)$ on $\gf(q)$ is a polynomial such that
\begin{itemize}
\item  $f$ is a permutation polynomial of $\gf(q)$ with $\deg(f)<q$ and $f(0)=0$, $f(1)=1$;  and
\item for each $a \in \gf(q)$, $g_a(x):=(f(x+a)+f(a))x^{q-2}$ is also a permutation polynomial
      of $\gf(q)$.
\end{itemize}
Every oval polynomial $f$ can be used to construct a hyperoval in $\PG(2, \gf(q))$ \cite[Chapter 12]{Dingbook18}.
The following is a list of known infinite families of oval polynomials in the literature.

\begin{theorem}\label{thm-knownopolys}
Let $m \geq 4$ be an integer. The following are oval polynomials of $\gf(q)$, where $q=2^m$.
\begin{itemize}
\item The translation polynomial $f(x)=x^{2^h}$, where $\gcd(h, m)=1$.
\item The Segre polynomial $f(x)=x^6$, where $m$ is odd.
\item The Glynn oval polynomial $f(x)=x^{3 \times 2^{(m+1)/2} +4}$, where $m$ is odd.
\item The Glynn oval polynomial $f(x)=x^{ 2^{(m+1)/2} + 2^{(m+1)/4} }$ for $m \equiv 3 \pmod{4}$.
\item The Glynn oval polynomial $f(x)=x^{ 2^{(m+1)/2} + 2^{(3m+1)/4} }$ for $m \equiv 1 \pmod{4}$.
\item The Cherowitzo oval polynomial $f(x)=x^{2^e}+x^{2^e+2}+x^{3 \times 2^e+4},$ where $e=(m+1)/2$ and $m$ is odd.
\item The Payne oval polynomial $f(x)=x^{\frac{2^{m-1}+2}{3}} + x^{2^{m-1}} + x^{\frac{3 \times 2^{m-1}-2}{3}}$,
        where $m$ is odd.
\item The Subiaco polynomial
$$
f_a(x)=((a^2(x^4+x)+a^2(1+a+a^2)(x^3+x^2)) (x^4 + a^2 x^2+1)^{2^m-2}+x^{2^{m-1}},
$$
where $\tr_{q/2}(1/a)=1$ and $a \not\in \gf(4)$ if $m \equiv 2 \bmod{4}$.
\item The Adelaide oval polynomial
$$
f(x)=\frac{T(\beta^m)(x+1)}{T(\beta)} + \frac{T((\beta x + \beta^q)^m)}{T(\beta) (x+T(\beta)x^{2^{m-1}} +1)^{m-1}} + x^{2^{m-1}},
$$
where $m \geq 4$ is even, $\beta \in \gf(q^2) \setminus \{1\}$ with $\beta^{q+1}=1$, $m \equiv \pm (q-1)/3 \pmod{q+1}$,
and $T(x)=x+x^q$.
\end{itemize}
\end{theorem}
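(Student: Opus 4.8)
This statement collects classical results, so the honest proof is a verification, family by family, of the two defining conditions of an oval polynomial, supplemented by citations for the families whose verification is too long to reproduce. The plan is as follows. Condition (i), that $f$ permutes $\gf(q)$ with $\deg(f)<q$ and $f(0)=0$, $f(1)=1$, is routine throughout: the normalizations are immediate, the degree bound holds under the stated hypotheses, and the permutation property in the monomial cases follows from $\gcd(2^h,2^m-1)=1$ and, for $m$ odd, $\gcd(6,2^m-1)=1$. The substantive condition is (ii): for each $a\in\gf(q)$ the map $g_a(x)=(f(x+a)+f(a))x^{q-2}$ must permute $\gf(q)$. Because $q=2^m$ we have $x^{q-2}=x^{-1}$ for $x\neq0$ and $g_a(0)=0$, so (ii) is exactly the assertion that $\{(1,t,f(t)):t\in\gf(q)\}\cup\{(0,1,0),(0,0,1)\}$ is a hyperoval in $\PG(2,\gf(q))$; I would therefore reduce the whole theorem to a no-three-points-collinear check for each family.

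First I would settle the two elementary families by direct computation. For the translation polynomial $f(x)=x^{2^h}$ the identity $(x+a)^{2^h}=x^{2^h}+a^{2^h}$ collapses $g_a$ to the single monomial $x^{2^h-1}$, which is independent of $a$ and permutes $\gf(q)$ iff $\gcd(2^h-1,2^m-1)=2^{\gcd(h,m)}-1=1$, that is, iff $\gcd(h,m)=1$; this is the only place the hypothesis on $h$ enters. For the Segre polynomial $f(x)=x^6$, expanding in characteristic $2$ gives $f(x+a)+f(a)=x^6+a^2x^4+a^4x^2$, so $g_a(x)=x^5+a^2x^3+a^4x$, and one shows that $g_a(x)=g_a(y)$ with $x\neq y$ is impossible when $m$ is odd; this is Segre's classical argument, elementary but not fully mechanical.

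The remaining families are where the work concentrates, and I would order them by difficulty. The Glynn, Cherowitzo and Payne polynomials are monomials or sparse polynomials, so $g_a$ keeps a manageable shape; for these I would rewrite the injectivity requirement $g_a(x)=g_a(y)$ in terms of the symmetric quantities $s=x+y$ and $p=xy$, reducing it to the nonexistence of nontrivial solutions of an auxiliary equation, which in turn is controlled by trace conditions and by the arithmetic of $\gf(2^m)$ under the stated congruences on $m$ modulo $4$. The genuinely hard part, and the main obstacle, is the Subiaco and Adelaide families: here $f$ is a complicated Laurent-type rational expression, and a direct collinearity check is hopelessly unwieldy. For these I would not attack (ii) head-on but instead invoke the geometric provenance of the polynomials, namely the theory of $q$-clans and flocks of the quadratic cone from which the Subiaco and Adelaide hyperovals are constructed, where the hyperoval property is guaranteed by the construction itself. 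In short, the clean route for the last two families is to cite their original constructions together with the o-polynomial/hyperoval equivalence, rather than to expand $x^{q-2}=x^{-1}$ and grind through the algebra by hand.
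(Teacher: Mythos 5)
The paper does not prove this theorem at all: it is presented as a catalogue of known families, with the surrounding text pointing to the literature (e.g., \cite[Chapter 12]{Dingbook18}) as the source, and the paper immediately moves on to use oval polynomials via Lemma \ref{lem-opoly2to1}. Your proposal is therefore a genuinely different route in that it attempts to do actual work: your dictionary between condition (ii) and the hyperoval property of $\{(1,t,f(t)):t\in\gf(q)\}\cup\{(0,1,0),(0,0,1)\}$ is correct, and your two concrete verifications are right --- for $f(x)=x^{2^h}$ the map $g_a$ collapses to $x^{2^h-1}$, which permutes $\gf(q)$ iff $\gcd(2^h-1,2^m-1)=2^{\gcd(h,m)}-1=1$, and for $f(x)=x^6$ with $m$ odd the expansion $g_a(x)=x^5+a^2x^3+a^4x$ is as you state. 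What your approach buys is a self-contained treatment of the translation and Segre families and a clear reduction of the rest to classical geometry; what the paper's approach buys is honesty about the true status of the remaining cases. On that point you are somewhat over-optimistic: the Glynn, Cherowitzo and Payne families do not reduce to a ``manageable'' symmetric-function-plus-trace argument --- Glynn's proof of his two monomial families and Cherowitzo's proof for his hyperoval are substantial, difficult papers in their own right, so in any actual write-up those cases would have to be cited exactly as you propose to do for Subiaco and Adelaide. With that adjustment your plan lands where the paper does: elementary computation for the first two families, citation for all the others.
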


The following lemma will be needed later \cite{Masch98}.

\begin{lemma}\label{lem-opoly2to1}
A polynomial $f$ over $\gf(q)$ with $f(0)=0$ is an oval polynomial if and only if $f_u:=f(x)+ux$
is $2$-to-$1$ for every $u \in \gf(q)^*$.
\end{lemma}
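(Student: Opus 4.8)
The plan is to work throughout in characteristic $2$, where $q=2^m$, and to exploit the identity $x^{q-2}=x^{-1}$ for $x \in \gf(q)^*$ together with $0^{q-2}=0$. Thus the auxiliary map in the definition of an oval polynomial may be written as $g_a(x)=(f(x+a)+f(a))/x$ for $x \neq 0$ and $g_a(0)=0$. The substantive content of the lemma is the equivalence between the condition ``$g_a$ is a permutation of $\gf(q)$ for every $a$'' and the condition ``$f_u$ is $2$-to-$1$ for every $u \in \gf(q)^*$''; the remaining requirements in the definition of an oval polynomial, namely $\deg(f)<q$ and $f(1)=1$, are harmless normalizations (we always take $f$ to be the reduced representative, and replacing $f$ by $f/f(1)$ preserves the $2$-to-$1$ property for all $u \neq 0$, since $(\lambda f)_u=\lambda f_{u/\lambda}$ and scaling by $\lambda \neq 0$ is a bijection). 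The bridge between the two conditions is the substitution $x=y+a$, which for $u \neq 0$ rewrites the equation $g_a(x)=u$ as $f(y)+uy=f(a)+ua$, i.e.\ as $f_u(y)=f_u(a)$ with $y \neq a$.

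For the forward implication, suppose $f$ is an oval polynomial; then $f$ is a permutation and every $g_a$ is a permutation. Fix $u \in \gf(q)^*$ and $a \in \gf(q)$, and count the solutions $y$ of $f_u(y)=f_u(a)$. Clearly $y=a$ is one solution. For $y \neq a$, putting $x=y+a \neq 0$ shows that such $y$ correspond bijectively to the $x \neq 0$ with $g_a(x)=u$. Since $g_a$ is a permutation and $g_a(0)=0 \neq u$, there is exactly one such $x$, hence exactly one $y \neq a$. Thus the fiber of $f_u$ through each $a$ has exactly two elements, so $f_u$ is $2$-to-$1$; as $u$ was arbitrary, this holds for all $u \in \gf(q)^*$.

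For the converse, assume $f_u$ is $2$-to-$1$ for every $u \in \gf(q)^*$. First I would show $f$ is a permutation by a counting argument: for each unordered pair $\{x,y\}$ with $x \neq y$ there is exactly one scalar $v=(f(x)+f(y))/(x+y)$ with $f_v(x)=f_v(y)$, so the pairs are partitioned according to this value $v$. Each $f_u$ with $u \neq 0$ accounts for exactly $q/2$ such pairs (one per point of its image), giving $(q-1)\cdot q/2=\binom{q}{2}$ pairs in total, which already exhausts all unordered pairs; hence no pair has $v=0$, i.e.\ $f(x)=f(y)$ forces $x=y$, so $f$ is injective and thus a permutation. With injectivity in hand I would then verify that each $g_a$ is a bijection: $g_a(x)=0$ forces $f(x+a)=f(a)$ and hence $x=0$, while for $u \neq 0$ the substitution $x=y+a$ shows $g_a(x)=u$ has exactly one solution, corresponding to the unique $y \neq a$ with $f_u(y)=f_u(a)$ guaranteed by the $2$-to-$1$ property. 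Thus $g_a$ hits $0$ once and each nonzero value once, so it is a permutation of $\gf(q)$, and $f$ is an oval polynomial.

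The main obstacle I anticipate is the reverse direction, specifically deducing that $f$ is a permutation from the $2$-to-$1$ hypothesis: this fact is not visible locally and seems to require the global counting identity $(q-1)\cdot q/2=\binom{q}{2}$ that forces the value $v=0$ never to occur. The other delicate point, present in both directions, is the careful matching of the degenerate cases (the correspondence $x=0 \leftrightarrow u=0$ and the trivial solution $y=a$) under the substitution $x=y+a$, so that the fibers of $f_u$ and the preimages under $g_a$ are counted without double-counting or omission.
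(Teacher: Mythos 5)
Your proof is correct, but note that the paper itself offers no proof of this lemma at all: it is quoted as a known result from Maschietti \cite{Masch98}, so there is no internal argument to compare against. Your reconstruction is a complete, self-contained substitute, and both directions check out: the substitution $x=y+a$ correctly converts fibers of $f_u$ through $a$ into preimages $g_a^{-1}(u)$ with the degenerate cases ($y=a$ versus $x=0$, and $u=0$ versus $g_a(0)=0$) matched without loss; and the genuinely nontrivial step in the converse --- deducing that $f$ is a permutation --- is handled by the pigeonhole identity $(q-1)\cdot q/2=\binom{q}{2}$, which forces the class of pairs with $f(x)=f(y)$ to be empty. That counting argument is exactly the right insight and is where a naive attempt would stall, since the $2$-to-$1$ hypothesis says nothing directly about the fibers of $f$ itself. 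One point worth making explicit if this were to be written up: the lemma as stated in the paper is literally false without a normalization, since $f(x)=cx^2$ with $c\notin\{0,1\}$ has $f_u$ $2$-to-$1$ for every $u\in\gf(q)^*$ yet fails $f(1)=1$; your remark that $(\lambda f)_u=\lambda f_{u/\lambda}$, so that scaling by $1/f(1)$ (legitimate because your converse argument already gives $f(1)\neq f(0)=0$) preserves the hypothesis, is the correct repair, and it is good that you flagged it rather than silently assuming $f(1)=1$.
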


\subsubsection{NMDS codes with parameters $[q+3, 3, q]$ from oval polynomials}

Let $f$ be a polynomial over $\gf(q)$ with $f(0)=0$ and $f(1)=1$. Let $\alpha$ be a generator of $\gf(q)^*$.
Define
\begin{eqnarray*}
\bar{B}_f=\left[
\begin{array}{llllllll}
f(0) & f(\alpha^0)  & f(\alpha^1) & \cdots & f(\alpha^{q-2}) & 1 & 0 & 1\\
0 &\alpha^0  & \alpha^1 & \cdots & \alpha^{q-2} & 0 & 1 & 1\\
1      & 1         &   1           & \cdots & 1                    & 0 & 0 & 0
\end{array}
\right].
\end{eqnarray*}
Let $\bar{\C}_f$ denote the linear code over $\gf(q)$ with generator matrix $\bar{B}_f$. The following theorem was proved in
\cite{WH21}.

\begin{theorem}\label{thm-J271}
Let $m \geq 3$, and let $f$ be an oval polynomial over $\gf(q)$. Then the code $\bar{\C}_f$ is an NMDS code
over $\gf(q)$ with parameters $[q+3, 3, q]$ and weight enumerator
\begin{eqnarray*}
1+\frac{(q-1)(q+2)}{2} z^q + \frac{(q-1)q(q+2)}{2}z^{q+1} +\frac{(q-1)q}{2} z^{q+2} + \frac{(q-2)(q-1)q}{2} z^{q+3}.
\end{eqnarray*}
\end{theorem}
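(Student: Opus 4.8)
The plan is to read off the weights directly from the generator matrix $\bar{B}_f$. Every codeword of $\bar{\C}_f$ has the form $\bu\bar{B}_f$ for some $\bu=(u_0,u_1,u_2)\in\gf(q)^3$, and its Hamming weight equals $q+3$ minus the number of columns $\bg$ of $\bar{B}_f$ with $\bu\cdot\bg=0$. Since the three special columns $(1,0,0)^T,(0,1,0)^T,(1,1,0)^T$ span a $2$-dimensional space lying in the plane with last coordinate $0$, while the column for $x=0$ is $(0,0,1)^T$ (using $f(0)=0$), the matrix $\bar{B}_f$ has rank $3$, so $\dim(\bar{\C}_f)=3$ and the length is $q+3$. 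It therefore remains only to determine, for each nonzero $\bu$, the number of vanishing coordinates, and then to assemble these counts into the claimed weight enumerator.

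First I would separate the columns into the $q$ \emph{affine} columns $(f(x),x,1)^T$, $x\in\gf(q)$, and the three special columns above. For the affine part, the number of zeros of $\bu\cdot(f(x),x,1)^T=u_0 f(x)+u_1 x+u_2$ is the number of solutions $x\in\gf(q)$ of this equation. When $u_0=0$ the equation is linear in $x$ and is handled elementarily. When $u_0\neq 0$ I would divide by $u_0$ and set $v=u_1/u_0$, $w=u_2/u_0$, reducing (in characteristic $2$, where $-w=w$) to counting solutions of $f(x)+vx=w$. Here the oval hypothesis enters: by Lemma \ref{lem-opoly2to1}, for every $v\in\gf(q)^*$ the map $f_v(x)=f(x)+vx$ is $2$-to-$1$, so this equation has exactly $0$ or $2$ solutions according to whether $w\in\image(f_v)$, with $|\image(f_v)|=q/2$; for $v=0$ the equation $f(x)=w$ has exactly one solution since $f$ is a permutation polynomial. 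For the special columns, $\bu$ annihilates $(1,0,0)^T,(0,1,0)^T,(1,1,0)^T$ exactly when $u_0=0$, $u_1=0$, $u_0+u_1=0$ respectively, so their contribution depends only on $(u_0,u_1)$.

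I would then run the case analysis on $(u_0,u_1)$, equivalently on whether $v\in\{0,1\}$. The main obstacle, and the place where characteristic $2$ is essential, is the case $v=1$ (that is, $u_1=u_0\neq 0$): then $u_0+u_1=0$ contributes one extra special zero, while the affine part contributes $2$ zeros exactly when $w\in\image(f_1)$, producing the minimum weight $q$, together with the $q-1$ codewords $(0,0,u_2)\bar{B}_f$ arising from the third row of $\bar{B}_f$, which is all-ones on the $q$ affine coordinates and zero on the special ones. Tracking the image sizes $q/2$ across the subcases $v=0$, $v=1$, and $v\notin\{0,1\}$, and whether $w$ lies in the relevant image, yields the four nonzero weights $q,q+1,q+2,q+3$ with multiplicities $\tfrac{(q-1)(q+2)}{2}$, $\tfrac{(q-1)q(q+2)}{2}$, $\tfrac{(q-1)q}{2}$, $\tfrac{(q-2)(q-1)q}{2}$; a final check that these multiplicities sum to $q^3$ confirms that no case has been missed or double-counted.

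Finally I would establish that $\bar{\C}_f$ is NMDS. The weight enumerator gives $d(\bar{\C}_f)=q=n-k$, so $\bar{\C}_f$ is AMDS. For the dual distance I would argue directly from $\bar{B}_f$: every column is nonzero and no two columns are proportional (the affine columns all have last coordinate $1$ and are distinguished by their middle entry $x$, while the special columns are pairwise non-proportional and have last coordinate $0$), hence $d(\bar{\C}_f^\perp)\geq 3$; since $(1,0,0)^T+(0,1,0)^T+(1,1,0)^T=\bzero$ in characteristic $2$, there exist three linearly dependent columns, so $d(\bar{\C}_f^\perp)=3$. Therefore $d(\bar{\C}_f)+d(\bar{\C}_f^\perp)=q+3=n$, and $\bar{\C}_f$ is NMDS by the characterization recalled in Section \ref{sec-localityofNMDS}.
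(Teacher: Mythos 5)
Your proof is correct, and the comparison here is unusual: the paper does not actually prove Theorem \ref{thm-J271} at all --- it quotes the statement from \cite{WH21} --- so your argument is not a variant of an internal proof but a self-contained replacement for an external citation. Your route is the natural one and it checks out. The decomposition into the $q$ affine columns $(f(x),x,1)^T$ and the three special columns, with the weight of $\bu\bar{B}_f$ equal to $q+3$ minus the number of columns annihilated by $\bu=(u_0,u_1,u_2)$, reduces everything to counting roots of $u_0f(x)+u_1x+u_2$: the permutation property of $f$ handles the cases where exactly one of $u_0,u_1$ vanishes, and Lemma \ref{lem-opoly2to1} handles $v=u_1/u_0\neq 0$, where $f_v$ is $2$-to-$1$ with $|\image(f_v)|=q/2$. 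Assembling your cases: $u_0=u_1=0$ gives $q-1$ words of weight $q$, and $v=1$ with $w=u_2/u_0\in\image(f_1)$ gives $(q-1)q/2$ more, so $A_q=(q-1)(q+2)/2$; exactly one of $u_0,u_1$ zero gives $2(q-1)q$ words of weight $q+1$, and $v\notin\{0,1\}$ with $w\in\image(f_v)$ gives $(q-1)(q-2)q/2$ more, so $A_{q+1}=(q-1)q(q+2)/2$; the two image-missing cases give $A_{q+2}=(q-1)q/2$ and $A_{q+3}=(q-1)(q-2)q/2$; these sum to $q^3-1$, confirming completeness. Your dual-distance argument is also sound: the columns are nonzero and pairwise non-proportional (affine columns are distinguished by their middle entry and cannot be proportional to special columns because of the last coordinate), so $d(\bar{\C}_f^\perp)\geq 3$, while the dependency $(1,0,0)^T+(0,1,0)^T+(1,1,0)^T=\bzero$ in characteristic $2$ gives equality; then $d(\bar{\C}_f)+d(\bar{\C}_f^\perp)=q+3=n$ together with the characterization of NMDS codes from \cite{DL95}, recalled at the start of Section \ref{sec-localityofNMDS}, finishes the proof. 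What your approach buys over the paper's bare citation is transparency: it exhibits exactly where the permutation property, the $2$-to-$1$ property, and the characteristic-$2$ assumption enter, and it produces the explicit minimum-weight codewords and dual weight-$3$ codewords that the paper's own Theorems \ref{thm-21021-211} and \ref{thm-2345} later need.
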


\begin{theorem}\label{thm-21021-211}
The dual code $(\bar{\C}_f)^\perp$ is a $d$-optimal and $k$-optimal $(q+3, q, 3, q; q-1)$-LLRC.
\end{theorem}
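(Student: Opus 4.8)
The plan is to leverage the NMDS structure. Since $\bar{\C}_f$ is NMDS by Theorem \ref{thm-J271}, its dual $(\bar{\C}_f)^\perp$ is again NMDS with parameters $[q+3,q,3]$, and $((\bar{\C}_f)^\perp)^\perp=\bar{\C}_f$ has minimum distance $q$. Lemma \ref{lem-localitylem} then gives that the minimum linear locality of $(\bar{\C}_f)^\perp$ is at least $q-1$, and Theorem \ref{thm-myJ2121} narrows it to either $q-1$ or $q$. By Corollary \ref{cor-newn0} the value is exactly $q-1$ precisely when $\bigcup_{S\in\cB_q(\bar{\C}_f)}S=[q+3]$, i.e. when the supports of the weight-$q$ codewords of $\bar{\C}_f$ cover all $q+3$ coordinates (equivalently, by Theorem \ref{thm-newJ1821}, when those minimum weight codewords span $\bar{\C}_f$). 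Once this covering is in hand, both the $d$-optimality and the $k$-optimality follow directly from Theorem \ref{thm-nmdslocality0}. So the whole theorem reduces to a single covering statement about the minimum weight codewords of $\bar{\C}_f$.

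First I would recast this covering statement in finite geometry. The columns of $\bar{B}_f$ are the $q$ affine points $(f(x),x,1)$ together with the three points $(1,0,0)$, $(0,1,0)$, $(1,1,0)$ at infinity. By the oval-polynomial characterisation (Lemma \ref{lem-opoly2to1}) the affine points together with $(1,0,0)$ and $(0,1,0)$ form a hyperoval $\mathcal{O}$ in $\PG(2,\gf(q))$, whereas $(1,1,0)$ is an extra point lying off $\mathcal{O}$. A nonzero codeword is the evaluation of a functional; its weight is $q+3$ minus the number of these $q+3$ points on the associated line, and its support is the complement of that intersection. Hence weight-$q$ codewords correspond exactly to the lines meeting the point set in $3$ points, and a coordinate is covered by some minimum weight codeword iff some $3$-point line misses the corresponding point.

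The key step, which I expect to be the main obstacle, is to determine which coordinates are covered. Since $\mathcal{O}$ is a hyperoval every line meets it in $0$ or $2$ points, so a $3$-point line must be either the line at infinity (carrying $(1,0,0),(0,1,0),(1,1,0)$) or a secant of $\mathcal{O}$ through the extra point $(1,1,0)$; using that $f(x)+x$ is $2$-to-$1$ there are $q/2$ of the latter, matching the $z^q$ coefficient. The difficulty is that \emph{every} such line passes through $(1,1,0)$, so this single coordinate lies in the zero set of every weight-$q$ codeword and appears never to occur in a minimum weight support, while all other coordinates are readily covered (e.g. the affine points by the line at infinity, and $(1,0,0),(0,1,0)$ by the affine secants through $(1,1,0)$). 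Thus the whole argument hinges on the coordinate $(1,1,0)$: the earliest codewords containing it in their support seem to have weight $q+1$ (for instance the secants of $\mathcal{O}$ through $(1,0,0)$), which by Corollary \ref{cor-newn0} would force minimum linear locality $q$ rather than $q-1$. I would therefore concentrate all the effort on this one coordinate, since the $d$-optimality assertion stands or falls with whether a weight-$q$ codeword of $\bar{\C}_f$ can be made to cover $(1,1,0)$; producing such a codeword, or otherwise reconciling the covering condition with the claimed locality $q-1$, is the crux of the proof.
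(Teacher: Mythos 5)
Your reduction --- Lemma \ref{lem-localitylem}, Theorem \ref{thm-myJ2121}, Corollary \ref{cor-newn0}, then Theorem \ref{thm-nmdslocality0} --- is exactly the paper's route, and your finite-geometry analysis of the covering condition is correct and complete enough to settle the question. The weight-$q$ codewords of $\bar{\C}_f$ correspond to the lines meeting the $(q+3)$-point configuration in exactly $3$ points; since the $q+2$ points other than $(1,1,0)$ form a hyperoval, every $3$-point line is either the line at infinity or one of the $q/2$ secants through $(1,1,0)$, i.e. $(q+2)/2$ lines in all, which is precisely what the coefficient $\frac{(q-1)(q+2)}{2}$ of $z^q$ in Theorem \ref{thm-J271} accounts for. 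Hence the coordinate of $(1,1,0)$ lies in no minimum-weight support, the covering $\bigcup_{S\in\cB_q(\bar{\C}_f)}S=[q+3]$ fails, and by the ``only if'' direction of Corollary \ref{cor-newn0} (together with Theorem \ref{thm-myJ2121}) the minimum linear locality of $(\bar{\C}_f)^\perp$ is $q$, not $q-1$. What you flagged as ``the crux'' is not an obstacle you failed to overcome: it is a disproof of the covering statement, and hence of the theorem as stated.

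The paper's own proof founders at exactly this point. It asserts that $\bc_1+\bc_3$ and $\bc_2+\bc_3$ are minimum-weight codewords with supports $[q+3]\setminus\{q+1\}$ and $[q+3]\setminus\{q\}$. But $f$ is a permutation of $\gf(q)$ with $f(1)=1$ and the characteristic is $2$, so $\bc_1+\bc_3$ also vanishes at the coordinate of the column $(f(1),1,1)=(1,1,1)$; its support is $[q+3]\setminus\{1,q+1\}$ and its weight is $q+1$, and likewise $\bc_2+\bc_3$ has weight $q+1$. (The claim is even internally inconsistent: a support of size $q+2$ cannot belong to a weight-$q$ codeword.) Geometrically these two words are secants of the hyperoval through $(0,1,0)$, respectively $(1,0,0)$, avoiding $(1,1,0)$ --- exactly the weight-$(q+1)$ words you exhibited as the first ones covering that coordinate. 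The correct conclusion, via the second half of Theorem \ref{thm-nmdslocality0}, is that $(\bar{\C}_f)^\perp$ is an almost $d$-optimal and $k$-optimal $(q+3,q,3,q;q)$-LLRC, and the corresponding entry of Table \ref{tab-2} would need amending. One minor caveat on your write-up: covering by minimum-weight supports is not ``equivalent'' to the minimum-weight codewords spanning the dual (Theorem \ref{thm-newJ1821} gives only the implication from spanning to covering), but this plays no role in your argument.
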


\begin{proof}
Let $\bc_1$, $\bc_2$ and $\bc_3$ denote the first, second and third rows of the generator matrix $\bar{B}_f$,
respectively. By the definition of the
polynomial $f$, it is easily seen that $\bc_1+\bc_3$ and $\bc_2+\bc_3$ are two minimum weight codewords in
$\bar{\C}_f$. In addition, the supports of these two codewords are $[q+3] \setminus \{q+1\}$ and
$[q+3] \setminus \{q\}$, respectively. Clearly,
$$
([q+3] \setminus \{q+1\}) \cup ([q+3] \setminus \{q\}) = [q+3].
$$
By Corollary \ref{cor-newn0},  $(\bar{\C}_f)^\perp$ has minimum linear locality $d(\bar{\C}_f)-1=q-1$.
The desired conclusion then follows from Theorem \ref{thm-nmdslocality0}.
\end{proof}

The minimum linear locality of $\bar{\C}_f$ is given below.

\begin{theorem}\label{thm-2345}
The NMDS code $\bar{\C}_f$ is a $d$-optimal and $k$-optimal $(q+3, 3, q, q; 2)$-LLRC.
\end{theorem}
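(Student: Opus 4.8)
The plan is to prove that $\bar{\C}_f$ is a $d$-optimal and $k$-optimal $(q+3,3,q,q;2)$-LLRC by showing that its minimum linear locality equals $d((\bar{\C}_f)^\perp)-1 = 2$, and then invoking the general optimality theorem for NMDS codes. By Theorem~\ref{thm-J271}, $\bar{\C}_f$ is a nontrivial NMDS code with parameters $[q+3,3,q]$, so $(\bar{\C}_f)^\perp$ has parameters $[q+3,q,3]$ and hence $d((\bar{\C}_f)^\perp)=3$. By Theorem~\ref{thm-nmdslocality0}, if I can establish that $\bar{\C}_f$ has minimum linear locality exactly $d((\bar{\C}_f)^\perp)-1=2$, then $\bar{\C}_f$ is automatically both $d$-optimal and $k$-optimal as a $(q+3,3,q,q;2)$-LLRC. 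So the entire task reduces to computing the locality, which by Corollary~\ref{cor-newn0} amounts to verifying that the supports of the minimum weight (weight-$3$) codewords of $(\bar{\C}_f)^\perp$ cover all of $[q+3]$.

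First I would invoke Theorem~\ref{thm-myJ2121}, which tells us that the minimum linear locality of the NMDS code $\bar{\C}_f$ is either $d((\bar{\C}_f)^\perp)-1=2$ or $d((\bar{\C}_f)^\perp)=3$, and that it equals $2$ precisely when the minimum weight codewords of $(\bar{\C}_f)^\perp$ generate $(\bar{\C}_f)^\perp$. Thus the crux is to show that the weight-$3$ codewords of $(\bar{\C}_f)^\perp$ span the whole dual code, equivalently (by Corollary~\ref{cor-newn0}) that $\bigcup_{S \in \B_3((\bar{\C}_f)^\perp)} S = [q+3]$. Rather than analyze the dual directly, I would exploit Lemma~\ref{lem-121FW}: since $\bar{\C}_f$ is NMDS, each minimum weight codeword of $\bar{\C}_f$ (weight $q$) pairs with a unique minimum weight codeword of $(\bar{\C}_f)^\perp$ (weight $3$) having disjoint support, so the support of a weight-$3$ dual codeword is exactly the complement of the support of a weight-$q$ codeword of $\bar{\C}_f$. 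Therefore $\bigcup_{S \in \B_3((\bar{\C}_f)^\perp)} S = [q+3]$ holds if and only if $\bigcap_{T \in \B_q(\bar{\C}_f)} T = \emptyset$, which is the criterion supplied by Theorem~\ref{thm-21jan191} (applied with the roles of $\C$ and $\C^\perp$ interchanged).

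The main work, then, is to show that no single coordinate lies in the support of \emph{every} minimum weight (weight-$q$) codeword of $\bar{\C}_f$, equivalently that for each position $i \in [q+3]$ there is some weight-$q$ codeword vanishing at $i$. I would read off weight-$q$ codewords from the generator matrix $\bar{B}_f$ directly. A codeword $a\bc_1 + b\bc_2 + c\bc_3$ (where $\bc_1,\bc_2,\bc_3$ are the three rows of $\bar B_f$) has weight $q$ precisely when it has three zero coordinates among the $q+3$ positions; using the explicit columns $(f(\alpha^j),\alpha^j,1)^T$ together with the three special columns $(1,0,0)^T$, $(0,1,0)^T$, $(1,1,0)^T$, I would produce, for each target coordinate $i$, an explicit choice of $(a,b,c)$ killing position $i$ while keeping total weight $q$. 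The oval-polynomial hypothesis enters here through Lemma~\ref{lem-opoly2to1}: the fact that $f(x)+ux$ is $2$-to-$1$ for every $u \in \gf(q)^*$ controls how many of the ``generic'' columns $(f(\alpha^j),\alpha^j,1)^T$ a given linear functional $af(x)+bx+c$ can annihilate, which is exactly what pins down the weight distribution and guarantees enough weight-$q$ codewords with varied supports.

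\textbf{The hard part} will be the bookkeeping in this last step: verifying that the explicitly constructed weight-$q$ codewords genuinely have the right supports covering all $q+3$ positions, especially handling the three anomalous columns and the coordinates associated to $f(0)=0$ and to $\alpha^0,\dots,\alpha^{q-2}$ uniformly. An attractive shortcut, which I would pursue first to avoid case analysis, is to mimic the proof of Theorem~\ref{thm-21021-211}: there, two specific weight-$q$ codewords $\bc_1+\bc_3$ and $\bc_2+\bc_3$ were exhibited with supports $[q+3]\setminus\{q+1\}$ and $[q+3]\setminus\{q\}$. Dualizing via Lemma~\ref{lem-121FW}, these correspond to weight-$3$ dual codewords whose supports already jointly cover $[q+3]$ once a third such codeword (covering positions $q$ and $q+1$ simultaneously) is produced; constructing that one additional weight-$q$ codeword of $\bar{\C}_f$ that \emph{avoids} a chosen coordinate $i \in [q+1]$ — using $2$-to-$1$-ness to place the remaining two zeros among the generic columns — would complete the covering argument and close the proof.
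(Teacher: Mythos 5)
Your high-level framework coincides with the paper's: reduce everything to showing that the minimum linear locality of $\bar{\C}_f$ is $d((\bar{\C}_f)^\perp)-1=2$ via the covering criterion of Corollary \ref{cor-newn0}, then conclude $d$- and $k$-optimality from Theorem \ref{thm-nmdslocality0}. Where you genuinely diverge is the middle step: the paper simply quotes the full characterization of the weight-$3$ codewords of $(\bar{\C}_f)^\perp$ from \cite{WH21} (Cases 1 and 4 of the proof of Theorem 8 there) and checks that their supports cover $[q+3]$, whereas you dualize via Lemma \ref{lem-121FW} and Theorem \ref{thm-21jan191} and work instead with the weight-$q$ codewords of $\bar{\C}_f$ itself. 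That reduction is correct: since $\bar{\C}_f$ is NMDS of length $q+3$, the supports of weight-$3$ dual codewords are exactly the complements of the supports of weight-$q$ codewords, so the covering condition is equivalent to $\bigcap_{T\in\cB_q(\bar{\C}_f)}T=\emptyset$. (Minor caveat: your phrase ``precisely when the minimum weight codewords generate $(\bar{\C}_f)^\perp$'' overstates Theorem \ref{thm-myJ2121}, which is only a sufficient condition; this does not hurt you because you immediately switch to the correct iff criterion of Corollary \ref{cor-newn0}.)

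The genuine gap is that the covering/intersection condition is never actually established, and the shortcut you say you would pursue first rests on a false premise. The vectors $\bc_1+\bc_3$ and $\bc_2+\bc_3$ are \emph{not} minimum weight codewords of $\bar{\C}_f$: since $f(1)=1$ and the characteristic is $2$, the coordinate of $\bc_1+\bc_3$ at the column $(f(\alpha^0),\alpha^0,1)^T=(1,1,1)^T$ (position $1$) equals $f(1)+1=0$, so $\bc_1+\bc_3$ has weight $q+1$ and support $[q+3]\setminus\{1,q+1\}$; likewise $\bc_2+\bc_3$ has weight $q+1$ and support $[q+3]\setminus\{1,q\}$. The supports you quote, of size $q+2$, are impossible for weight-$q$ codewords (which have exactly three zeros), and Lemma \ref{lem-121FW} applies only to minimum weight codewords, so the ``dualization'' of these vectors is meaningless. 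You inherited this claim from the paper's proof of Theorem \ref{thm-21021-211}, but it is erroneous there as well, so it cannot be leaned on. Your fallback plan is the right one and is completable, but it is the entire content of the proof and you leave it unexecuted. Concretely, writing a codeword as $a\bc_1+b\bc_2+c\bc_3$, a case analysis using Lemma \ref{lem-opoly2to1} shows it has weight $q$ if and only if either $a=b=0$, $c\neq 0$ (support $\{0,1,\ldots,q-1\}$), or $a=b\neq 0$ and $f(x)+x=c/a$ has exactly two solutions $x_1,x_2$ (support $[q+3]\setminus\{i_1,i_2,q+2\}$, where $i_1,i_2$ are the positions of $x_1,x_2$); this matches the count $(q-1)(q+2)/2$ in Theorem \ref{thm-J271}. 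Since $x\mapsto f(x)+x$ is $2$-to-$1$, its fibers partition $\gf(q)$, so every position in $\{0,\ldots,q-1\}$ is avoided by some codeword of the second type, while $\bc_3$ avoids $q$, $q+1$ and $q+2$; hence $\bigcap_{T\in\cB_q(\bar{\C}_f)}T=\emptyset$ and your argument closes. Note, incidentally, that every weight-$q$ codeword vanishes at position $q+2$, so no small collection of minimum weight codewords of the kind your shortcut envisages can have supports covering $[q+3]$.
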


\begin{proof}
Recall we use the elements in the set $[q+3]=\{0,1, \ldots, q+2\}$ to index the coordinate positions
of the code  $\bar{\C}_f$ and its dual.
Since all the codewords of weight 3 in $(\bar{\C}_f)^\perp$ were characterised in \cite{WH21},
we outline a proof here only. Notice that the union of the supports of all the codewords
of weight 3 in $(\bar{\C}_f)^\perp$ specified in Case 1 of the proof of Theorem 8 in \cite{WH21} is $\{0,1, \ldots, q-1, q+2\}$,
and  the union of the supports of all the codewords
of weight 3 in $(\bar{\C}_f)^\perp$ specified in Case 4 of the proof of Theorem 8 in \cite{WH21} is $\{q,q+1,  q+2\}$.
It then follows that
$$
\bigcup_{S \in \cB_3((\bar{\C}_f)^\perp)} S =[q+3].
$$
By Corollary \ref{cor-newn0},  $(\bar{\C}_f)$ has minimum linear locality $d((\bar{\C}_f)^\perp) -1=2$.
The desired conclusion then follows from Theorem \ref{thm-nmdslocality0}.
\end{proof}

We remark that the NMDS code $(\bar{\C}_f)$ is an extended hyperoval code
(see for example \cite[Section 12.2]{Dingbook18}). The reader is referred to \cite{WH21} for detail.

\subsubsection{NMDS codes with parameters $[q+1, 3, q-2]$ from oval polynomials}

Let $f$ be a polynomial over $\gf(q)$ with $f(0)=0$ and $f(1)=1$. Let $\alpha$ be a generator of $\gf(q)^*$.
Define
\begin{eqnarray}
G_f=\left[
\begin{array}{llllll}
f(\alpha^0)  & f(\alpha^1) & \cdots & f(\alpha^{q-2}) & 0 & 1 \\
\alpha^0  & \alpha^1 & \cdots & \alpha^{q-2} & 1 & 0 \\
1               &   1           & \cdots & 1                    & 1 & 1
\end{array}
\right].
\end{eqnarray}
Let $\C_f$ denote the linear code over $\gf(q)$ with generator matrix $G_f$. The following result was proved in \cite{WH21}.

\begin{theorem}\label{thm-nmdscodej231}
Let $m \geq 3$ be odd and let $f(x)$ be an oval polynomial over $\gf(q)$ with coefficients in $\gf(2)$. Then $\C_f$ is a $[q+1, 3, q-2]$ NMDS code
over $\gf(q)$ with weight enumerator
\begin{eqnarray*}
A(z)=1 + (q-1)(q-2)z^{q-2} + \frac{(q-1)(q^2-5q+12)}{2} z^{q-1} + \\
(q-1)(4q-5) z^{q} + \frac{(q-1)(q^2-3q+4)}{2} z^{q+1}.
\end{eqnarray*}
\end{theorem}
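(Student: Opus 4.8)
The plan is to compute the full weight distribution of $\C_f$ directly from $G_f$ and then read off the NMDS property. I identify each codeword with a triple $(a,b,c)\in\gf(q)^3$, namely the vector $(a,b,c)G_f$. Reading off the columns of $G_f$, this codeword equals $\big(af(\alpha^i)+b\alpha^i+c\big)_{i=0}^{q-2}$ on the first $q-1$ coordinates, together with the two entries $b+c$ and $a+c$ coming from the columns $(0,1,1)^T$ and $(1,0,1)^T$. Hence, setting
\[
N(a,b,c)=\#\{x\in\gf(q)^*: af(x)+bx+c=0\},
\]
the Hamming weight is $\wt=(q-1-N(a,b,c))+[b+c\neq0]+[a+c\neq0]$, so the whole problem reduces to counting the nonzero roots of $af(x)+bx+c$ and tracking the two indicator terms.

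First I would dispose of the case $a=0$: the equation $bx+c=0$ is linear, $N$ is immediate, and together with the indicators it contributes only to weights $q-1$, $q$, $q+1$ and the zero word. For $a\neq0$ I normalise by $a$ and set $u=b/a$, $w=c/a$, so the root count becomes $\#\{x\in\gf(q)^*: f_u(x)=w\}$ with $f_u(x)=f(x)+ux$, while the indicators become $[u\neq w]$ and $[w\neq1]$. Lemma \ref{lem-opoly2to1} is the engine: $f_0=f$ is a bijection and $f_u$ is $2$-to-$1$ for every $u\neq0$. Combined with $f_u(0)=0$, this pins down $N\in\{0,1,2\}$ in each subcase according to whether $w\in\image(f_u)$ and whether $x=0$ is a solution; the subcases $u=0$ and ($u\neq0,\ w=0$) give $N=1$ and feed weights $q-1$ and $q$.

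The genuine obstacle is the subcase $u\neq0$, $w\neq0$, where $N\in\{0,2\}$ and one must count how the $q/2-1$ values $w$ with $N=2$ and the $q/2$ values with $N=0$ distribute relative to the two conditions $w=u$ and $w=1$ that control the indicators; this bookkeeping is what separates weights $q-2,q-1,q,q+1$. The smallest weight the formula can possibly produce is $q-3$, attained exactly when $N=2$ and both indicators vanish, i.e. when $u=w=1$ and $1\in\image(f_1)$, which means $f(x)+x=1$ is solvable. The heart of the argument is to \emph{rule this out}, and this is precisely where both hypotheses enter. Since $f$ has coefficients in $\gf(2)$ we have $f(x^2)=f(x)^2$, so the solution set of $f(x)+x=1$ is invariant under the Frobenius $x\mapsto x^2$; by the $2$-to-$1$ property this set has size $0$ or $2$, and a Frobenius-invariant set of size $2$ would force either a length-$2$ orbit, i.e. an element of $\gf(4)\setminus\gf(2)$ (impossible in $\gf(q)$ when $m$ is odd), or two fixed points in $\gf(2)=\{0,1\}$ (neither of which is a solution). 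Hence $f(x)+x=1$ has no root, no codeword of weight $q-3$ exists, and $d(\C_f)=q-2$.

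Finally I would assemble the aggregate counts into the claimed enumerator; as a consistency check its terms sum to $q^3$. The same non-solvability of $f(x)+x=1$ shows that $\det\big[(f(x),x,1)^T\,|\,(0,1,1)^T\,|\,(1,0,1)^T\big]=f(x)+x+1\neq0$ for every $x$, so three columns of $G_f$ are independent and $\dim\C_f=3$; thus $\C_f$ is $[q+1,3,q-2]$ with Singleton defect $1$, i.e. AMDS and not MDS. Because the $q+1$ columns of $G_f$ all have last coordinate $1$ and are pairwise distinct, no two are proportional, so $d(\C_f^\perp)\geq3$; were $d(\C_f^\perp)=4$ the dual would be MDS and hence so would $\C_f$, a contradiction, so $d(\C_f^\perp)=3$ and $\C_f^\perp$ is AMDS as well. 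Therefore $\C_f$ is NMDS, completing the proof.
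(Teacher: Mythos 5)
First, note that the paper itself does not prove this theorem: it states ``The following result was proved in \cite{WH21}'' and simply quotes it, so there is no internal proof to compare you against. The closest internal ingredient is Lemma \ref{lem-Jan2121}, whose proof (Frobenius invariance of the solution set of $f(x)+x+1=0$, the $2$-to-$1$ property of $f_1$, and $m$ odd) is essentially identical to your inline argument ruling out weight $q-3$. Your framework is correct: the weight formula $\wt=(q-1-N)+[u\neq w]+[w\neq 1]$ after normalization, the disposal of $a=0$, the determination of $N\in\{0,1,2\}$ via Lemma \ref{lem-opoly2to1}, the dimension argument via the determinant $f(x)+x+1\neq 0$, and the deduction of the NMDS property from $d=q-2$ together with the pairwise-nonproportional columns are all sound; I have checked that your plan, when executed, does reproduce the stated enumerator.

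However, there is a genuine unfinished step, and it is exactly the one you yourself flag as ``the genuine obstacle'': you never carry out the bookkeeping that separates the weights $q-2,q-1,q,q+1$ in the subcase $u\neq 0$, $w\neq 0$. Ruling out $u=w=1$ only shows $A_{q-3}=0$; it yields neither $A_{q-2}=(q-1)(q-2)$ nor even $A_{q-2}>0$, so as written you have only $d\geq q-2$ (a priori $d$ could be $q-1$), and none of the claimed coefficients is derived. Weight $q-2$ arises exactly when $w\in\{u,1\}$ and $N=2$, so what is missing is the count, over $u\notin\{0,1\}$, of how often the two coincidence events $u\in\image(f_u)$ and $1\in\image(f_u)$ occur. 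The fix is a double count over $x$ rather than $u$: for $x\neq 0,1$, $u=f(x)/(x+1)$ is the unique $u$ with $f_u(x)=u$, and $u\notin\{0,1\}$ (excluding $u=1$ uses your lemma again); the $2$-to-$1$ property makes $x\mapsto f(x)/(x+1)$ exactly two-to-one onto $\{u:u\in\image(f_u)\}\setminus\{0,1\}$, so that set has size $(q-2)/2$, and symmetrically $x\mapsto (f(x)+1)/x$ shows $\{u:1\in\image(f_u)\}\setminus\{0,1\}$ has size $(q-2)/2$. Hence $A_{q-2}=(q-1)\bigl((q-2)/2+(q-2)/2\bigr)=(q-1)(q-2)$, and the remaining coefficients follow by complementary counting within each $u$ (treating $u=1$ separately, since there $1\notin\image(f_1)$ and all $q/2-1$ image values give weight $q-1$), matching the stated $A(z)$ and summing to $q^3$. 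Until this counting is done, the enumerator---and with it the exact value $d=q-2$---is asserted rather than proved.
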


This class of NMDS codes are very important to us, as they demonstrate that some nontrivial NMDS codes $\C$ indeed
have minimum linear locality $d(\C^\perp)$ rather than $d(\C^\perp)-1$. We will need the following lemma later.

\begin{lemma}\label{lem-Jan2121}
Let $m \geq 3$ be odd and let $f(x)$ be an oval polynomial over $\gf(q)$ with coefficients in $\gf(2)$.
Then $f(x)+x+1=0$ does not have a solution $x \in \gf(q)$.
\end{lemma}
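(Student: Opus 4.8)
The plan is to reformulate the statement in terms of the auxiliary function $f_1(x) := f(x) + x$ and to exploit two structural facts: that $f_1$ is $2$-to-$1$, and that $f_1$ commutes with the Frobenius map because $f$ has coefficients in $\gf(2)$. Observe first that in characteristic $2$ the equation $f(x)+x+1=0$ is equivalent to $f_1(x) = 1$. Taking $u=1 \in \gf(q)^*$ in Lemma \ref{lem-opoly2to1}, the oval polynomial $f$ (which satisfies $f(0)=0$) yields that $f_1$ is $2$-to-$1$ on $\gf(q)$; hence for every $c \in \gf(q)$ the fibre $f_1^{-1}(c)$ has size $0$ or $2$. I would also record the boundary values $f_1(0) = f(0) = 0$ and $f_1(1) = f(1)+1 = 0$, so $1$ is never a value of $f_1$ on the subfield $\gf(2) = \{0,1\}$.

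Next I would establish Frobenius-compatibility. Since all coefficients of $f$ lie in $\gf(2)$, squaring gives $f(x)^2 = f(x^2)$, and therefore $f_1(x)^2 = f(x)^2 + x^2 = f(x^2) + x^2 = f_1(x^2)$. Consequently the solution set $S := f_1^{-1}(1) = \{x \in \gf(q) : f(x)+x+1 = 0\}$ is stable under the map $x \mapsto x^2$: if $f_1(x_0)=1$ then $f_1(x_0^2) = f_1(x_0)^2 = 1$.

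The heart of the argument is then a short orbit analysis, which I expect to be the only place needing care. Suppose for contradiction that $S \neq \emptyset$; by the $2$-to-$1$ property $|S| = 2$, say $S = \{x_0, x_1\}$. The Frobenius $x \mapsto x^2$ is an automorphism of $\gf(q)$ that maps $S$ into $S$, hence restricts to a permutation of the two-element set $S$, which is either the identity or the transposition. If it is the identity, then $x_0^2 = x_0$ and $x_1^2 = x_1$, forcing $x_0, x_1 \in \gf(2)$ and hence $S = \{0,1\}$; but $f_1(0)=f_1(1)=0 \neq 1$, a contradiction. If it is the transposition, then $x_0^2 = x_1$ and $x_1^2 = x_0$, so $x_0^4 = x_0$, i.e. $x_0 \in \gf(4) \cap \gf(q)$. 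Here the hypothesis that $m$ is odd is essential: since $\gcd(2,m)=1$ we have $\gf(4)\cap\gf(2^m) = \gf(2)$, so again $x_0 \in \gf(2)$ and $x_1 = x_0^2 = x_0$, contradicting $|S| = 2$. Either way $S = \emptyset$, which is exactly the claim. The main obstacle is correctly pinning down the two possible Frobenius actions on $S$ and invoking the oddness of $m$ to eliminate the $\gf(4)$ case; everything else is a routine reduction.
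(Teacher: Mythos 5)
Your proof is correct and follows essentially the same route as the paper's: both exploit that $f$ having $\gf(2)$-coefficients makes the solution set of $f(x)+x+1=0$ stable under Frobenius, that Lemma \ref{lem-opoly2to1} (with $u=1$) bounds this set by two elements, and that oddness of $m$ rules out the remaining configuration. The only cosmetic difference is how the last step is phrased --- the paper argues via $a=a^4$ forcing $a^3=1$ and $\gcd(3,q-1)=1$, while you argue via $x_0^4=x_0$ forcing $x_0\in\gf(4)\cap\gf(2^m)=\gf(2)$ --- but these two facts are equivalent.
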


\begin{proof}
By the definition of oval polynomials, $0$ and $1$ are not not solutions of the equation $f(x)+x+1=0$.
Suppose that $f(a)+a+1=0$ for some $a \in \gf(q)\setminus \{0,1\}$. Since $f(x)$ has coefficients
in $\gf(2)$, we have  $f(a^2)+a^2+1=0$ and $f(a^4)+a^4+1=0$. It then follows from  Lemma \ref{lem-opoly2to1}
that the set $\{a, a^2, a^4\}$ has cardinality at most $2$.
If $a=a^2$, then $a \in \{0,1\}$, which contradicts to the assumption that  $a \in \gf(q)\setminus \{0,1\}$.
If $a=a^4$, then $a=0$ or $a^3=1$. If $a^3=1$, we have $a=1$ as $\gcd(3, q-1)=1$.
Hence,  $a=a^4$ implies that $a \in \{0,1\}$,
which contradicts to the assumption that  $a \in \gf(q)\setminus \{0,1\}$.
If $a^2=a^4$, then $a \in \{0,1\}$, which contradicts to the assumption that  $a \in \gf(q)\setminus \{0,1\}$.
This completes the proof.
\end{proof}

\begin{theorem}\label{thm-bestexamthm}
Let $m \geq 3$ be odd and let $f(x)$ be an oval polynomial over $\gf(q)$ with coefficients in $\gf(2)$, and let $\C_f$ be
the code in Theorem \ref{thm-nmdscodej231}. Then $\C_f^\perp$ has minimum linear locality $d(\C)-1$ and is a
$d$-optimal and $k$-optimal $(q+1, q-2, 3, q; q-3)$-LLRC, and $\C_f$ has minimum linear locality $d(\C_f^\perp)$
and is an almost $d$-optimal and $k$-optimal $(q+1,3,q-2,q; 3)$-LLRC.
\end{theorem}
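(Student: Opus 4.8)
The plan is to treat $\C_f$ and $\C_f^\perp$ together through the NMDS machinery already in place: $\C_f$ is NMDS, hence so is $\C_f^\perp$, and by Theorem~\ref{thm-myJ2121} the minimum linear locality of an NMDS code is forced into one of two consecutive values, while by Theorem~\ref{thm-nmdslocality0} each value dictates a definite optimality verdict. So the whole statement reduces to deciding, for each code, which of the two values occurs, and the optimality claims then come for free. Geometrically I would read the weight-$3$ codewords of $\C_f^\perp$ as the minimal linear dependencies among the columns of $G_f$, i.e.\ the collinear triples among the $q+1$ points $P_x=(f(x),x,1)^{T}$ (for $x\in\gf(q)^*$), $Q=(0,1,1)^{T}$ and $R=(1,0,1)^{T}$ of $\PG(2,\gf(q))$; every coverage question becomes a question about which coordinates these triples hit.

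For $\C_f^\perp$ I would apply Theorem~\ref{thm-21jan191} with $\C=\C_f$ and $d^\perp=d(\C_f^\perp)=3$: it suffices that $\bigcap_{S\in\cB_{3}(\C_f^\perp)}S=\emptyset$ for $\C_f^\perp$ to attain the smaller value $d(\C_f)-1=q-3$. This is the easy direction. Since $d(\C_f^\perp)=3$ there are (many) collinear triples, and no triple contains both $Q$ and $R$ (see below), so a triple through $Q$ and a triple through $R$ already show that neither $Q$ nor $R$ is common to all triples; because each point of the configuration lies on only a few triples while there are $q-2$ of them, no $P_x$ is common to all either, giving empty intersection. (Equivalently, one can check from the coordinate description that $\C_f$ is spanned by its weight-$(q-2)$ codewords and invoke Theorem~\ref{thm-newJ1821}.) With locality $q-3=k-1$ and $\C_f^\perp$ an NMDS $[q+1,q-2,3]$ code, the first part of Theorem~\ref{thm-nmdslocality0} delivers $d$- and $k$-optimality simultaneously.

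The substance is the claim that $\C_f$ has the \emph{larger} locality $d(\C_f^\perp)=3$ rather than $2$. By Theorem~\ref{thm-myJ2121} the locality is $2$ or $3$, and by Corollary~\ref{cor-newn0} it equals $2$ precisely when $\bigcup_{S\in\cB_{3}(\C_f^\perp)}S=[q+1]$; so I must exhibit a single coordinate lying on no weight-$3$ codeword of $\C_f^\perp$, i.e.\ a column of $G_f$ on no secant through two other columns. I would take the index-$0$ column $(1,1,1)^{T}=P_1$ and eliminate every collinear triple through it: three of the $P_x$ are never collinear because they form an arc (the oval/hyperoval property of $f$); the $3\times3$ determinant conditions for $\{P_1,P_y,Q\}$ and for $\{P_1,P_y,R\}$ both collapse to $y=1$ once $f(1)=1$ and characteristic two are used, forcing a degenerate triple; and the mixed family $\{P_x,Q,R\}$ is collinear exactly when $f(x)+x+1=0$, which by Lemma~\ref{lem-Jan2121} has no root in $\gf(q)$, so in particular $P_1$ does not lie on the line $QR$. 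Hence index $0$ is uncovered, the locality is $3$, and the second part of Theorem~\ref{thm-nmdslocality0} yields the almost-$d$-optimal and $k$-optimal $(q+1,3,q-2,q;3)$ conclusion. The hard part will be the collinearity bookkeeping: reducing each determinant over $\gf(q)$ in characteristic two to the right scalar condition and invoking the arc property with the correct normalization of the points, so that Lemma~\ref{lem-Jan2121} is exactly what closes the one remaining case.
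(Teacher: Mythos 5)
Your proposal is correct, but it proves the two coverage facts by a genuinely different route than the paper. Both arguments share the same outer frame --- the NMDS dichotomy of Theorem~\ref{thm-myJ2121}, the optimality transfer of Theorem~\ref{thm-nmdslocality0}, and the crucial use of Lemma~\ref{lem-Jan2121} --- but the paper works on the \emph{primal} side: it explicitly characterizes all $(q-1)(q-2)$ minimum weight codewords of $\C_f$ (the two families $\bv_3+\bv_2+b\bv_1$ and $\bv_3+a\bv_2+\bv_1$, with Lemma~\ref{lem-opoly2to1} pinning down the supports as $[q+1]\setminus\{i,j,q-1\}$ and $[q+1]\setminus\{i,j,q\}$, matched against the weight enumerator of Theorem~\ref{thm-nmdscodej231}), then reads off part one from Corollary~\ref{cor-newn0} (the supports cover $[q+1]$) and part two from the observation that coordinate $0$ is nonzero in every minimum weight codeword, transferred to the dual by Lemma~\ref{lem-121FW}. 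You instead work on the \emph{dual} side, treating the weight-$3$ codewords of $\C_f^\perp$ as secant lines: your four-case determinant analysis (arc property for three $P_x$'s, the $y=1$ degeneracies for $\{P_1,P_y,Q\}$ and $\{P_1,P_y,R\}$, and Lemma~\ref{lem-Jan2121} for $\{P_1,Q,R\}$) is a correct and complete proof that coordinate $0$ lies on no secant --- I checked the determinants, which equal $y+1$, $f(y)+1$ and $f(x)+x+1$ in characteristic two --- and this gives part two without ever characterizing the minimum weight codewords, which is the bulk of the paper's proof. For part one you invoke Theorem~\ref{thm-21jan191} (which the paper's proof does not use) via an empty-intersection argument; this also works, but note that your phrase ``each point lies on only a few triples'' needs to be firmed up: since two distinct secants share at most one point, the triples through any fixed point are pairwise disjoint elsewhere, so there are at most $q/2$ of them, while Lemma~\ref{lem-121FW} and the weight enumerator give $q-2 > q/2$ distinct triple supports in total --- so no point is common to all. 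The trade-off: the paper's heavier computation yields the full list of minimum weight codewords (useful, e.g., for the availability questions raised in its conclusion, and necessary since \cite{WH21} did not characterize them), whereas your route is shorter and more geometric but still leans on the weight enumerator and Lemma~\ref{lem-121FW} for the count of secants.
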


\begin{proof}
Since all the minimum weight codewords in $\C_f$ were not characterized in \cite{WH21}, we have to do this job here.
Let $\bv_1$, $\bv_2$ and $\bv_3$ denote the first, second and third rows in the generator matrix $G_f$ above.

We first first consider all the codewords $\bv_3+\bv_2+b\bv_1$, where $b \in \gf(q)^*\setminus \{1\}$. By definition,
$$
\bv_3+\bv_2+b\bv_1=
(b, 1+\alpha^1+bf(\alpha^1), \ldots,  1+\alpha^{q-2}+bf(\alpha^{q-2}), 0, 1+b).
$$
For any $i$ with $1 \leq i \leq q-2$, put $b=(1+\alpha^i)/f(\alpha^i)$. It then follows from Lemma \ref{lem-Jan2121}
that $b \neq 1$. Clearly, $b \neq 0$, as $\alpha$ is a generator of $\gf(q)$. It then follows from Lemma  \ref{lem-opoly2to1}
that there is a unique $j \in \{1, 2, \ldots, q-2\} \setminus \{i\}$ such that $b=(1+\alpha^j)/f(\alpha^j)$. Consequently,
$$
\wt(\bv_3+\bv_2+b\bv_1)=q+1-3=q-2.
$$
The support of this codeword is
$$
\support(\bv_3+\bv_2+b\bv_1)=[q+1]\setminus \{i, j, q-1\}.
$$
The total number of such choices of $b$ is $(q-2)/2$. Since $\bv_3$ is the all-one codeword, we have already  characterized
$(q-1)(q-2)/2$ minimum weight codewords of this form in $\C_f$.

We then consider  all the codewords $\bv_3+a\bv_2+\bv_1$, where $a \in \gf(q)^*\setminus \{1\}$. By definition,
$$
\bv_3+a\bv_2+\bv_1=
(a, 1+a\alpha^1+f(\alpha^1), \ldots,  1+a  \alpha^{q-2}+f(\alpha^{q-2}), 1+a, 0).
$$
For any $i$ with $1 \leq i \leq q-2$, put $a=(1+f(\alpha^i))/\alpha^i$.
It then follows from Lemma \ref{lem-Jan2121}
that $a \neq 1$. Clearly, $a \neq 0$, as $\alpha$ is a generator of $\gf(q)$ and $f(1)=1$. It then follows from Lemma
\ref{lem-opoly2to1}
that there is a unique $j \in \{1, 2, \ldots, q-2\} \setminus \{i\}$ such that $b=(1+f(\alpha^j))/\alpha^j$. Consequently,
$$
\wt(\bv_3+a\bv_2+\bv_1)=q+1-3=q-2.
$$
The support of this codeword is
$$
\support(\bv_3+a\bv_2+\bv_1)=[q+1]\setminus \{i, j, q\}.
$$
The total number of such choices of $a$ is $(q-2)/2$. Since $\bv_3$ is the all-one codeword, we have already characterized
 $(q-1)(q-2)/2$ minimum weight codewords of this form in $\C_f$.

In the first case above, the coordinate in position $q-1$ in the codewords $\bv_3+\bv_2+b\bv_1$ is  zero and the coordinate
in position $q$ in these codewords is nozero.
In the second case above, the coordinate in position $q-1$ in the codewords $\bv_3+a\bv_2+\bv_1$ is nonzero and the coordinate in position $q$ in these codewords is zero. Therefore, the minimum weight codewords in the two forms do not
overlap. By Theorem \ref{thm-nmdscodej231}, we have characterized all the minimum weight codewords in $\C_f$. From
the discussions above, we have
$$
\bigcup_{S \in \cB_{q-2}(\C_f)} S= [q+1].
$$
It then follows from Corollary \ref{cor-newn0} that $\C_f^\perp$ has minimum locality $d(\C_f)-1$.

The discussions above showed that the coordinate in position $0$ in all the minimum weight codewords in $\C_f$ is nonzero.
It then follows from  Lemma \ref{lem-121FW} that  the coordinate in position $0$ in all the minimum weight codewords in $\C_f^\perp$ is zero. This means that
$$
0 \not\in \bigcup_{S \in \cB_{3}(\C_f^\perp)} S.
$$
Hence, $\C_f$ does not have minimum linear locality $d(\C_f^\perp)-1$. It then follows from Theorem \ref{thm-myJ2121}
that $\C_f$ has minimum linear locality $d(\C_f^\perp)$. The remaining desired conclusions then follow from
Theorems    \ref{thm-nmdscodej231}  and \ref{thm-nmdslocality0}.
\end{proof}

The proof of Theorem \ref{thm-bestexamthm} shows that it could be hard to determine the minimum linear locality
of an NMDS code.

\subsubsection{NMDS codes with parameters $[q+2, 3, q-1]$ from oval polynomials}

Let $f$ be a polynomial over $\gf(q)$ with $f(0)=0$ and $f(1)=1$. Let $\alpha$ be a generator of $\gf(q)^*$.
Define
\begin{eqnarray}
\bar{G}_f=\left[
\begin{array}{lllllll}
f(0) &f(\alpha^0)  & f(\alpha^1) & \cdots & f(\alpha^{q-2}) & 0 & 1 \\
0 &\alpha^0  & \alpha^1 & \cdots & \alpha^{q-2} & 1 & 0 \\
1 &1               &   1           & \cdots & 1                    & 1 & 1
\end{array}
\right].
\end{eqnarray}
By definition, $\bar{G}_f$ is a $3$ by $q+2$ matrix over $\gf(q)$. Let $\bar{\C}_f$ denote the linear code over $\gf(q)$ with
generator matrix $\bar{G}_f$.

The following theorem was presented in \cite{WH21}, but its proof was omitted in \cite{WH21}.

\begin{theorem}\label{thm-nmdscodej2322}
Let $m \geq 3$ be odd and let $f(x)$ be an oval polynomial over $\gf(q)$ with coefficients in $\gf(2)$. Then $\bar{\C}_f$ is a $[q+2, 3, q-1]$ NMDS code
over $\gf(q)$ with weight enumerator
\begin{eqnarray*}
\bar{A}(z)=1 + (q-1)(q-2)z^{q-1} + \frac{(q-1)(q^2-3q+14)}{2} z^{q} + \\
3(q-1)(q-2) z^{q+1} + \frac{(q-1)(q^2-3q+4)}{2} z^{q+2}.
\end{eqnarray*}
\end{theorem}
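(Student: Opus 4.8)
The plan is to compute the full weight distribution of $\bar{\C}_f$ directly, since the claimed parameters and the NMDS property will all fall out of it. Write $\bv_1,\bv_2,\bv_3$ for the three rows of $\bar{G}_f$, so that a general codeword has the form $a\bv_1+b\bv_2+c\bv_3$ with $(a,b,c)\in\gf(q)^3$. Reading off the columns of $\bar{G}_f$, this codeword carries the coordinate $af(x)+bx+c$ in the position indexed by $x$ for each $x\in\gf(q)$ (the column $(f(0),0,1)^T=(0,0,1)^T$ being the case $x=0$), together with the two extra coordinates $b+c$ and $a+c$. Hence its weight is $\wt(a\bv_1+b\bv_2+c\bv_3)=\big(q-N(a,b,c)\big)+[b+c\neq 0]+[a+c\neq 0]$, where $N(a,b,c):=\#\{x\in\gf(q): af(x)+bx+c=0\}$ and $[\,\cdot\,]$ is the indicator. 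The three columns arising from $x=0$, $(0,1,1)^T$ and $(1,0,1)^T$ are easily seen to be linearly independent, so $\dim(\bar{\C}_f)=3$ and the map $(a,b,c)\mapsto a\bv_1+b\bv_2+c\bv_3$ is a bijection onto $\bar{\C}_f$; thus it suffices to count triples $(a,b,c)$ according to their weight.

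The key is to control $N(a,b,c)$ using the oval-polynomial structure. First I would dispose of the degenerate cases: when $a=0$ the equation is linear, so $N\in\{0,1,q\}$; when $a\neq 0$ but $b=0$ the equation reads $f(x)=c/a$ and, since $f$ is a permutation of $\gf(q)$, we get $N=1$. The substantive case is $a\neq 0,\ b\neq 0$, where setting $u=b/a$ turns the equation into $f_u(x):=f(x)+ux=c/a$. By Lemma \ref{lem-opoly2to1}, $f_u$ is $2$-to-$1$ on $\gf(q)$, so $N\in\{0,2\}$, with $N=2$ for exactly $q/2$ values of $c$ (those with $c/a\in\image(f_u)$) and $N=0$ for the remaining $q/2$. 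Combining this with the two indicators $[b+c\neq 0]$ (vanishing only at $c=b$) and $[a+c\neq 0]$ (vanishing only at $c=a$) produces all the possible weights.

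The main obstacle is precisely the behaviour at the two special values $c=a$ and $c=b$ in the case $a,b\neq 0$, because there the weight drops by the indicators and the resulting value depends on whether $N$ equals $0$ or $2$ at these points. Here I would argue as follows. At $c=a$ one has $c/a=1$, so $N=2$ exactly when $1\in\image(f_u)$, i.e.\ when $f(x)+ux=1$ is solvable; writing $u=(f(x)+1)/x$ and invoking the $2$-to-$1$ property shows that, as $x$ runs over $\gf(q)^*$, the value $u=0$ occurs only at $x=1$ and (by Lemma \ref{lem-Jan2121}) $u=1$ never occurs, so there are exactly $(q-2)/2$ values $u\notin\{0,1\}$ with $1\in\image(f_u)$. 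A parallel argument at $c=b$, where $c/a=u$ and $N=2$ means $f(x)=u(x+1)$ is solvable, again yields $(q-2)/2$ admissible values of $u$, once more using Lemma \ref{lem-Jan2121} to rule out $u=1$. The same lemma shows that the potential weight-$(q-2)$ codewords, which would require $a=b=c$ together with $N=2$ at $c=a=b$, cannot occur, since there $N=0$; this pins the minimum weight at $q-1$. Summing these local counts over all admissible $u$ (each realised by $q-1$ pairs $(a,b)$) I expect to obtain $A_{q-1}=(q-1)(q-2)$, and the classes $A_q$, $A_{q+1}$, $A_{q+2}$ by the analogous bookkeeping over the remaining values of $c$.

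Finally I would assemble the counts from all cases and check that $A_0+A_{q-1}+A_q+A_{q+1}+A_{q+2}=q^3$, which simultaneously confirms $\dim(\bar{\C}_f)=3$ and the stated weight enumerator, and gives $d(\bar{\C}_f)=q-1$, so that $\bar{\C}_f$ is AMDS. To secure the NMDS claim I would verify $d(\bar{\C}_f^\perp)=3$: the $q+2$ columns of $\bar{G}_f$ represent pairwise distinct points of $\PG(2,\gf(q))$ (using $f(0)=0$ and $f(1)=1$), so no two are proportional and $d(\bar{\C}_f^\perp)\geq 3$, while any weight-$(q-1)$ codeword of $\bar{\C}_f$ has three zero coordinates, i.e.\ three collinear columns, giving a weight-$3$ dual codeword and hence $d(\bar{\C}_f^\perp)\leq 3$. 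Then $d(\bar{\C}_f)+d(\bar{\C}_f^\perp)=(q-1)+3=q+2=n$, so $\bar{\C}_f$ is NMDS by the criterion of \cite{DL95}, completing the proof. This argument closely parallels the treatment of the companion code $\C_f$ in the proof of Theorem \ref{thm-bestexamthm}.
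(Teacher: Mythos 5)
Your proof is correct, but there is nothing in the paper to compare it against: the authors state Theorem \ref{thm-nmdscodej2322} with the remark that it ``was presented in \cite{WH21}, but its proof was omitted in \cite{WH21},'' and they supply no proof of their own. Your argument therefore fills a gap left open by both the paper and its source, and it does so with precisely the toolkit the paper deploys on the companion code $\C_f$ in Theorem \ref{thm-bestexamthm}: the $2$-to-$1$ property of $f_u(x)=f(x)+ux$ (Lemma \ref{lem-opoly2to1}) and the unsolvability of $f(x)+x+1=0$ (Lemma \ref{lem-Jan2121}). I checked the substance. Your weight formula $\wt=(q-N(a,b,c))+[b+c\neq 0]+[a+c\neq 0]$ is right; the cases $a=0$ and $a\neq 0$, $b=0$ behave as you say; and in the main case your two counts of admissible slopes are correct, since the maps $x\mapsto (f(x)+1)/x$ on $\gf(q)^*\setminus\{1\}$ and $x\mapsto f(x)/(x+1)$ on $\gf(q)\setminus\{0,1\}$ each avoid the values $0$ and $1$ (the latter by Lemma \ref{lem-Jan2121}) and are $2$-to-$1$ onto their images, giving $(q-2)/2$ values of $u$ with $1\in\image(f_u)$ and $(q-2)/2$ with $u\in\image(f_u)$, hence $A_{q-1}=(q-1)(q-2)$; the same lemma correctly rules out the would-be weight-$(q-2)$ words at $a=b=c$. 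The one part you leave implicit, the bookkeeping for $A_q$, $A_{q+1}$, $A_{q+2}$, does close: writing $\epsilon_1(u)=[1\in\image(f_u)]$ and $\epsilon_2(u)=[u\in\image(f_u)]$, for each $u\notin\{0,1\}$ and each of the $q-1$ choices of $a$ the weights $q-1$, $q$, $q+1$, $q+2$ occur for exactly $\epsilon_1+\epsilon_2$, $q/2-\epsilon_1-\epsilon_2$, $2-\epsilon_1-\epsilon_2$, $q/2-2+\epsilon_1+\epsilon_2$ values of $c$ respectively, while $u=1$ contributes $q/2+1$ words of weight $q$ and $q/2-1$ of weight $q+2$ per choice of $a$; summing with $\sum_{u\neq 0,1}(\epsilon_1(u)+\epsilon_2(u))=q-2$ and adding the $a=0$ and $b=0$ contributions reproduces exactly the stated enumerator, with total count $q^3$. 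Your endgame is also sound: pairwise non-proportional columns give $d(\bar{\C}_f^\perp)\geq 3$, a weight-$(q-1)$ codeword yields three linearly dependent columns and hence $d(\bar{\C}_f^\perp)=3$, and $d+d^\perp=n$ is the NMDS criterion of \cite{DL95}. What your route buys over the paper's bare citation is self-containedness and an independent verification of the enumerator claimed in \cite{WH21}; its cost is only the case bookkeeping, which is tedious but entirely routine.
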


\begin{theorem}\label{thm-bestexamthm2}
Let $m \geq 3$ be odd and let $f(x)$ be an oval polynomial over $\gf(q)$ with coefficients in $\gf(2)$, and let $\bar{\C}_f$ be
the code in Theorem \ref{thm-nmdscodej2322}. Then $\bar{\C}_f^\perp$ has minimum linear locality $d(\C_f)-1$ and is a
$d$-optimal and $k$-optimal $(q+2, q-1, 3, q; q-2)$-LLRC, and $\bar{\C}_f$ has minimum linear locality $d(\bar{\C}_f^\perp)$
and is an almost $d$-optimal and $k$-optimal $(q+2,3,q-1,q; 3)$-LLRC.
\end{theorem}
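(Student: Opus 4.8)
The plan is to follow the same strategy as in the proof of Theorem~\ref{thm-bestexamthm}, adapted to the generator matrix $\bar{G}_f$ whose extra column $(f(0),0,1)^T=(0,0,1)^T$ sits in coordinate position $0$. Writing $\bv_1,\bv_2,\bv_3$ for the three rows of $\bar{G}_f$, I would first characterise explicitly all minimum weight codewords of $\bar{\C}_f$, which have weight $q-1$ and hence exactly three zero coordinates among the $q+2$ positions. The key observation is that for any codeword $c_1\bv_1+c_2\bv_2+c_3\bv_3$ the coordinate in position $0$ equals $c_3$, while the coordinates in positions $q$ and $q+1$ equal $c_2+c_3$ and $c_1+c_3$, respectively.

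First I would treat the two families $\bv_3+\bv_2+b\bv_1$ with $b\in\gf(q)^*\setminus\{1\}$ and $\bv_3+a\bv_2+\bv_1$ with $a\in\gf(q)^*\setminus\{1\}$. In characteristic $2$ the first family forces the coordinate in position $q$ to vanish while that in position $q+1$ equals $b+1\neq0$, and symmetrically the second family forces position $q+1$ to vanish while position $q$ equals $a+1\neq0$. For the remaining coordinates in positions $1,\ldots,q-1$, choosing $b=(\alpha^\ell+1)/f(\alpha^\ell)$ makes position $\ell$ vanish, and rewriting the vanishing condition as $f_u(x)=u$ with $u=1/b$ lets me invoke the $2$-to-$1$ property of $f_u$ from Lemma~\ref{lem-opoly2to1} to produce a unique second vanishing position; Lemma~\ref{lem-Jan2121} guarantees $b\neq1$ and $a\neq1$, so the constructions are consistent. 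This yields supports $[q+2]\setminus\{i,j,q\}$ and $[q+2]\setminus\{i,j,q+1\}$, and counting the scalar multiples gives $(q-1)(q-2)/2$ codewords in each family, totalling $(q-1)(q-2)$, which matches the coefficient of $z^{q-1}$ in $\bar{A}(z)$ from Theorem~\ref{thm-nmdscodej2322}; hence these are all the minimum weight codewords.

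Since both families are nonzero in position $0$, in position $q$ (via the second family) and in position $q+1$ (via the first family), and cover every position $1,\ldots,q-1$ as $i,j$ range, the union of their supports is all of $[q+2]$, so Corollary~\ref{cor-newn0} gives that $\bar{\C}_f^\perp$ has minimum linear locality $d(\bar{\C}_f)-1=q-2$. Conversely, because the coordinate in position $0$ equals $c_3\neq0$ for every minimum weight codeword, position $0$ never lies in the support of any minimum weight codeword of $\bar{\C}_f$; Lemma~\ref{lem-121FW} then shows that the support of each minimum weight codeword of $\bar{\C}_f^\perp$ is exactly the three-element zero set of a minimum weight codeword of $\bar{\C}_f$, so $0\notin\bigcup_{S\in\cB_3(\bar{\C}_f^\perp)}S$. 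By Corollary~\ref{cor-newn0} this rules out locality $d(\bar{\C}_f^\perp)-1$, and Theorem~\ref{thm-myJ2121} then forces the minimum linear locality of $\bar{\C}_f$ to be $d(\bar{\C}_f^\perp)=3$. The optimality claims for both codes finally follow from Theorem~\ref{thm-nmdslocality0}.

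The main obstacle I expect is the exhaustive characterisation of the minimum weight codewords — in particular verifying that the two families together with their scalar multiples account for all $(q-1)(q-2)$ of them and do not overlap — since the entire argument ruling out locality $2$ rests on the claim that position $0$ is nonzero in every one of them. The reindexing caused by the extra coordinate at position $0$, compared with $\C_f$ in Theorem~\ref{thm-bestexamthm}, is the only genuinely new bookkeeping.
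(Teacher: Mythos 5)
Your proposal is correct and follows exactly the route the paper intends: the paper omits the proof of this theorem, stating only that it is "similar to that of Theorem \ref{thm-bestexamthm}" and that one must characterize all minimum weight codewords of $\bar{\C}_f$ and observe that position $0$ (and $1$) is nonzero in all of them, which is precisely what you carry out via the two families $\bv_3+\bv_2+b\bv_1$ and $\bv_3+a\bv_2+\bv_1$, the $2$-to-$1$ property from Lemma \ref{lem-opoly2to1}, the count against $\bar{A}(z)$, and the complementary-support argument from Lemma \ref{lem-121FW} combined with Corollary \ref{cor-newn0}, Theorem \ref{thm-myJ2121} and Theorem \ref{thm-nmdslocality0}. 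The only cosmetic caveat is the index shift (position $\ell$ corresponds to $\alpha^{\ell-1}$, so $b=(\alpha^{\ell-1}+1)/f(\alpha^{\ell-1})$), which you already flag as bookkeeping.
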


\begin{proof}
The proof is similar to that of Theorem \ref{thm-bestexamthm} and is omitted here. However, we inform the reader that
the coordinates in positions $0$ and $1$ in all the minimum weight codewords in $\bar{\C}_f$ are always nonzero. This means
that
$$
\{0,1\} \bigcap  \bigcup_{S \in \cB_{3}(\bar{\C}_f^\perp)} S = \emptyset.
$$
To prove this theorem, one has to characterize all the minimum weight codewords in $\bar{\C}_f$.
\end{proof}

Notice that Theorem \ref{thm-bestexamthm} documents the second class of nontrivial linear codes $\C$ with minimum linear locality more than $d(\C^\perp)-1$.

\subsection{The minimum linear locality of other NMDS codes}

The following result says that other infinite families of NMDS codes do exist.

\begin{theorem}[\cite{TV91}]
Algebraic geometric $[n, k, n-k]$ NMDS codes over $\gf(q)$, $q=p^m$, do exist for every $n$ with
\begin{eqnarray*}
n \leq
\left\{
\begin{array}{ll}
q + \lceil 2 \sqrt{q} \rceil & \mbox{ if $p$ divides $\lceil 2 \sqrt{q} \rceil$ and $m$ is odd,} \\
q + \lceil 2 \sqrt{q} \rceil +1 & \mbox{ otherwise,}
\end{array}
\right.
\end{eqnarray*}
and arbitrary $k \in \{2,3, \ldots, n-2\}$.
\end{theorem}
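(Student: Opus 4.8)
The plan is to realize these codes as algebraic-geometric (Goppa) codes on curves of genus $1$, i.e.\ elliptic curves, exploiting the general fact that on a smooth curve of genus $g$ an evaluation code $C_L(D,G)$ and its dual each have Singleton defect at most $g$. Taking $g=1$ therefore forces every such code to be either MDS or AMDS, and the whole problem splits into two independent tasks: (i) producing an elliptic curve over $\gf(q)$ with enough rational points to reach the claimed lengths $n$, and (ii) ruling out the MDS case so that the defect is exactly $1$ for the code and its dual simultaneously, which by definition of near MDS is precisely what is required.

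First I would fix a smooth elliptic curve $E$ over $\gf(q)$, an ordered set of distinct rational points $P_1,\dots,P_n$, and a rational divisor $G$ of degree $k$ whose support avoids $D:=P_1+\cdots+P_n$ (choosing the support of $G$ at a place not among the $P_i$, possibly of degree $\geq 2$), and form $C_L(D,G)=\{(f(P_1),\dots,f(P_n)):f\in L(G)\}$. Since $g=1$ the canonical divisor has degree $0$, so Riemann--Roch gives $\dim L(G)=\deg G=k$ whenever $0<k<n$; hence $C_L(D,G)$ has dimension $k$ and minimum distance $d\ge n-k$, with defect at most $1$. Its dual is the residue code $C_\Omega(D,G)=C_L(D,\,D-G+(\eta))$ for a suitable differential $\eta$, again a genus-$1$ evaluation code, of degree $n-k$, dimension $n-k$, and defect at most $1$. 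The admissible range $2\le k\le n-2$ is exactly the condition that makes both $\dim C_L(D,G)=k\ge 2$ and $\dim C_\Omega(D,G)=n-k\ge 2$ while keeping $0<\deg G<n$, so that the Riemann--Roch dimension formulas apply on both sides.

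The key step is to arrange that $C_L(D,G)$ is \emph{not} MDS, equivalently that it carries a codeword of weight $n-k$. Such a codeword exists precisely when some $f\in L(G)\setminus\{0\}$ vanishes at exactly $k$ of the points $P_i$, i.e.\ when $G$ is linearly equivalent to a sum $P_{i_1}+\cdots+P_{i_k}$ of $k$ of the evaluation points. On an elliptic curve two effective divisors of the same degree are linearly equivalent iff their images under the group-law sum coincide, so I would choose the class of $G$ to match the group-law sum of some $k$-subset of $\{P_1,\dots,P_n\}$; with $n\ge k+2$ points there is ample freedom to do so, and every degree-$k$ rational class is represented by a sum of $k$ rational points via the isomorphism $E(\gf(q))\cong\operatorname{Pic}^0(E)(\gf(q))$. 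Once $C_L(D,G)$ is AMDS, duality closes the argument: the dual of an MDS code is MDS, so a non-MDS genus-$1$ code has a non-MDS, hence AMDS, dual, giving $d(C_L(D,G))+d(C_\Omega(D,G))=n$ and therefore an NMDS code.

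I expect the main obstacle to be the \emph{length} bound rather than the defect argument. To know the largest $n$ reachable one must invoke the Waterhouse--Deuring classification of the possible Frobenius traces of elliptic curves over $\gf(q)$, which determines the maximal number $N_q(1)$ of rational points on a genus-$1$ curve: by Hasse--Weil $N_q(1)\le q+1+\lfloor 2\sqrt q\rfloor$, and the extremal value is attained unless the extremal trace $t=-\lfloor 2\sqrt q\rfloor$ fails to be realizable, which for $m$ odd happens exactly when $p\mid\lfloor 2\sqrt q\rfloor$. These two regimes are what produce the two cases of the stated bound on $n$, and carefully matching the realizability condition and the floor/ceiling expressions to the precise statement (including whether a rational point is consumed by the support of $G$) is the delicate bookkeeping part of the proof.
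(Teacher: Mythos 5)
You should first be aware that the paper contains no proof of this statement to compare against: it is imported verbatim, with the citation [TV91], as a known result. Judged on its own merits, the core of your construction is sound and is indeed the classical elliptic-code route: for genus $1$, Riemann--Roch gives $\dim C_L(D,G)=\deg G=k$ and Singleton defect at most $1$ for the code and for its dual (residue) code; $C_L(D,G)$ has a codeword of weight exactly $n-k$ precisely when the class of $G$ is that of a sum of $k$ evaluation points, which the group structure of $E(\gf(q))$ lets you arrange; and once the code is non-MDS its dual is non-MDS as well (duals of MDS codes are MDS), so both are AMDS and the code is NMDS.

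The genuine gap is the length bound, and it is not the ``delicate bookkeeping'' you defer to the end: your method provably cannot reach the bound as stated. An elliptic evaluation code has length at most $N_q(1)$, the maximal number of rational points on a genus-$1$ curve over $\gf(q)$, and $N_q(1)\le q+1+\lfloor 2\sqrt{q}\rfloor$ by Hasse--Weil. When $m$ is odd, $2\sqrt{q}$ is irrational, so $\lceil 2\sqrt{q}\rceil=\lfloor 2\sqrt{q}\rfloor+1$, and the ``otherwise'' branch of the statement asserts lengths up to $q+\lceil 2\sqrt{q}\rceil+1=q+\lfloor 2\sqrt{q}\rfloor+2>q+1+\lfloor 2\sqrt{q}\rfloor\ge N_q(1)$. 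Concretely, for $q=7$ the statement promises $[14,k,14-k]$ NMDS codes, while no elliptic curve over $\gf(7)$ has more than $N_7(1)=13$ points; for $q=128$ the promised length $152$ exceeds $N_{128}(1)=150$ by two. Waterhouse's theorem is a floor statement: $N_q(1)=q+\lfloor 2\sqrt{q}\rfloor$ if $m\ge 3$ is odd and $p\mid\lfloor 2\sqrt{q}\rfloor$, and $N_q(1)=q+1+\lfloor 2\sqrt{q}\rfloor$ otherwise (note also the requirement $m\ge 3$, which your criterion omits and which matters for $q=2,3$). No choice of curve or divisor converts these floors into the ceilings appearing in the statement. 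So what your argument actually proves is the floor version of the theorem --- NMDS codes of every length $n\le N_q(1)$ and every $k\in\{2,\ldots,n-2\}$ --- which is all the genus-$1$ method can give; to prove the statement as literally quoted one would have to either repair its formula or manufacture the one or two extra lengths from non-elliptic (higher-genus) constructions, neither of which your proposal addresses.
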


Many NMDS codes have been constructed (see, for example, \cite{AL05,AL08,DeBoer96,DL95,DodLan95,DodLan00,FaldumWillems97,Giulietti04,JK19,MMP02,TongDing,WH21,TV91}). It is valuable to check which of the NMDS codes $\C$ have minimum linear locality $d(\C^\perp)-1$, as they are $d$-optimal and $k$-optimal LLRCs.
It has been observed that the analysis of the minimum linear locality of a linear code is harder than the determination of the minimum distance of the dual code.

\section{Summary and concluding remarks}\label{sec-final}

The objectives of this paper are to develop some general theory for the minimum linear locality of linear codes and search for $d$-optimal
or $k$-optimal LLRCs in known families of linear codes. Below is a summary of the major general results on the minimum linear locality of nontrivial linear codes developed in this paper.
\begin{enumerate}
\item We proved that every nontrivial linear code has a minimum linear locality and showed how to find it
(see Theorem \ref{thm-generalLoc}).
\item We gave a necessary and sufficient condition for a nontrivial linear code $\C$ to have the minimum linear
locality $d(\C^\perp)-1$ (see Corollary \ref{cor-newn0}).
\item We determined the minimum linear locality for nontrivial linear codes $\C$ such that the minimum weight
codewords in $\C^\perp$ support a $1$-design
(see Corollary \ref{cor-newn}). This general result has settled the minimum linear locality of many families of linear codes,
as many families of linear codes support $1$-designs.
\item We determined the minimum linear locality for nontrivial linear codes $\C$ whose automorphism group is transitive
(see Corollary \ref{cor-newtrans}).  This general result has settled the minimum linear locality of many families of linear codes,
including the affine-invariant linear codes.
\item We determined the minimum linear locality of $(\overline{\C})^\perp$ under the condition that $d(\overline{\C})=d(\C)+1$
for each nontrivial linear code $\C$ (see Theorem \ref{thm-loca-ext-cycliccode}), and settled the minimum linear locality of $(\overline{\C})^\perp$ for each nontrivial binary linear code $\C$ (see Corollary \ref{cor-localityextedbinarycode}).
\item We proved that the minimum linear locality of an NMDS code $\C$ is either $d(\C^\perp)-1$ or  $d(\C^\perp)$ (see Theorem \ref{thm-myJ2121}), and further proved that $\C$ is either a $d$-optimal and $k$-optimal or an almost $d$-optimal and
$k$-optimal LLRC (see Theorem \ref{thm-nmdslocality0}).
\end{enumerate}
These general results have settled the minimum linear locality of many families of nontrivial linear codes. Furthermore, the minimum linear locality of these codes was not studied in the literature.  Hence,
we have reached our first objective.

After studying a number of families of known linear codes with the general theory developed, we have found many classes of  optimal LLRCs. These optimal LLRCs were not reported in the literature. Table \ref{tab-1} lists fourteen classes of $k$-optimal LLRCs. Table \ref{tab-2}
lists nineteen classes of LLRCs which are both $d$-optimal and $k$-optimal and have different parameters.  In both tables,
  \begin{itemize}
    \item $n_h=(q^m-1)/(q-1)$,
    \item $\surd$ means that the code is optimal with the Singleton-like or CM bound,
    \item $A$ means that the code is almost optimal with respect to the Singleton-like bound, and
    \item $?$ means that the optimality is open.
  \end{itemize}
These classes of optimal LLRCs demonstrate that we have reached our second objective.

We remark that the locality of locally recoverable codes in the literature actually is the linear locality, but may not be the
minimum linear locality. This paper has treated the minimum linear locality of nontrivial linear codes.
Availability is another interesting parameter of LLRCs. All the LLRCs presented in this paper naturally have availability $1$,
and some of them may have availability $2$ or more. It is extremely hard to
develop general theory for LLRCs with availability more than $1$, although such LLRCs are available in the literature. To study the maximum availability of an LLRC code $\C$ with respect to the minimum linear locality $d(\C^\perp)-1$,
one has to characterize the minimum weight codewords in $\C^\perp$. This is a very hard problem in general. The reader
is cordially invited to investigate the maximum availability of these optimal LLRCs documented in this paper.
Finally, we point out that all the LLRCs presented in this paper are from known linear codes in the literature and our objective
is to study their minimum linear locality and optimality with respect to the Singleton-like and CM bounds.

\begin{table}
  \centering
  \caption{Some $k$-optimal LLRCs  from known codes}\label{tab-1}
  \begin{tabular}{|c|c|c|c|c|c|c|}
    \hline
    $\C$ & $n$ & $k$ & $d$ & $r$ & $d_{opt}$ & $k_{opt}$ \\ \hline
    $\cH_{(q,m)}$ & $n_h$ & $n_h-m$ & $3$ & $q^{m-1}-1$ & $?$ & $\surd$ \\ \hline
    $\Sim_{(q,m)}$ & $n_h$ & $m$ & $q^{m-1}$ & $2$ & $?$ & $\surd$ \\ \hline
    $(\cH_{(q,m)})_{\{t_1\}}$ & $n_h-1$ & $n_h-m-1$ & $3$ & $q^{m-1}-2$ & $?$ & $\surd$ \\ \hline
    $((\cH_{(q,m)})_{\{t_1\}})^\perp$ & $n_h-1$ & $m$ & $q^{m-1}-1$ & $2$ & $?$ & $\surd$ \\ \hline
        $(\Sim_{(q,m)})_{\{t_1\}}$ & $n_h-1$ & $m-1$ & $q^{m-1}$ & $1$ & $?$ & $\surd$ \\ \hline
        $((\Sim_{(q,m)})_{\{t_1\}})^\perp$ & $n_h-1$ & $n_h-m$ & $2$ & $q^{m-1}-1$ & $?$ & $\surd$ \\ \hline
    $\R_q(1,m)$ & $q^m$ & $m+1$& $(q-1)q^{m-1}$ & $2$ & $?$ & $\surd$ \\ \hline
    $\R_q(1,m)^\perp$ & $q^m$ & $q^m-m-1$& $3$ & $q^{m}-q^{m-1}-1$ & $?$ & $\surd$ \\ \hline
  $\C_f$ (Thm. \ref{thm-bestexamthm}) & $2^m+1$ & $3$ & $2^m-2$ & $3$ & $A$ & $\surd$ \\   \hline
  $\bar{\C}_f$ (Thm. \ref{thm-bestexamthm2}) & $2^m+2$ & $3$ & $2^m-1$ & $3$ & $A$ & $\surd$ \\   \hline
      $\C_o$ & $q^2+1$ & $4$  & $q^2-q$ & $3$ & $?$ & $\surd$ \\ \hline
      $(\C_o)_{\{t_1\}}$ & $q^2$ & $3$  & $q^2-q$ & $2$ & $?$ & $\surd$ \\ \hline
      $(\C_o)^{\{t_1\}}$ & $q^2$ & $4$  & $q^2-q-1$ & $3$ & $?$ & $\surd$ \\ \hline
     $\C(\cA)$ (Thm. \ref{thm-localitymaxarc}) & $n$ & $3$ & $n-h$ & $2$ & $?$ & $\surd$    \\
    \hline
  \end{tabular}
\end{table}

\begin{table}
  \centering\small
  \caption{Both $d$-optimal and $k$-optimal LLRCs from known codes}\label{tab-2}
  \begin{tabular}{|c|c|c|c|c|c|c|}
    \hline
    $\C$ & $n$ & $k$ & $d$ & $r$ & $d_{opt}$ & $k_{opt}$ \\ \hline
    $\cH_{(q,3)}$ & $q^2+q+1$ & $q^2+q-2$ & $3$ & $q^2-1$ & $\surd$ & $\surd$ \\ \hline
    $(\cH_{(q,3)})_{\{t_1\}}$ & $q^2+q$ & $q^2+q-3$ & $3$ & $q^2-2$ & $\surd$ & $\surd$ \\ \hline
    $((\Sim_{(q,3)})_{\{t_1\}})^\perp$ & $q^2+q$ & $q^2+q-2$ & $2$ & $q^2-1$ & $\surd$ & $\surd$ \\ \hline
    $\C_o^\perp$ & $q^2+1$ & $q^2-3$ & $4$ & $q^2-q-1$ & $\surd$ & $\surd$ \\ \hline
    $(\C_o^\perp)_{\{t_1\}}$ & $q^2$ & $q^2-4$ & $4$ & $q^2-q-2$ & $\surd$ & $\surd$ \\ \hline
    $(\C_o^\perp)^{\{t_1\}}$ & $q^2$ & $q^2-3$ & $3$ & $q^2-q-1$ & $\surd$ & $\surd$ \\ \hline
      $\C(\cA)^\perp$ (Thm. \ref{thm-localitymaxarc}) & $n$ & $n-3$ & $3$ & $n-h-1$ & $\surd$ & $\surd$ \\ \hline
    $\C_{(3^s,3^s+1,3,1)} $ & $3^s+1$ & $3^s-3$ & $4$ & $3^s-4$ & $\surd$ & $\surd$ \\ \hline
   $\C_{(3^s,3^s+1,3,1)}^\perp$ & $3^s+1$ & $4$ & $3^s-3$ & $3$ & $\surd$ & $\surd$ \\ \hline
    $\C_{(2^s,2^s+1,3,1)}$ & $2^s+1$ & $2^s-3$ & $4$ & $2^s-4$ & $\surd$ & $\surd$ \\ \hline
    $\C_{(2^s,2^s+1,3,1)}^\perp$ & $2^s+1$ & $4$ & $2^s-3$ & $3$ & $\surd$ & $\surd$ \\  \hline
    $\C_{(2^s,2^s+1,4,1)}$ & $2^s+1$ & $2^s-5$ & $6$ & $2^s-6$ & $\surd$ & $\surd$ \\ \hline
    $\C_{(2^s,2^s+1,4,1)}^\perp$ & $2^s+1$ & $6$ & $2^s-5$ & $5$ & $\surd$ & $\surd$ \\  \hline
  $\C_f^\perp$ (Thm. \ref{thm-bestexamthm}) & $2^m+1$ & $2^m-2$ & $3$ & $2^m-3$ & $\surd$ & $\surd$ \\ \hline
  $\bar{\C}_f^\perp$ (Thm. \ref{thm-bestexamthm2}) & $2^m+2$ & $2^m-1$ & $3$ & $2^m-2$ & $\surd$ & $\surd$ \\  \hline
  $\bar{\C}_f^\perp$ (Thm.  \ref{thm-21021-211}) & $2^m+3$ & $2^m$ & $3$ & $2^m-1$ & $\surd$ & $\surd$ \\   \hline
   $\bar{\C}_f$ (Thm.  \ref{thm-2345}) & $2^m+3$ & $3$ & $2^m$ & $2$ & $\surd$ & $\surd$ \\  \hline
    $(\overline{\C_{(2^s,2^s+1,3,1)}})^\perp$ (Thm \ref{thm-2SQScodeext}) & $2^s+2$ & $5$ & $2^s-3$ & $4$ & $\surd$ & $\surd$ \\  \hline
    $(\overline{\C_{(3^s,3^s+1,3,1)}})^\perp$ (Thm \ref{thm-SQScodeext}) & $3^s+2$ & $5$ & $3^s-3$ & $4$ & $\surd$ & $\surd$ \\
    \hline
  \end{tabular}
\end{table}

\end{document}